\tikzstyle{abstract}=[rectangle, draw=black, rounded corners, fill=blue!40, drop shadow,
\tikzstyle{comment}=[rectangle, draw=black, rounded corners, fill=green, drop shadow,
\tikzstyle{myarrow}=[->, >=open triangle 90, thick]
\tikzstyle{line}=[-, thick]
\theoremstyle{plain}
\newtheorem{theorem}{Theorem}[section]
\newtheorem{lemma}[theorem]{Lemma}
\newtheorem{notation}[theorem]{Notation}
\newtheorem{proposition}[theorem]{Proposition}
\newtheorem{definition}[theorem]{Definition}
\newtheorem{assumption}[theorem]{Assumption}
\newtheorem{assumptionStanding}[theorem]{Hypothesis}
\theoremstyle{remark}
\newtheorem{remark}[theorem]{Remark}
\numberwithin{equation}{section}
\newcommand{\rsto}{]\!\kern-1.8pt ]}
\newcommand{\lsto}{[\!\kern-1.7pt [}
\numberwithin{equation}{section}
\newcommand{\FF}{\mathbb{F}}
\newcommand{\GG}{\mathbb{G}}
\newcommand{\RR}{\mathbb{R}}
\newcommand{\PP}{\mathbb{P}}
\newcommand{\EE}{\mathbb{E}}
\newcommand{\cG}{\mathcal{G}}
\newcommand{\cH}{\mathcal{H}}
\newcommand{\cV}{\mathcal{V}}
\newcommand{\rck}[1]{r^{c,#1}} 
\newcommand{\tI}{\mathcal{I}}
\newcounter{descriptcount}
\newlist{enumdescript}{description}{1}
\setlist[enumdescript,1]{%
  before={\setcounter{descriptcount}{0}%
          \renewcommand*\thedescriptcount{\arabic{descriptcount}}},
        font={\bfseries\stepcounter{descriptcount}Case \thedescriptcount~}
}
\renewcommand{\cite}{\citet}
\@date \else {\vskip3ex \centering\footnotesize\@date\par\vskip1ex}\fi
\else \@footnotetext{\@setdate}\fi}
\newcommand{\subjclassname@JEL}{JEL Classification}
\begin{document}

\title[Incompleteness and xVA]{
When defaults cannot be hedged: an actuarial approach to XVA calculations via Local Risk-Minimization}

\author{Francesca Biagini}
\address[Francesca Biagini]{LMU M\"unchen, Mathematics Institute,\newline
\indent Theresienstr. 39, D-80333 Munich, Germany}
\email[Francesca Biagini]{biagini@math.lmu.de}%

\author{Alessandro Gnoatto}
\address[Alessandro Gnoatto]{University of Verona, Department of Economics, \newline
\indent via Cantarane 24, 37129 Verona, Italy}
\email[Alessandro Gnoatto]{alessandro.gnoatto@univr.it}

\author{Katharina Oberpriller}
\address[Katharina Oberpriller]{LMU M\"unchen, Mathematics Institute,\newline
\indent Theresienstr. 39, D-80333 Munich, Germany}
\email[Katharina Oberpriller]{oberpriller@math.lmu.de}%

\begin{abstract}
We consider the pricing and hedging of counterparty credit risk and funding when there is no possibility to hedge the jump to default of either the bank or the counterparty. This represents the situation which is most often encountered in practice, due to the absence of quoted corporate bonds or CDS contracts written on the counterparty and the difficulty for the bank to buy/sell protection on her own default. We apply local risk-minimization to find the optimal strategy and compute it via a BSDE. 
\end{abstract}

\keywords{Default Risk, market incompleteness, CVA, DVA, FVA, CollVA, xVA, Collateral}
\subjclass[2020]{91G30, 91G30, 91G40. \textit{JEL Classification} G12, G13}

\date{\today}

\maketitle

\section{Introduction}\label{intro} 
In this paper we study pricing and hedging of counterparty credit risk and funding in absence of hedging possibilities against the default of the bank or the counterparty. We tackle the market incompleteness due to the presence of possible defaults with the well-known local risk-minimization approach extended to a multi-curve setting. We describe the optimal strategy via the solution of a BSDE and use this result to derive a decomposition of the price in terms of value adjustments.

The 2007-2009 financial crisis provided the motivation for a critical review of several assumptions regarding the valuation of financial products. According to the market practice and the academic literature the value of a product should account for the possibility of default of any agent involved in the transaction. The presence of multiple interest rate curves for funding also represent a significant shift with respect to a classical single-curve asset pricing theory. Such aspects are captured by value adjustments (xVA), which are added or subtracted to an idealized reference price, in order to account for the aforementioned frictions. The first contribution on counterparty risk is \cite{duhu96}. Before the 2007--2009 financial crisis, \cite{brigoMasetti} and \cite{cherubini05} studied the \emph{credit value adjustment} (CVA). Subsequently, a \emph{debt value adjustment} (DVA) was introduced in order to account for the default of the trader in, among others, \cite{bripapa11} and \cite{bricapa14}.

By funding costs/multiple curves we mean that nowadays it is recognized, in line with realistic market conditions, that the trader can fund her trading activity by means of a multitude of sources of funding involving different interest rates. In a Black-Scholes economy without default risk \cite{pit10} provides valuation formulas in presence or absence of collateral agreements, introducing the so-called \emph{funding value adjustment} (FVA). Such results are generalized in \cite{BieRut15} and, in a cross currency setting in \cite{fushita10b},  \cite{fushita09}, and \cite{gnoatto2020}. Further discussions on the FVA are 
\cite{papebri2011} and \cite{bripa2014ccp} where the role of central counterparties is discussed as well.

Funding and counterparty risk have been simultaneously considered afterwards, giving rise to the so-called xVAs, where the x represents a placeholder for different valuation adjustments. In a series of papers, Burgard and Kjaer generalize the classical Black-Scholes replication approach to include many effects, 
see \cite{bj2011a} and \cite{bj2013}. However, their results are based on replication arguments. The discounting approach of Brigo and co-authors does not assume market completeness, see \cite{BRIGO2019788}, \cite{bbfpr2022} and, for a mathematical formalization of the involved pricing equations \cite{bgk2024}. The inclusion of xVAs results in general in non-linear pricing equations satisfying \emph{backward stochastic differential equations} (BSDEs). Such an approach is studied in \cite{crepey2015a}, \cite{crepey2015b}, 
\cite{BiCaStu2018}, \cite{BiCaStu2019} and \cite{Biagini_Gnoatto_Oliva_2021}.

In practice it is usually not possible to hedge against the default of the bank or the counterparty. However, the vast majority of the literature assumes a complete market setting. This is particulary evident in the initial contributions of Burgard and Kjaer, focusing on extensions of the classical Black-Scholes-type replication argument. In concrete terms, computing e.g. CVA under market completeness is based on the assumption that there exists some traded asset such as a corporate Bond or a CDS, such that the trading desk can hedge the jump to default of the counterparty. This assumption is justified only for a very limited set of bigger corporates. For most counterparties instead there exist no hedging instrument against the jump to default. This has led to the introduction of different methodologies for the construction of artificial term structure of risk neutral default probabilities, for example \cite{nomura2013} or even the application of historical default probabilities. However such approaches do not offer any protection against default events.

Regarding the hedging of the bank's default this is also problematic from a practical point of view: in concrete cases banks have been proxy-hedging their DVA by trading the CDS of peers i.e. of other investment banks. Such a practice can lead to an approximate hedge before default but does not offer protection against the jump to default. For a bank, buying protection on her own default could even be prohibited under certain jurisdictions.

Looking again at the existing literature, the aforementioned discounting approach of Brigo does not require market completenss: only the martingale property of certain cumulative discounted cashflows is postulated. A similar route that does not postulate replication has been followed by Cr\'epey and co-authors from 2017 onward. In a series of papers, an analysis of xVA from the point of view of a stylized model of the balance sheet of the bank has been proposed in \cite{alcacre2017} \cite{Albanese02012021} \cite{css2020} \cite{crep2022}. An interesting approach based on local risk minimization, limited to CVA, is performed in \cite{Bo2019}.

To summarize, default risk is in general difficult to hedge in concrete settings, up to the point where counterparty credit risk is essentially similar to an actuarial risk that is warehoused by the bank, in a fashion similar to an insurance company: the CVA is charged as an insurance policy premium to the counterparty and the default events erode a reserve.

In this paper we study xVA valuation taking in account for market incompleteness by means of the local risk minimization approach. The first study applying local risk minimization to counterparty risk is \cite{Bo2019}, however the study is limited to CVA in a single curve framework. In our approach we obtain a pricing-hedging BSDE in a full multiple curve setting with realistic collateral specifications, defaultability of the bank and represents, in summary, a full xVA framework. The system of possibly high-dimensional BSDEs we obtain is amenable to numerical treatment by means of deep learning-based approaches, see \cite{Han2018}, \cite{weinar2017}, \cite{Gnoatto2023} and \cite{Gnoatto2024}.

\section{The financial setting}\label{Sect:preliminaries}
We fix a time horizon $T<\infty$ for the trading activity. We consider two agents named the \emph{bank} (B) and the 
\emph{counterparty} (C). Unless otherwise stated, throughout the paper we assume the bank's perspective 
and refer to the bank as the \emph{hedger.} 
All processes are modeled over a probability space $(\Omega, \mathcal{G}, \mathbb{P})$. \\
Let $W^f=(W_t^f)_{t \in [0,T]}$ be a $d$-dimensional Brownian motion on $(\Omega, \mathcal{G}, \mathbb{P})$ and $\mathbb{F}=(\mathcal{F}_t)_{t \in [0,T]}$ its natural filtration.
We denote by $\tau^B,$ respectively by $\tau^C,$ the \emph{time of default} of the bank, respectively of the counterparty,
and by $H^j=(H^j_t)_{t \in [0,T]} $ with $H^j_t:=\textbf{1}_{\lbrace \tau^j\leq t \rbrace}, \, j\,\in\, \{B,C\},$ the associated jump process.  Let $\mathbb{H}^j = (\mathcal{H}_t^j)_{t \,\in\, [0,T]}, \; j \,\in\, \{B,C\},$ be the natural filtration of $H^B,\, H^C,$ respectively. On $(\Omega, \cG,\PP)$ we consider the filtration $\GG=(\mathcal{G}_t)_{t \in [0,T]}$ given by
$$
\mathcal{G}_t = \mathcal{F}_t \vee \mathcal{H}^B_t \vee \mathcal{H}^C_t, \quad t \in [0,T],
$$ and set $\mathcal{G}:=\mathcal{G}_T.$ All filtrations are required to satisfy the usual hypotheses of completeness and right-continuity. 
Let $$
F_t^j:= \mathbb{P}(\tau^j \leq t \vert \mathcal{F}_t), \quad t \in [0,T], \; j \,\in\, \{B,C\},
$$
be the conditional distribution function of the default time $\tau^j$ with  $F_t^j<1$ for $j\in \{B,C \}$ and $t\in [0,T].$
The associated hazard process $\Gamma^j=(\Gamma^j_t)_{t \in [0,T]}$ of $\tau^j$ under $\mathbb{P}$ is defined by 
$$\Gamma^{j}_t:=-\ln\left(1-F_t^j\right),\quad t \in [0,T], \; j \in \lbrace B, C \rbrace.$$
In particular, we assume that the hazard process $\Gamma^j$ admits the following presentation
$$\Gamma^j_t=\int_0^t \lambda^{j}_s ds, \quad t \,\in\, [0,T],\, j\,\in\,\{B,C\},$$   
where $\lambda^j=(\lambda^j_t)_{t \in [0,T]}$ is an $\FF$-adapted, non-negative bounded process. Note that this guarantees that the $\mathbb{F}$-hazard process $\Gamma^j$ is continuous and increasing and $\tau^j$ is a $\mathbb{G}$-stopping time.
Furthermore, $\tau^B$ and $\tau^C$ are assumed to satisfy $\mathbb{P}(\tau^B=\tau^C)=0$ and
we introduce the first default time $\tau$ by
\begin{equation}\label{eq:tau}
    \tau:= \tau^B \wedge \tau^C.
\end{equation}

In the present paper we will make use of the so called \emph{Immersion Hypothesis.}

\begin{assumptionStanding} \label{hp:H}   
Any $(\FF,\PP)$-martingale is a $(\GG,\PP)$-martingale. 
\end{assumptionStanding}

\begin{remark}  \label{remark:PropertiesEnlargedFiltration}
For $j \in \lbrace B, C \rbrace$ the process $M^j=(M_t^j)_{t \in [0,T]}$ defined by 
    \begin{equation} \label{eq:Compensated}
        M_t^j:=H_t^j-\Gamma^j_{t \wedge \tau^j}= H_t^j-\int_0^t \tilde{\lambda}_u^j du, \quad t \in [0,T],
    \end{equation}
    with $\tilde{\lambda}_u^j:=\lambda_u^j \textbf{1}_{\lbrace{ \tau^j \geq u \rbrace}}$ for $u\in [0,T]$, is a $(\mathbb{G},\mathbb{P})$-martingale, see Lemma 9.1.1 in \cite{BieRut2004}. Furthermore, in this framework the predictable representation property in $\mathbb{G}$ holds with respect to $W^f, M^B, M^C$, see Corollary 5.2.4 in \cite{BieRut15}. Moreover, by Hypothesis \ref{hp:H} $W^f$ is a $\mathbb{G}$-martingale and by the optional stopping and L\'evy's theorem we conclude that $W^{\tau,f}$ is a Brownian motion on $[0, \tau \wedge T]$.
\end{remark}
Before presenting the market model in details we introduce some notation, we use throughout the paper.
\begin{notation} \label{notation:SpacesLocalRisk0}
\begin{enumerate}
\item We denote by $L^2(\mathcal{F}_T, \mathbb{P})$ the space of all $\mathcal{F}_T$-measurable random variables $Y$ such that $\mathbb{E}_{\mathbb{P}}\left[\vert Y \vert^2\right]<\infty.$
\item We denote by $L^2(\mathbb{F},\mathbb{P})$ the space of all right-continuous, $\mathbb{F}$-adapted, $\mathbb{R}$-valued  processes $Y=(Y_t)_{t \in [0,T]},$ such that $Y_t \in L^2(\mathcal{F}_t, \mathbb{P})$ for all $t \in [0,T].$
\item We denote by $\mathcal{S}^2(\mathbb{F},\mathbb{P})$ the space of all $\mathbb{R}$-valued square-integrable $(\mathbb{F}, \mathbb{P})$-semimartingales $Y=(Y_t)_{t\in [0,T]}$ that can be decomposed as follows
\begin{align} \label{eq:SemimartingaleDecomposition}    Y_t=Y_0+M_t^Y+A_t^Y, \quad t \in [0,T],
\end{align}
where $Y_0 \in \mathbb{R}$, $M^Y=(M^Y_t)_{t \in [0,T]}$ is an $\mathbb{R}$-valued, square-integrable $(\mathbb{F}, \mathbb{P})$-martingale and $A^Y=(A^Y_t)_{t \in [0,T]}$ is an $\mathbb{R}$-valued, $\mathbb{F}$-predictable process of finite variation such that $M^Y_0=A^Y_0=0.$ The corresponding space of $\mathbb{R}^d$-valued semimartingales is denoted by $\mathcal{S}^2_{\textnormal{d}}(\mathbb{F},\mathbb{P})$.
\end{enumerate} 
The spaces $L^2(\mathcal{G}_T,\mathbb{P}), L^2(\mathbb{G},\mathbb{P})$ and $\mathcal{S}^2_{\textnormal{d}}(\mathbb{G}, \mathbb{P})$ are defined analogously, but with respect to $\mathcal{G}_T$ and $\mathbb{G}.$ 
\end{notation}

\subsection{Traded assets}  \label{sec:BasicTradedAssets}
For $d \geq 1,$ we denote by $S^i=(S^i_t)_{t \in [0,T]}$, $i=1,\ldots, d,$ the \emph{ex-dividend price} (i.e. the price) of the risky asset $i$ which is assumed to be a {continuous $(\mathbb{F}, \mathbb{P})$-semimartingale.} More specifically, the dynamics of $S^i=(S_t^i)_{t \in [0,T]}$ are given by
\begin{align} \label{eq:dynamicSspecifiedDefaultsTheorem}
    \begin{cases}
dS_t^i&=S_t^i\left(\mu^i(t,S_t)dt+{\sigma}^i(t,S_t)dW^{f,i}_t \right)\\
S_0^i&=s_0^i >0
\end{cases}
\end{align}
on $[0,T]$ with 
\begin{align*}
   \mu^i&:\left([0,T] \times \mathbb{R}^d, \mathcal{B}\left([0,T] \times \mathbb{R}^d\right)\right) \to \left(\mathbb{R}, \mathcal{B}(\mathbb{R})\right),\\ 
   \sigma^i&:\left([0,T] \times \mathbb{R}^d, \mathcal{B}\left([0,T] \times \mathbb{R}^d\right)\right) \to \left(\mathbb{R}_+, \mathcal{B}(\mathbb{R}_+)\right)
\end{align*}
with $\sigma^i\neq 0$ for $i=1,...,d,$ and such that there exists a unique strong solution of \eqref{eq:dynamicSspecifiedDefaultsTheorem}. 
As we consider an incomplete market model we do not assume that there exist risky bonds issued by the bank or the counterparty, which would allow to hedge the jump to default. \\
In the following, we assume the existence of an indexed family of cash accounts $(B^x)_{x \,\in\,\tI},$ where the stochastic process 
$r^x= (r^x_t)_{t \in [0,T]}$ is bounded and $\FF$-adapted for all $x \,\in\, \tI.$ In particular, we assume that there exists a constant ${K}_r>0$ such that for all $x \in \mathcal{I}$
\begin{align}  \label{eq:BoundsInterestRates}
\left \vert r^x \right \vert \leq K_r \quad \mathbb{P}\text{-a.s.}
\end{align}
Note that we can always define such a constant $K_r$ as the supremum over the absolute value of all upper and lower bounds for $r^x$ with $x \in \mathcal{I}$. Moreover, in the following we denote by $C(K_r)>0$ a generic constant depending on $K_r.$
The set of indices $\tI$ embodies the type of agreement the counterparties establish in order to mitigate the counterparty credit risk. We will specify the characteristics of the aforementioned indices later on.  

All cash accounts, with unitary value at time $0$, are assumed to be strictly positive 
continuous processes of finite variation of the form 
\begin{align} \label{def:cash_account}
B^x_t := e^{\int_0^t r^{x}_s ds}, \quad t \,\in\, [0,T], \, x \in \mathcal{I}. 
\end{align}

\subsection{Repo-trading} \label{repo_trading}
In line with the existing literature, we assume that the trading activity on the risky assets $S$, \textcolor{black}{introduced in Subsection \ref{sec:BasicTradedAssets},} is \emph{collateralized.} 
This means that borrowing and lending activities related to risky securities are financed via security lending or \emph{repo market,} 
\textcolor{black}{see} \cite{BiCaStu2018}. \textcolor{black}{To model this market we introduce for every risky asset $S^i,$ $i=1,...,d,$ a repo account $B^i=(B^i_t)_{t \in [0,T]}$ such that 
\begin{equation} \label{eq:DynamicsRepoAccount}
dB_t^i = r_t^i B_t^i dt, \quad  t \in [0,T],
\end{equation}
where $r^i=(r_t^i)_{t \in [0,T]}$ denotes the repo rate for account $i$.} 
Since transactions on the repo market are collateralized by the risky assets, repo rates are lower than unsecured funding rates. 
As argued in \cite{crepey2015a}, assuming that all assets are traded via repo markets is not restrictive. In the setting of this paper we always assume that the transactions are fully collateralized, i.e. that
\begin{align} \label{eq:repoConstraint}
  \xi_t^i S_t^i+ \psi_t^i B_t^i=0, \quad  t \in  [0,T], \, i=1,...,d,
\end{align}
where $\xi^i$ and $\psi^i$ denote the number of shares invested in the risky asset $S^i$ and the repo account $B^i$, respectively. We refer to \eqref{eq:repoConstraint} as the \emph{repo-trading constraint}.

It is worth noting that $\xi_t^i, \, i = 1, \ldots, d,$ may be either positive or negative. Here $\xi_t^i >0$ means that we are in a long position, which has to be financed by collateralization. On the other hand, $\xi_t^i < 0$ implies that the $i$-th asset is shorted, so that the whole amount of collateral is deposited  in the riskless asset. 

\subsection{Market model}
From now on, we assume for the market model $(S^1,...,S^d,B^1,...,B^d)$ that for every $i=1,...,d$ there exists a uniform constant $K_i>0$ such that for all $t \in [0,T]$
\begin{align} \label{eq:BoundsExistenceSolutionDefaults}
    \left \vert \frac{\mu^i(t,S_t)-r_t^i}{\sigma^{i}(t,S_t)} \right \vert \leq K_i \quad \mathbb{P}\text{-a.s.}
\end{align}
and
\begin{align} \label{eq:SetNotEmpty}
    \mathbb{E}_{\mathbb{P}} \left[ \int_0^T \left( \vert \mu^i(u,S_u) \vert + \left(\sigma^i (u,S_u)\right)^2\right) du\right]<\infty.
\end{align}
Next, we define the $\mathbb{R}^d$-valued discounted asset price process $\widetilde{S}=(\widetilde{S}_t)_{t \in [0,T]}$ by
    \begin{align} \label{eq:DiscountedPrice}
        \widetilde{S}_t^{i}:= \frac{S_t^i}{B_t^i}, \quad t \in [0,T], \; i=1,...,d,
    \end{align}
which satisfies 
\begin{align}  \label{eq:DyS}
d \widetilde{S}_{t}^{i}&=\frac{1}{B_t^i}\left( dS_t^i -r_t^i S_t^i dt \right) = \widetilde{S}_{t}^{i}\left(\left(\mu^i(t,S_t)-r_t^i\right)dt+{\sigma}^i(t,S_t)dW^{f,i}_t \right).  
\end{align}
For all $i=1,...,d$ we choose $\mu^i, \sigma^i,r^i$ such that $ \widetilde{S} \in \mathcal{S}^2_{\textnormal{d}}(\mathbb{F},\mathbb{P})$. 
This means that for all $i=1,...,d,$ $\widetilde{S}^{i}$ is an $\mathbb{R}$-valued, square-integrable $(\mathbb{F}, \mathbb{P})$-semimartingale with canonical decomposition
\begin{align}  \label{eq:SemimartingaleDecompositionS}
\widetilde{S}_t^i&=\widetilde{S}_0^i+M_t^{\widetilde{S}^i}+A_t^{\widetilde{S}^{i}} = \widetilde{S}_0^i + \int_0^t  \widetilde{S}_{u}^{i} \sigma^i(u,S_u)dW_u^{f,i}  +\int_0^t \widetilde{S}_{u}^{i}\left(\mu^i(u,S_u)-r_u^i\right)du , \quad 
t \in [0,T],
\end{align} 
where $M^{\widetilde{S}^i}=(M^{\widetilde{S}^i}_t)_{t \in [0,T]}$ is an $\mathbb{R}$-valued, square-integrable $(\mathbb{F}, \mathbb{P})$-martingale and $A^{\widetilde{S}^i}=(A^{\widetilde{S}^i}_t)_{t \in [0,T]}$ is an $\mathbb{R}$-valued, $\mathbb{F}$-predictable finite variation process. By Remark \ref{remark:PropertiesEnlargedFiltration} and as $\widetilde{S} \in \mathcal{S}_{\textnormal{d}}^2(\mathbb{F}, \mathbb{P})$ we conclude that the stopped process
 $\widetilde{S}^\tau=(\widetilde{S}_{\tau \wedge t})_{t \in [0,\tau \wedge T]}$ is an element in $\mathcal{S}^2_{\textnormal{d}}(\mathbb{G},\mathbb{P})$. 
\begin{remark}
Note that it is also possible to weaken the integrability conditions on $\widetilde{S}$ by assuming that $\widetilde{S} $ is in the space of all locally square-integrable $(\mathbb{F}, \mathbb{P})$-semimartingales, denoted by $\mathcal{S}^2_{\textnormal{loc,d}}(\mathbb{F},\mathbb{P})$. 
  However, to avoid technicalities we work with the space $\mathcal{S}^2_{\textnormal{d}}(\mathbb{F},\mathbb{P})$.
\end{remark}
We denote by $\mathcal{P}^2_{\mathbb{F}}(\widetilde{S})$ the set of all equivalent probability measures $\overline{\mathbb{Q}}$ on $(\Omega, \mathcal{F}_T)$ such that for all $i=1,...,d$ the process $\widetilde{S}^i$ defined in \eqref{eq:DiscountedPrice} is an $(\mathbb{F}, \overline{\mathbb{Q}})$-martingale and $\frac{d \overline{\mathbb{Q}}}{d\mathbb{P}}\big \vert_{\mathcal{F}_T} \in L^2(\Omega,\mathcal{F}_T)$. The conditions in \eqref{eq:BoundsExistenceSolutionDefaults} and \eqref{eq:SetNotEmpty} guarantee that $\mathcal{P}_{\mathbb{F}}^2(\widetilde{S}) \neq \emptyset.$ \\ On $(\Omega, \mathcal{G}_T)$ we define the probability measure ${\mathbb{Q}}$ by the density process 
\begin{align}
    {\mathscr{Z}}_t:=\frac{d{\mathbb{Q}}}{d\mathbb{P}}\Bigg \vert_{\mathcal{G}_t}&=  \mathscr{E}\left( \sum_{i=1}^d \int_0^{\cdot} -\frac{\mu^i(u,S_u)-r_u^i}{\sigma^{i}(u,S_u)} dW_u^{f,i}\right)_t \nonumber \\
    &=\exp\left(\sum_{i=1}^d \int_0^{t} \frac{\mu^i(u,S_u)-r_u^i}{\sigma^{i}(u,S_u)} dW_u^{f,i}- \frac{1}{2} \sum_{i=1}^d \int_0^{t} \left(\frac{\mu^i(u,S_u)-r_u^i}{\sigma^{i}(u,S_u)} \right)^2 du\right), \quad t \in [0,T].
    \label{eq:StochasticExponential}
\end{align}
 By \eqref{eq:BoundsExistenceSolutionDefaults} we obtain that Novikov's condition holds, i.e.  
\begin{align*}
\mathbb{E}_{\mathbb{P}}\left[  \exp \left(\frac{1}{2}\int_0^T \sum_{i=1}^d \left( \frac{\mu^i(u,S_u)-r_u^i}{\sigma^{i}(u,S_u)} \right)^2 du \right) \right]\leq e^{Td(\max_{i=1,...,d} K_i^2)}<\infty,
\end{align*}
and thus $ {\mathscr{Z}}$ is a positive $(\mathbb{G}, \mathbb{P})$-martingale. As $\mathscr{Z}$ is $\mathbb{F}$-adapted, $\mathscr{Z}$ is also a $(\mathbb{F},\mathbb{P})$-martingale.
Moreover, $ {\mathscr{Z}} \in L^2(\mathbb{F}, \mathbb{P}) \subseteq L^2(\mathbb{G}, \mathbb{P})$, as by \eqref{eq:BoundsExistenceSolutionDefaults} we have that 
\begin{align} \label{eq:SecondMoment}
\mathbb{E}_{\mathbb{P}}\left[ \vert  {\mathscr{Z}}_t \vert^2\right] \leq d \max_{i=1,...,d} K_i, \quad t \in [0,T]
\end{align}
see e.g. Proposition 2.5.1 in \cite{delong}. Moreover, $\mathbb{Q}$ is the minimal martingale measure\footnote{A martingale measure for $\widetilde{S}$, denoted by $\mathbb{Q}$, which is equivalent to $\mathbb{P}$ such that $\frac{d\mathbb{Q}}{d\mathbb{P}} \in L^2(\mathcal{G}_T, \mathbb{P})$ is called  \emph{minimal} for $\widetilde{S}$ if any square-integrable $(\mathbb{G},\mathbb{P})$-martingale, which is strongly orthogonal to the martingale part of $\widetilde{S}$ under $\mathbb{P}$ is also a $(\mathbb{G},\mathbb{Q})$-martingale.} for $S$, see e.g. \cite{schweizer_survey}.
\begin{remark}
    Condition \eqref{eq:BoundsExistenceSolutionDefaults} guarantees the existence of the minimal martingale measure for $S.$ Furthermore, it also establishes the existence and uniqueness of a BSDE later on, see Theorem \ref{theorem:LinkBSDE}.  
\end{remark}

\subsection{Funding}
Within the bank, the trading desk borrows and lends money from/to the treasury desk. For simplicity, we assume that the borrowing and lending rates do not differ, i.e. the rate at which 
the trading desk borrows from and lends to the treasury desk are the same and this rate is denoted by $r^f=(r_t^f)_{t \in [0,T]}$ and the associated treasury account by $B^f=(B_t^f)_{t \in [0,T]}$ with 
$$dB_t^f =r_t^fB_t^f dt, \quad t \in [0,T].$$
As we will discount also with the treasury account $B^f$ we define the following $\mathbb{R}^d$-valued process $X=(X_t)_{t \in [0,T]}$ by 
\begin{equation} \label{eq:StochasticIntegral}
X_t^i :=\int_{0}^t \frac{B^i_u}{B_u^f} d \widetilde{S}_u^i, \quad  t \in [0,T],  \; i=1,...,d,
\end{equation}
which is continuous and in the space $\mathcal{S}^2_{\textnormal{d}}(\mathbb{F},\mathbb{P}).$ This follows as $\widetilde{S} \in \mathcal{S}_{\textnormal{d}}^2(\mathbb{F}, \mathbb{P})$ and thus
for every $i=1,...,d$ we have 
\begin{align} 
&\mathbb{E}_{\mathbb{P}}\left[ \int_0^T \left(\frac{B^i_u}{B^f_u} \right)^2d \langle M^{\widetilde{S}^i} \rangle_u + \left(\int_0^T \left \vert \frac{B^i_u}{B^f_u}  dA^{\widetilde{S}^i}_u \right \vert \right)^2   \right] \nonumber \\
&=\mathbb{E}_{\mathbb{P}}\left[ \int_0^T \left(\frac{B^i_u}{B^f_u} \right)^2  \left(\widetilde{S}_{u}^{i} \sigma^i(u,S_u)\right)^2 du  + \left(\int_0^T \left \vert \frac{B^i_u}{B^f_u}  \widetilde{S}_{u}^{i}\left(\mu^i(u,S_u)-r_u^i\right)du \right \vert \right)^2   \right]\nonumber \\
&\leq C(K_r) \mathbb{E}_{\mathbb{P}}\left[\int_0^T  \left(\widetilde{S}_{u}^{i} \sigma^i(u,S_u)\right)^2 du  + \left(\int_0^T \left \vert   \widetilde{S}_{u}^{i}\left(\mu^i(u,S_u)-r_u^i\right)du \right \vert \right)^2   \right]< \infty, \nonumber 
\end{align}
where we use \eqref{eq:BoundsInterestRates}.

For $i=1,...,d$ the dynamics of the stopped process $X^{\tau,i}$ are given by 
 \begin{align}
        X_t^{\tau,i}& =X_0^i +A_t^{X^{\tau,i}}+ M_t^{X^{\tau,i}}\nonumber \\
        &=  X_0^i + \int_0^{t}  \frac{S_u^{\tau,i}}{B_u^f}  \left(\mu^i(u,S_u^{\tau}) -r_u^i \right) du +\int_0^{  t}  \frac{S_u^{\tau,i}}{B_u^f} {\sigma}^{i}(u,S_u^{\tau}) dW^{\tau,f,i}_u, \quad t \in [0,\tau \wedge T].\label{eq:SemimartingaleDecomposotionX}
    \end{align} 
As $X$ is an element in $\mathcal{S}^2_{\textnormal{d}}(\mathbb{F},\mathbb{P})$ it follows by Remark \ref{remark:PropertiesEnlargedFiltration} and as $X \in \mathcal{S}_{\textnormal{d}}^2(\mathbb{F}, \mathbb{P})$ 
 that the stopped process $X^{\tau}=(X_{\tau \wedge t})_{t \in [0, \tau \wedge T]}$ is an element in $\mathcal{S}^2_{\textnormal{d}}(\mathbb{G},\mathbb{P})$. 
\subsection{Financial contract and collateralization}
In the following, we assume that the bank enters into an over-the-counter derivative contract at time $0$ with a counterparty in the market. 
\begin{definition}
 The payment stream of a \emph{non-defaultable and uncollateralized financial contract} is represented by an ${\FF}$-adapted c\`adl\`ag process of finite variation $A = (A_t)_{t \,\in\, [0,T]}. $
\end{definition}
We use the notation $\Delta A_t:=A_t-A_{t-}$ for the jumps of $A$. 
Note that $A_{\tau-}$ represents the last payment before default, see also \cite{BriMor2018}.\\
The cash flow of $A$ does not take into account the risk factors resulting from collateralization and the possible default of the bank or the counterparty.   \\
For what concerns financial contracts, collateralization is a method to minimize losses due to default 
of the counterparty by using margins. In the financial jargon, a \emph{margin} represents an economic value, either in the form of cash or risky securities, 
exchanged between the counterparties of a financial transaction, in order to reduce their risk exposures. \textcolor{black}{To keep the notation simple we do not distinguish between initial margin and collateral (or variation margin). This means that the collateral process $C=(C_t)_{t \in [0,T]}$ represents the overall collateralization. For a detailed study of the two different margins we refer to \cite{Biagini_Gnoatto_Oliva_2021}.} 

A collateral is posted between the bank and the counterparty to mitigate counterparty risk. 
The collateral process $C$ is assumed to be $\textcolor{black}{\FF}$-adapted. 
We follow the convention \textcolor{black}{of \cite{bbfpr2022}: }

\begin{itemize}
\item If \textcolor{black}{$C_t<0,$} we say that the bank is the \emph{collateral provider.} It means that the counterparty measures a positive exposure 
			towards the bank, so it is a potential lender to the bank, hence the bank provides/lends collateral to reduce its exposure. \textcolor{black}{The bank is remunerated at interest rate $r^{c,l}=(r^{c,l}_t)_{t \in [0,T]}$.}

\item If \textcolor{black}{$C_t>0,$} we say that the bank is the \emph{collateral taker.} It means that the bank measures a positive exposure towards 
			the counterparty, so it is a potential lender to the counterparty, hence the counterparty provides/lends collateral to reduce 
			its exposure. \textcolor{black}{In this case, the bank has to pay instantaneous interest rate $r^{c,b}=(r^{c,b}_t)_{t \in [0,T]}$.}
\end{itemize} 
In this case, the effective collateral accrual rate $\bar{r}^c=(\bar{r}_t^c)_{t \in [0,T]}$ equals
$$
\bar{r}_t^c:=r_t^{c,l} \textbf{1}_{\left\{C_t<0\right\}}+r_t^{c,b} \textbf{1}_{\left\{C_t \geq 0\right\}},  \quad  t\in [0,T].
$$
If there is a collateral agreement (or a multitude of agreements)  between the bank and the counterparty, in evaluating 
portfolio dynamics we need to make a distinction between the value of the portfolio and the wealth of the bank, the two 
concepts being distinguished since the bank is not the legal owner of the collateral (prior to default).

In this paper collateral is always posted in the form of cash, in line with standard \textcolor{black}{current market practice}. Moreover, we assume 
\emph{rehypothecation,} meaning that the holder of collateral can use the cash to finance her trading activity. 
This is the opposite of \emph{segregation,} where the received cash collateral must be kept in a separate account and 
can not be used to finance the purchase of assets.

We simply set $\bar{r}^c=\rck{l}=\rck{b}$ in case there is no bid-offer spread in the collateral rate. 
Possible choices for the collateral rate are e.g. ESTR for EUR trades, SOFR for USD and SONIA for GBP trades. 
Such rates are overnight rates with a negligible embedded risk component. The choice of such approximately risk-free rates
as collateral rates is motivated by market consensus. However, two counterparties might enter a collateral agreement 
that involves a remuneration of collateral at any other risky rate of their choice. 
Here we do not assume any requirements on collateral rates. This allows us to cover the quite common situation where the 
collateral rate agreed between the two counterparties in the CSA is defined by including a real valued 
spread over some market publicly observed rate, e.g. ESTR $- 50\, bps,$ where $bps$ stands for \emph{basis points.}

In case of default, cashflows are exchanged between the surviving agent and the liquidators of the defaulted agent.
Here we use the term \emph{agent} as a placeholder for the bank or for the counterparty. Due to the exchange of 
cashflows at default time, agents need to perform a valuation of the position at a random time 
\textcolor{black}{represented by the close-out condition, see \cite{BiCaStu2018} Section 3.4}. 
The object of the analysis can be the value in the absence of counterparty risk (referred to in the literature as 
\emph{risk-free close-out}) or the value of the trade including the price adjustments due to counterparty risk and 
funding (\emph{risky close-out}), see e.g. \cite{BriMor2018}. 
A risky close-out condition guarantees that the surviving counterparty can ideally 
fully substitute the transaction with a new trade entered with another counterparty with the same credit quality. 
This comes at the price of a significant increase of the complexity of the valuation equations. Current 
market practice and the existing literature mainly focus on the case of risk-free close-out value. \\

We now introduce an auxiliary artificial interest rate process $r=(r_t)_{t \in[0,T]}$ which is assumed to be right-continuous, $\mathbb{F}$-adapted and bounded, i.e. \eqref{eq:BoundsInterestRates} holds. The associated bank account is denoted by $B^r=(B_t^r)_{t \in [0, T]}$. 
\begin{definition} \label{def:CleanValue}
The \emph{clean value process} $\mathcal{V}=(\mathcal{V}_t)_{t \in [0,T]}$ of a financial contract $A$ is defined by
\begin{align*}
 \mathcal{V}_t:= \mathbb{E}_{\mathbb{Q}}\left[-\int_{]t,T]}  \frac{B_t^r}{B_u^r}\ d A_u \Big \vert \mathcal{F}_t \right]= \mathbb{E}_{\mathbb{P}}\left[-\int_{]t,T]}  {\mathscr{Z}}_u \frac{B_t^r}{B_u^r}\frac{1}{ {\mathscr{Z}}_t} d A_u \Big \vert \mathcal{F}_t \right], \quad t \in [0,T].
\end{align*}
The \emph{close-out valuation process} $Q=(Q_t)_{t \in [0, T]}$ of the financial contract $A$ is defined by
\begin{align*} 
 Q_t:=  \mathbb{E}_{\mathbb{Q}}\left[\int_{[t,T]}  \frac{B_t^r}{B_u^r}\ d A_u \Big \vert \mathcal{F}_t \right]=\mathbb{E}_{\mathbb{P}}\left[\int_{[t,T]}  {\mathscr{Z}}_u \frac{B_t^r}{B_u^r}\frac{1}{ {\mathscr{Z}}_t} d A_u \Big \vert \mathcal{F}_t \right]= -\mathcal{V}_t + \Delta A_t, \quad t \in [0,T].
\end{align*}
\end{definition}
\begin{remark} \label{remark:ConditioningGF}
\begin{enumerate}
\item Note that it holds for $t \in [0,  T]$
\begin{align} \label{eq:EquivaleneConditionalExpectations}
\mathcal{V}_t=\mathbb{E}_{\mathbb{P}}\left[-\int_{]t,T]}  {\mathscr{Z}}_u \frac{B_t^r}{B_u^r}\frac{1}{{\mathscr{Z}}_t} d A_u \Big \vert \mathcal{G}_t \right] \quad \text{and} \quad Q_t=  \mathbb{E}_{\mathbb{P}}\left[\int_{[t,T]}  {\mathscr{Z}}_u \frac{B_t^r}{B_u^r}\frac{1}{{\mathscr{Z}}_t} d A_u \Big \vert \mathcal{G}_t \right].
\end{align}
This follows directly by Hypothesis \ref{hp:H} and as $\int_{[t,T]} {\mathscr{Z}}_u \frac{B_t^r}{B_u^r}\frac{1}{{\mathscr{Z}}_t} d A_u $ is $\mathcal{F}_T$-measurable, see e.g. Lemma 3.2.1 in \cite{BieJeanRut2009}.
\item On the event $\lbrace{\tau \leq T \rbrace}$ we have 
\begin{align} \label{eq:CadlagNoJump}
Q_{\tau}=  \mathbb{E}_{\mathbb{P}}\left[\int_{[{\tau},T]} {\mathscr{Z}}_u \frac{B_{\tau}^r}{B_u^r}\frac{1}{{\mathscr{Z}}_{\tau}} d A_u \Big \vert \mathcal{G}_{\tau} \right]= -\mathcal{V}_{\tau} + \Delta A_{\tau}= -\mathcal{V}_{\tau},
\end{align}
where we use \eqref{eq:EquivaleneConditionalExpectations} and that $\Delta A_{\tau}=0$ by Lemma 2.1 in \cite{crepey2015b}. 
\end{enumerate}
\end{remark}
We set $C_t := g(\cV_t), \, t \,\in\, [0,T],$ where $g: \, \RR \,\rightarrow\, \RR$ is a Lipschitz function, which allows to cover realistic collateral specifications, see e.g. \cite{listag2015} and \cite{BaFuMa2019}. Moreover, note that 
\begin{align} \label{eq:NoCollateralPosted}
C_{\tau \wedge T}=C_T \textbf{1}_{\lbrace \tau >T \rbrace} + C_{\tau}\textbf{1}_{\lbrace \tau \leq T \rbrace}=C_T \textbf{1}_{\lbrace \tau >T \rbrace} + C_{\tau-}\textbf{1}_{\lbrace \tau \leq T \rbrace},
\end{align}
as the collateral is not updated at the time of default.

Next, we provide the definition of a defaultable, collateralized contract.
\begin{definition}
     A \emph{defaultable, collateralized contract} $(A,R, C, \tau)$ is a contract with the promised cashflow process $A$, collateral account $C$, and a close-out payoff $\vartheta_{\tau}(R,C)$ occurring at the first default time $\tau$ given by 
    \begin{align} \label{def:CloseOut}
        \vartheta_\tau(R,C):=R_{\tau}+C_{\tau-},
    \end{align}
    where the recovery payoff $R_{\tau}$ is defined on the event $\lbrace \tau \leq T \rbrace$ by 
\begin{equation} \label{eq:finalCloseOut} 
\begin{aligned}
	R_\tau & := \textbf{1}_{\lbrace{\tau^C< \tau^B\rbrace}}(R^C \mathcal{Y}^+-\mathcal{Y}^-)+\textbf{1}_{\lbrace{\tau^B< \tau^C\rbrace}}(\mathcal{Y}^+-R^B \mathcal{Y}^-),
\end{aligned}
\end{equation} 
with $\mathcal{Y}:=Q_{\tau}-C_{\tau -}$ and the recovery rates $0<R^j<1, j \in \lbrace B, C \rbrace,$ of the bank and the counterparty, respectively.
Equation \eqref{def:CloseOut} is known as \emph{close-out condition}.
\end{definition}
\begin{remark}
 We can rewrite the close-out condition $\vartheta_{\tau}(R,C)$ in Definition \ref{def:CloseOut} with $\mathcal{Y}=Q_{\tau}-C_{\tau_-}$ as follows
\begin{align}
\vartheta_{\tau}(R,C)&= R_{\tau} + C_{\tau-} \nonumber \\
&=\textbf{1}_{\lbrace{\tau^C< \tau^B\rbrace}}(R^C \mathcal{Y}^+-\mathcal{Y}^-)+\textbf{1}_{\lbrace{\tau^B< \tau^C\rbrace}}(\mathcal{Y}^+-R^B \mathcal{Y}^-)+C_{\tau-} \nonumber \\
&=\textbf{1}_{\lbrace{\tau^C< \tau^B\rbrace}}(R^C \mathcal{Y}^++\mathcal{Y}-\mathcal{Y}^+)+\textbf{1}_{\lbrace{\tau^B< \tau^C\rbrace}} (\mathcal{Y}+\mathcal{Y}^--R^B \mathcal{Y}^-)+C_{\tau-}  \nonumber \\
&=- \textbf{1}_{\lbrace{\tau^C< \tau^B\rbrace}}(1-R^C) \mathcal{Y}^+ +\textbf{1}_{\lbrace{\tau^B< \tau^C\rbrace}} (1-R^B)\mathcal{Y}^- +\mathcal{Y}\left(\textbf{1}_{\lbrace{\tau^C< \tau^B\rbrace}}+\textbf{1}_{\lbrace{\tau^B< \tau^C\rbrace}}\right) +C_{\tau-} \nonumber \\
&=- \textbf{1}_{\lbrace{\tau^C< \tau^B\rbrace}}(1-R^C) (Q_{\tau}-C_{\tau-})^+ +\textbf{1}_{\lbrace{\tau^B< \tau^C\rbrace}} (1-R^B)(Q_{\tau}-C_{\tau-})^- +Q_{\tau}-C_{\tau-} +C_{\tau-} \label{eq:ExcludeSameStoppingTimes} \\
&=Q_{\tau}- \textbf{1}_{\lbrace{\tau^C< \tau^B\rbrace}}(1-R^C) (Q_{\tau}-C_{\tau-})^+ +\textbf{1}_{\lbrace{\tau^B< \tau^C\rbrace}} (1-R^B)(Q_{\tau}-C_{\tau-})^-, \label{eq:PayoffRewriten}
\end{align}
where we use in \eqref{eq:ExcludeSameStoppingTimes} that $\mathbb{P}(\tau^B=\tau^C)=0$ and thus $\textbf{1}_{\lbrace{\tau^C< \tau^B\rbrace}}+\textbf{1}_{\lbrace{\tau^B< \tau^C\rbrace}}=1$ $\mathbb{P}$-a.s.
\end{remark}
The cashflow process associated to a defaultable, collateralized contract $(A,R,C,\tau)$, denoted by $A^{R,C}=(A_t^{R,C})_{t \in [0,\tau \wedge T]}$, is given by
\begin{align} \label{eq:CashflowDefaultCollateralized}
    A_t^{R,C}:= {A}_t^{C}+ \textbf{1}_{\lbrace t \geq \tau  \rbrace}R_{\tau},\quad t \in [0,\tau \wedge T],
\end{align}
with 
\begin{align} \label{eq:CashflowDefaultCollateralized1}
    {A}_t^C= \textbf{1}_{\lbrace t < \tau  \rbrace} A_t +  \textbf{1}_{\lbrace t \geq \tau  \rbrace} A_{\tau-}+ \textbf{1}_{\lbrace t < \tau  \rbrace} C_t + \textbf{1}_{\lbrace t \geq \tau  \rbrace}C_{\tau-}-\int_0^{\tau \wedge t} \bar{r}_u^c C_u du,\quad t \in [0,\tau \wedge T],
\end{align}
and the recovery payoff $R_{\tau}$ is defined in \eqref{eq:finalCloseOut}.

\subsection{Portfolio and value process}
We assume that the bank enters in a defaultable, collateralized contract $(A, R, C, \tau)$ at time $0$ and invests over the time horizon $[0, \tau \wedge T]$ all its wealth fully collateralized in the risky assets and the treasury account $B^f$. In particular, the contract $A$ can only be entered at time $0$ and ends at time $\tau \wedge T.$
\begin{definition}
    We denote by $$\varphi=(\xi, \psi, \psi^f)=(\xi_t, \psi_t, \psi^f_t)_{t \in [0,\tau \wedge T]}$$
the \emph{portfolio/strategy} of the bank where
\begin{enumerate}
    \item $\xi=(\xi^1,...,\xi^d)$ is an $\mathbb{R}^d$-valued, $\mathbb{G}$-predictable process, representing the number of basic traded assets $S^1,...,S^d$,
    \item $\psi=(\psi^1,...,\psi^d)$ is an $\mathbb{R}^d$-valued, $\mathbb{G}$-predictable process, representing the shares of the repo accounts $B^1,...,B^d$ and $\psi$ is uniquely determined by $\xi$ due to the repo-trading constraint \eqref{eq:repoConstraint},
    \item $\psi^f$ is an $\mathbb{R}$-valued, $\mathbb{G}$-adapted process denoting the shares in the treasury account $B^f$.
\end{enumerate}
\end{definition}
\begin{definition}
    The \emph{value process} of a portfolio $\varphi$ for a defaultable, collateralized contract $(A,R,C,\tau)$, denoted by $V^p(\varphi,A^{R,C})=(V_t^p(\varphi,A^{R,C}))_{t \in [0,\tau \wedge T]}$, is given by
\begin{equation} \label{eq:ValuePortfolioNoDefault}
V_t^p(\varphi,A^{R,C}):= \psi_t^f B_t^f + \sum_{i=1}^d \left(\psi_t^i B_t^i + \xi_t^i S_t^i \right)=\psi_t^fB_t^f, \quad t \in [0, \tau \wedge T],
\end{equation}
where we use the repo-trading constraint in \eqref{eq:repoConstraint}.
\end{definition}
Note that also while the value process $V^p(\varphi, A^{R,C})$ does not depend on the contract $(A,R,C,\tau)$, the dynamics of $V^p(\varphi, A^{R,C})$ do, see also \cite{bbfpr2022}.\\
\\
We now want to hedge a defaultable, collateralized contract $(A,R,C,\tau)$ with a suitable strategy $\varphi$.
\begin{definition} \label{def:ValueProcessWealth}
    The \emph{value process} $V(\varphi,A^{R,C})=(V_t(\varphi,A^{R,C}))_{t \in [0, \tau \wedge T]}$ of the bank's wealth associated to a portfolio $\varphi$ for a defaultable, collateralized contract $(A,R,C,\tau)$ is given on the event $\lbrace{t<\tau \rbrace}$ for every $t \in [0, \tau \wedge T]$ by
    \begin{align*}
        V_t(\varphi,A^{R,C}):=V_t^p(\varphi,A^{R,C})-C_t= V_t^{p}(\varphi,A^C)-C_t,
    \end{align*}
and on the event $\lbrace{\tau \leq T \rbrace}$ by
\begin{align*}
 V_{\tau}(\varphi,A^{R,C}):= V_{\tau}^p(\varphi,A^{R,C})= V_{\tau}^p(\varphi,{A}^{C})+R_{\tau}.
\end{align*}
\end{definition}

\section{Local risk-minimization in a multi-curve setting}
Our goal is to address the market incompleteness due to the presence of possible defaults via local risk-minimization extended to a multi-curve setting. According to this approach, also in the case of an incomplete market a contingent claim or a payment stream is perfectly replicated via a portfolio admitting a cost. 
In the sequel we define the cost process of a financial contract $A$, which is then used to introduce the definition of a locally risk-minimizing strategy and a F\"ollmer-Schweizer decomposition in the presence of multi-curves. We then link the existence of a locally risk-minimizing strategy to a F\"ollmer-Schweizer decomposition.

\subsection{Locally risk-minimizing strategies and F\"ollmer-Schweizer decomposition in a multi-curve setting}
Before providing the definition of a pseudo-locally risk-minimizing strategy in this setting we introduce some further notation.
\begin{notation} \label{notation:SpacesLocalRisk}
\begin{enumerate}
\item {We denote by $\Theta_{X}^{\mathbb{G},\tau}$ the space of all $\mathbb{R}^d$-valued, $\mathbb{G}$-predictable processes $\xi=(\xi_t)_{t \in [0, \tau \wedge T]}$ such that for all $i=1,...,d$ the integral $\int_0^{\cdot \wedge \tau} \xi_u^i \frac{B_u^i}{B_u^f}  d\widetilde{S}_u^i$ is well-defined and in $\mathcal{S}^2_{\textnormal{d}}(\mathbb{G}, \mathbb{P})$. Equivalently, $\Theta_X^{\mathbb{G},\tau}$ contains all $\mathbb{R}^d$-valued, $\mathbb{G}$-predictable processes $\xi=(\xi_t)_{t \in [0,\tau \wedge T]}$ such that
\begin{align} \label{eq:integrabilityTau}
\mathbb{E}_{\mathbb{P}}\left[ \int_0^{\tau\wedge T} \xi^{\top}_u d \langle M^{X}\rangle_u \xi_u + \left( \int_0^{\tau\wedge T} \left \vert \xi^{\top}_u dA^{X}_u\right \vert \right)^2 \right] < \infty.
\end{align}
\item $\Theta_X^{\mathbb{F}, \tau}$ is the space of all $\mathbb{R}^d$-valued, $\mathbb{F}$-predictable processes $\xi=(\xi_t)_{t \in [0,\tau \wedge T]}$ such that \eqref{eq:integrabilityTau} holds.}
\end{enumerate}
\end{notation}
Locally risk-minimizing strategies are not self-financing. However, we rely on Definition 7 in \cite{bbfpr2022} of a self-financing strategy to define the cost process. 
\begin{definition}
    We say that a trading strategy $\varphi=(\xi, \psi,\psi^f)$ with $\xi \in \Theta_X^{\mathbb{G},\tau}$ is a \emph{self-financing strategy} inclusive of a defaultable, collateralized contract $(A,R,C, \tau)$, if the value process $V^p_t(\varphi,A^{R,C})=\psi_t^f B_t^f$ satisfies on $[0,\tau \wedge T]$ 
    \begin{align} \label{eq:self-financing}
        d V_t^p(\varphi,A^{R,C})&= \psi^f_t dB_t^f + \sum_{i=1}^d \left( \psi_t^i dB_t^i + \xi_t^i dS_t^i\right)+dA_t^{R,C}.
    \end{align}
\end{definition}

For a self-financing strategy $\varphi$ inclusive of a defaultable, collateralized contract $(A,R,C, \tau)$ it holds
\begin{align}
d V_t^p(\varphi,A^{R,C})&= \psi^f_t dB_t^f + \sum_{i=1}^d \left( \psi_t^i dB_t^i + \xi_t^i dS_t^i\right)+dA_t^{R,C} \label{eq:CalculationStart} \\
    &=\psi^f_t dB_t^f - \sum_{i=1}^d  \xi_t^i S_t^i (B_t^i)^{-1} dB_t^i +  \sum_{i=1}^d \xi_t^i dS_t^i+dA_t^{R,C} \label{eq:RepoConstraint1}, \\
    &=\psi^f_t B_t^f \frac{dB_t^f}{B_t^f} + \sum_{i=1}^d  \xi_t^i B_t^i \frac{1}{B_t^i}\left( dS_t^i -r_t^i S_t^i dt \right) + dA_t^{R,C} \label{eq:DynamicsBank},  \\
    &=V_t^p(\varphi,A^{R,C})r_t^f dt +\sum_{i=1}^d  \xi_t^i B_t^i d \widetilde{S}_{t}^{i} + dA_t^{R,C}, \label{eq:DynamicsValuePortfolioNoDefaultNoCollateral}
\end{align}
where we use the repo-trading constraint $\psi_t^i=-(B_t^{i})^{-1}\xi_t^i S_t^i$ in \eqref{eq:RepoConstraint1}. Moreover, \eqref{eq:DynamicsBank} follows by \eqref{eq:DyS}. In \eqref{eq:DynamicsValuePortfolioNoDefaultNoCollateral} we use that $V^p(\varphi, A^{R,C})$ satisfies \eqref{eq:ValuePortfolioNoDefault} and $dB_t^f=r_t^fB_t^fdt$. By setting 
\begin{align} \label{eq:DiscountedValueProcess}
\widetilde{V}_t^p (\varphi,A^{R,C}):= \frac{V_t^p(\varphi,A^{R,C})}{B_t^f}, \quad t \in [0,\tau \wedge T],
\end{align}
we conclude with \eqref{eq:DynamicsValuePortfolioNoDefaultNoCollateral}
\begin{align}
   d \widetilde{V}_t^p (\varphi,A^{R,C})&=\frac{1}{B_t^f}dV_t^p(\varphi,  A^{R,C})-\frac{V_t^p(\varphi,  A^{R,C})}{B_t^f}r_t^f dt \nonumber \\
   &=\frac{1}{B_t^f} \left( V_t^p(\varphi,A^{R,C})r_t^f dt +\sum_{i=1}^d  \xi_t^i B_t^i d \widetilde{S}_{t}^{i} + dA_t^{R,C} \right) -\frac{V_t^p(\varphi,  A^{R,C})}{B_t^f}r_t^f dt \nonumber \\
   & = \frac{1}{B_t^f} \left( \sum_{i=1}^d  \xi_t^i B_t^i d \widetilde{S}_{t}^{i} + dA_t^{R,C} \right).
\end{align}
This motivates the following definition of the cost process of a portfolio.
\begin{definition} \label{def:CostProcessWithDefault} Let $(\int_0^t \frac{1}{B_u^f}d A_u^{R,C})_{t \in [0, \tau \wedge T]} \in L^2(\mathbb{G}, \mathbb{P}).$
The \emph{cost process of a portfolio} $\varphi=(\xi, \psi, \psi^f)$ with $\xi \in \Theta_X^{\mathbb{G}, \tau}$for a defaultable, collateralized contract $(A,R,C,\tau)$, denoted by $\mathcal{C}^{\varphi,A^{R,C}}=(\mathcal{C}_t^{\varphi,A^{R,C}})_{t \in [0,\tau \wedge T]}$, is given by
\begin{align} \label{eq:CostPocessPortfolioWithDefault}
    \mathcal{C}_t^{\varphi,A^{R,C}}:= - \int_{0}^t \frac{1}{B_u^f} dA_u^{R,C} + \widetilde{V}_t^p (\varphi,A^{R,C})  - \sum_{i=1}^d  \int_{0}^t  \xi_u^i \frac{B_u^i}{B_u^f} d \widetilde{S}_{u}^{\tau,i}, \quad t \in [0,\tau \wedge T].
\end{align}
\end{definition}
In the previous definition we make use of a slight abuse of notation by stating $(\int_0^t \frac{1}{B_u^f}d A_u^{R,C})_{t \in [0, \tau \wedge T]} \in L^2(\mathbb{G}, \mathbb{P})$, as processes in $L^2(\mathbb{G}, \mathbb{P})$ are defined on $[0,T]$ and not on $[0, \tau \wedge T]$. 

\begin{remark} \label{remark:IntegralStopped}
\begin{enumerate}
\item  Note that $\xi \in \Theta_X^{\mathbb{G},\tau}$ guarantees that the stochastic integrals in \eqref{eq:CostPocessPortfolioWithDefault} with respect to $\widetilde{S}^{\tau,i}$ are well-defined.
 \item By Theorem 3 in Chapter VIII in \cite{dm82} for $\xi \in \Theta_X^{\mathbb{F},\tau}$ and $\xi \in \Theta_X^{\mathbb{G},\tau}$ it holds
\begin{equation} \label{eq:IntegralStopping}
\int_0^{t \wedge \tau} \xi_u^i \frac{B_u^i}{B_u^f} d\widetilde{S}_u^i =  \int_0^t \textbf{1}_{\lbrace u \leq \tau \rbrace}\xi_u^i \frac{B_u^i}{B_u^f} d\widetilde{S}_u^{i}= \int_0^t \xi_u^i \frac{B_u^i}{B_u^f} d\widetilde{S}_u^{\tau,i}, \quad t \in [0, T], \, i=1,...,d.
\end{equation}
\end{enumerate}
\end{remark}
 In general, we assume that the bank invests in the risky assets, repo accounts and the treasury account according to the information available on the financial market and the occurrence of the defaults, i.e. $\xi$ is $\mathbb{G}$-predictable and $\psi^f$ is $\mathbb{G}$-adapted. 
\begin{definition} \label{def:L2StrategyGeneralSetting}
\begin{enumerate}
\item An \emph{$L^2_{\mathbb{G}}$-strategy} for a defaultable, collateralized contract $(A,R,C,\tau)$ is a portfolio $\varphi=(\xi, \psi, \psi^f)$ such that  $\xi \in \Theta_{X}^{\mathbb{G},\tau}$, $\psi$ determined by the repo-trading constraint \eqref{eq:repoConstraint} and $\psi^f$ is an $\mathbb{R}$-valued, $\mathbb{G}$-adapted process such that the discounted value process $\widetilde{V}(\varphi,A^{R,C})$  of the bank's wealth in \eqref{eq:DiscountedValueProcess} is right-continuous and square-integrable on $[0, \tau\wedge T].$
      \item An $L^2_{\mathbb{G}}$-strategy for a defaultable, collateralized contract $(A,R,C,\tau)$, denoted by $\varphi$, is called \emph{pseudo-locally risk-minimizing}, if $\varphi$ 
      \begin{itemize}
      \item is \emph{$0$-achieving}, i.e. $\widetilde{V}_{\tau \wedge T}(\varphi,A^{R,C})=0$ $\mathbb{P}$-a.s.,
      \item is \emph{mean-self-financing} for $(A,R,C,\tau)$, i.e. the cost process $\mathcal{C}^{\varphi, A^{R,C}}$ in \eqref{eq:CostPocessPortfolioWithDefault} is a $(\mathbb{G},\mathbb{P})$-martingale, (which is then also square-integrable),
      \item $\mathcal{C}^{\varphi, A^{R,C}}$ in \eqref{eq:CostPocessPortfolioWithDefault} is strongly orthogonal\footnote{i.e. for all $i=1,...,d$ the quadratic variation process $\left \langle \mathcal{C}^{\varphi, A^{R,C}}, M^{X^{\tau, i}} \right \rangle$ is equal to $0$ $\mathbb{P}$-a.s. Note that $X^{\tau} \in \mathbb{S}^2_{\textnormal{d}}(\mathbb{G}, \mathbb{P})$.} to the martingale part of $X^{\tau}$.
      \end{itemize}
      \item Analogously, we define a pseudo-locally risk-minimizing $L_{\mathbb{F}}^2$-strategy $\varphi=(\xi, \psi, \psi^f)$ by substituting $\xi\in \Theta_X^{\mathbb{G}, \tau}$  by $\xi \in \Theta_X^{\mathbb{F}, \tau}$ in the definitions.
\end{enumerate}
\end{definition}
\begin{remark}
By Theorem 1.6 in \cite{schweizer_survey} the definitions of a pseudo-locally risk-minimizing $L^2_{\mathbb{G}}$-strategy and of a locally risk-minimizing $L^2_{\mathbb{G}}$-strategy are equivalent, see Definition 1.5 in \cite{schweizer_survey}, if $X^{\tau}$ is continuous, satisfies the structure condition and the associated mean-variance tradeoff process is also continuous, see Appendix \ref{appendix}. In particular, as these conditions are satisfied in our setting, see Proposition \ref{prop:Existence}, we refer to a pseudo-locally risk-minimizing $L^2_{\mathbb{G}}$-strategy as locally risk-minimizing, to avoid the introduction of the very technical definition of the latter one. 
\end{remark}
We introduce the notation $\mathcal{L}=(\mathcal{L}_t)_{t \in [0,\tau \wedge T]}$ for the payment stream
\begin{align}\label{eq:PaymentStreamL}
\mathcal{L}_t:= \int_0^t \frac{1}{B_u^f}dA_u^{R,C} - \frac{C_t}{B_t^f}\textbf{1}_{\lbrace t < \tau \rbrace} + \frac{R_{\tau}}{B_{\tau}^f}\textbf{1}_{\lbrace t\geq \tau  \rbrace  },
\end{align}
and assume from now on that $\mathcal{L} \in L^2(\mathbb{G},\mathbb{P}).$

Our aim is to link the existence of a locally risk-minimizing $L^2_{\mathbb{G}}$-strategy for a defaultable collateralized contract $(A,R,C,\tau)$ to the existence of a (stopped) F\"ollmer-Schweizer decomposition of the random variable $\mathcal{L}_{\tau \wedge T}$ in \eqref{eq:PaymentStreamL}.
In the following, we denote by $\mathcal{M}_0^2(\mathbb{G}, \mathbb{P})$ the space of all right-continuous square-integrable $(\mathbb{G}, \mathbb{P})$-martingales $M=(M_t)_{t \in [0, \tau \wedge T]}$ with $M_0=0.$
\begin{definition} \label{def:FoellmerSchweizerWithDefault}
The random variable $\mathcal{L}_{\tau \wedge T}$ in \eqref{eq:PaymentStreamL} admits a \emph{(stopped) $\mathbb{G}$-F\"ollmer-Schweizer decomposition} if it can be written as
    \begin{align} \label{eq:FoellmerSchweizerWithDefault}
     -\mathcal{L}_{\tau \wedge T} = h_0 + \sum_{i=1}^d  \int_{0}^{T}  {\xi}_u^{\mathcal{L},i} \frac{B_u^i}{B_u^f} d \widetilde{S}_{u}^{\tau,i} + {H}_{\tau \wedge T}^{\mathcal{L}} \quad \mathbb{P}\text{-a.s.,}
    \end{align}
    where $h_0 \in \mathbb{R}$, $\xi^{\mathcal{L}} \in \Theta_{X}^{\mathbb{G},\tau},$ ${H}^{\mathcal{L}}=({H}_t^{\mathcal{L}})_{t \in [0,\tau \wedge T]} \in \mathcal{M}_0^2(\mathbb{G}, \mathbb{P})$ and is strongly orthogonal to the martingale component of $X^{\tau}.$ If we find a decomposition of $\mathcal{L}_{\tau \wedge T}$ as in \eqref{eq:FoellmerSchweizerWithDefault} with $\xi^{\mathcal{L}} \in \Theta_{X}^{\mathbb{F},\tau}$, then it is called a \emph{(stopped) $\mathbb{F}$-F\"ollmer-Schweizer decomposition}.
\end{definition}

\begin{proposition} \label{prop:EquivalenceLRMFS}
    The defaultable, collateralized contract $(A,R,C,\tau)$ admits a locally risk-minimizing $L^2_{\mathbb{G}}$-strategy $\varphi=(\xi,\psi, \psi^f)$ if and only if the random variable $\mathcal{L}_{\tau \wedge T}$ in \eqref{eq:PaymentStreamL} admits an $\mathbb{G}$-F\"ollmer-Schweizer decomposition as in Definition \ref{def:FoellmerSchweizerWithDefault}. In that case $\varphi=(\xi,\psi, \psi^f)$ is given by
    \begin{align}
         &\xi_t=\xi^{\mathcal{L}}_t, \quad \quad \psi^i_t=- \frac{\xi^{\mathcal{L},i}_t S^i_t}{B^i_t}, \quad t \in [0,\tau \wedge T], \, i=1,...,d, \label{eq:SpecificationLRMS1}\\
        &\psi^f_t= h_0 + \sum_{i=1}^d  \int_{0}^{t}  {\xi}_u^{\mathcal{L},i} \frac{B_u^i}{B_u^f} d \widetilde{S}_{u}^{\tau,i} + {H}_{t}^{\mathcal{L}} + \int_0^{t} \frac{1}{B_u^f} dA_u^{R,C}, \quad t \in [0,\tau \wedge T].\label{eq:SpecificationLRMS2}
    \end{align}
\end{proposition}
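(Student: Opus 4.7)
The plan is to translate between the data of a locally risk-minimizing strategy and that of a F\"ollmer-Schweizer decomposition via the identity built into the cost process. Rearranging Definition \ref{def:CostProcessWithDefault} gives
\[
\widetilde{V}_t^p(\varphi, A^{R,C}) = \mathcal{C}_t^{\varphi, A^{R,C}} + \int_0^t \frac{1}{B_u^f}\, dA_u^{R,C} + \sum_{i=1}^d \int_0^t \xi_u^i \frac{B_u^i}{B_u^f}\, d\widetilde{S}_u^{\tau,i},
\]
so that the cost process plays exactly the role of the strongly orthogonal martingale term once the stochastic integral against $X^\tau$ and the cumulative discounted cashflow are stripped from $\widetilde{V}^p$. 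This correspondence drives both implications.

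For the forward direction, assume $\varphi = (\xi, \psi, \psi^f)$ is locally risk-minimizing. Set $h_0 := \mathcal{C}_0^{\varphi, A^{R,C}} = \widetilde{V}_0^p(\varphi, A^{R,C})$, $\xi^{\mathcal{L}} := \xi$ and $H_t^{\mathcal{L}} := \mathcal{C}_t^{\varphi, A^{R,C}} - h_0$. By mean-self-financing $H^{\mathcal{L}} \in \mathcal{M}_0^2(\mathbb{G}, \mathbb{P})$, and by the orthogonality clause of Definition \ref{def:L2StrategyGeneralSetting} it is strongly orthogonal to the martingale component of $X^\tau$. Moreover $\xi^{\mathcal{L}} = \xi \in \Theta_X^{\mathbb{G}, \tau}$. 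Evaluating the displayed identity at $t = \tau \wedge T$, using the $0$-achieving condition $\widetilde{V}_{\tau \wedge T}(\varphi, A^{R,C}) = 0$ (which via Definition \ref{def:ValueProcessWealth} fixes $\widetilde{V}^p_{\tau \wedge T}$ separately on $\{\tau > T\}$ and on $\{\tau \leq T\}$), and matching the resulting boundary terms with the explicit form of $\mathcal{L}_{\tau \wedge T}$ coming from \eqref{eq:PaymentStreamL} together with \eqref{eq:CashflowDefaultCollateralized}--\eqref{eq:CashflowDefaultCollateralized1}, one obtains
\[
-\mathcal{L}_{\tau \wedge T} = h_0 + \sum_{i=1}^d \int_0^T \xi_u^{\mathcal{L},i} \frac{B_u^i}{B_u^f}\, d\widetilde{S}_u^{\tau,i} + H_{\tau \wedge T}^{\mathcal{L}},
\]
after rewriting the stopped integral in the two equivalent forms of Remark \ref{remark:IntegralStopped}. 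The identification of $\psi^f$ with \eqref{eq:SpecificationLRMS2} is then automatic from the rearranged cost-process identity.

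For the converse, given a $\mathbb{G}$-F\"ollmer-Schweizer decomposition with data $(h_0, \xi^{\mathcal{L}}, H^{\mathcal{L}})$, define $\varphi = (\xi, \psi, \psi^f)$ by \eqref{eq:SpecificationLRMS1}--\eqref{eq:SpecificationLRMS2}. Then $\xi = \xi^{\mathcal{L}} \in \Theta_X^{\mathbb{G}, \tau}$ by assumption, $\psi$ obeys the repo-trading constraint by construction, and $\psi^f$ is $\mathbb{G}$-adapted; right-continuity and square-integrability of $\widetilde{V}^p(\varphi, A^{R,C}) = \psi^f$ follow because each summand in \eqref{eq:SpecificationLRMS2} has those properties, namely $H^{\mathcal{L}} \in \mathcal{M}_0^2(\mathbb{G}, \mathbb{P})$, the stochastic integral belongs to $\mathcal{S}^2_{\textnormal{d}}(\mathbb{G}, \mathbb{P})$ because $\xi^{\mathcal{L}} \in \Theta_X^{\mathbb{G}, \tau}$, and the cashflow integral lies in $L^2(\mathbb{G}, \mathbb{P})$ by the standing hypothesis $\mathcal{L} \in L^2(\mathbb{G}, \mathbb{P})$. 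Substituting \eqref{eq:SpecificationLRMS2} into \eqref{eq:CostPocessPortfolioWithDefault} the two matching stochastic integrals and the two matching cashflow integrals cancel, leaving $\mathcal{C}_t^{\varphi, A^{R,C}} = h_0 + H_t^{\mathcal{L}}$; this is a $(\mathbb{G}, \mathbb{P})$-martingale strongly orthogonal to the martingale component of $X^\tau$, which covers both mean-self-financing and the orthogonality requirement. The $0$-achieving property then follows by evaluating $\widetilde{V}^p_{\tau \wedge T}$ via \eqref{eq:SpecificationLRMS2}, using the F\"ollmer-Schweizer identity to substitute $-\mathcal{L}_{\tau \wedge T}$ for $h_0 + \sum_i \int_0^T \xi^{\mathcal{L},i}_u \frac{B_u^i}{B_u^f} d\widetilde{S}_u^{\tau,i} + H^{\mathcal{L}}_{\tau \wedge T}$, and then reading off $V_{\tau \wedge T}(\varphi, A^{R,C}) = 0$ from Definition \ref{def:ValueProcessWealth}.

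The delicate step is the algebraic bookkeeping of the jump at $\tau$ on $\{\tau \leq T\}$: the cashflow $A^{R,C}$ receives an $R_\tau$ impulse by \eqref{eq:CashflowDefaultCollateralized}, the collateral is frozen at $C_{\tau-}$ via \eqref{eq:NoCollateralPosted}, and the wealth process in Definition \ref{def:ValueProcessWealth} switches from $V^p - C$ before default to $V^p$ at default. Ensuring that the boundary terms produced by the $0$-achieving condition on $\widetilde{V}^p_{\tau \wedge T}$ line up exactly with the $C_T/B_T^f$ and $R_\tau/B_\tau^f$ contributions in the explicit form of $\mathcal{L}_{\tau \wedge T}$ is the only nontrivial identity; once this is in place the martingale and orthogonality statements transfer mechanically via $\mathcal{C}^{\varphi, A^{R,C}} = h_0 + H^{\mathcal{L}}$.
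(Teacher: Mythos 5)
Your proof is correct and takes essentially the same route as the paper's: you rearrange the cost-process identity so that $\widetilde{V}^p$ splits into the initial value, the stochastic integral against $\widetilde{S}^\tau$, the cost-process martingale, and the cashflow integral, identify $h_0 = \mathcal{C}_0$, $\xi^{\mathcal{L}} = \xi$, $H^{\mathcal{L}} = \mathcal{C} - \mathcal{C}_0$, and translate the $0$-achieving, mean-self-financing, and orthogonality clauses into the F\"ollmer--Schweizer data (and back), with the boundary terms at $\tau \wedge T$ matched against the explicit form of $\mathcal{L}_{\tau \wedge T}$. The only cosmetic difference is that you present the two implications in the opposite order from the paper.
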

\begin{proof}  
Suppose that the random variable $\mathcal{L}_{\tau \wedge T}$ admits an $\mathbb{G}$-F\"ollmer-Schweizer decomposition as in Definition \ref{def:FoellmerSchweizerWithDefault}. This means that it can be written as
\begin{align} \label{eq:FoellmerSchweizerWithDefaultProof}
       - \mathcal{L}_{\tau \wedge  T}= -\int_0^{\tau \wedge T} \frac{1}{B_u^f} dA_u^{R,C} + \frac{C_{T}}{B_{T}^f} \textbf{1}_{\lbrace \tau >T \rbrace  } - \frac{R_{\tau}}{B_{\tau}^f} \textbf{1}_{\lbrace \tau \leq T \rbrace} = h_0 + \sum_{i=1}^d  \int_{0}^{T}  {\xi}_u^{\mathcal{L},i} \frac{B_u^i}{B_u^f} d \widetilde{S}_{u}^{\tau,i} + {H}_{\tau \wedge T}^{\mathcal{L}} \quad \mathbb{P}\text{-a.s.,}
    \end{align}
with $h_0 \in \mathbb{R}$, ${\xi}^{\mathcal{L}} \in \Theta_{X}^{\mathbb{G},\tau},$ ${H}^{\mathcal{L}}=({H}_t^{\mathcal{L}})_{t \in [0,\tau \wedge T]} \in \mathcal{M}_0^2(\mathbb{G}, \mathbb{P})$ and strongly orthogonal to the martingale component of $X^{\tau}.$
We now construct a locally risk-minimizing $L^2_{\mathbb{G}}$-strategy $\varphi=(\xi, \psi, \psi^f)$. We set $\xi:={\xi}^{\mathcal{L}}$ and 
by the repo-trading constraint \eqref{eq:repoConstraint} it holds for $\psi=(\psi^1,..., \psi^d)$ 
$$
\psi^i := - \frac{\xi^i S^i}{B^i}, \quad i=1,...,d.
$$
Moreover, we define
\begin{align} \label{eq:conditionBankAccountWithCollateralDefault}
  \psi_t^f:=h_0 + \sum_{i=1}^d  \int_{0}^t  {\xi}_u^i \frac{B_u^i}{B_u^f} d \widetilde{S}_{u}^{\tau,i} + {H}_{t }^{\mathcal{L}} + \int_0^t \frac{1}{B_u^f} dA_u^{R,C}, \quad t \in [0, \tau \wedge T],
\end{align}
which is $\mathcal{G}_t$-adapted.  
Obviously, it holds $\xi \in \Theta_{X}^{\mathbb{G},\tau}$. {With this choice of $\varphi=(\xi, \psi, \psi^f)$ by \eqref{eq:ValuePortfolioNoDefault} and Definition \ref{def:ValueProcessWealth} it holds $\psi_t^f=\widetilde{V}_t^p(\varphi, A^{R,C})=\widetilde{V}_t(\varphi, A^{R,C})+ \frac{C_t}{B_t^f}\textbf{1}_{\lbrace t < \tau \rbrace} -\frac{R_{\tau}}{B_{\tau}^f} \textbf{1}_{\lbrace t \geq \tau  \rbrace}$ and thus 
\begin{align}
\widetilde{V}_t(\varphi, A^{R,C})&=\psi_t^f -\frac{C_t}{B_t^f} \textbf{1}_{\lbrace t < \tau \rbrace}+ \frac{R_{\tau}}{B_{\tau}^f}\textbf{1}_{\lbrace t \geq \tau \rbrace} \nonumber \\
&=h_0 + \sum_{i=1}^d  \int_{0}^t  {\xi}_u^i \frac{B_u^i}{B_u^f} d \widetilde{S}_{u}^{\tau,i} + {H}_{t }^{\mathcal{L}} + \int_0^t \frac{1}{B_u^f} dA_u^{R,C} - \frac{C_t}{B_t^f} \textbf{1}_{\lbrace t < \tau \rbrace}+ \frac{R_{\tau}}{B_{\tau}^f}\textbf{1}_{\lbrace t \geq \tau \rbrace} \nonumber \\
&= h_0 + \sum_{i=1}^d  \int_{0}^t  {\xi}_u^i \frac{B_u^i}{B_u^f} d \widetilde{S}_{u}^{\tau,i} + {H}_{t }^{\mathcal{L}}+ \mathcal{L}_t, \nonumber
\end{align}
which is again square-integrable and right-continuous as ${\xi} \in \Theta_{X}^{\mathbb{G},\tau},$ ${H}^{\mathcal{L}}=({H}_t^{\mathcal{L}})_{t \in [0,\tau \wedge T]} \in \mathcal{M}_0^2(\mathbb{G}, \mathbb{P})$ and $(\mathcal{L}_t)_{t \in [0,\tau \wedge T]} \in L^2(\mathbb{G}, \mathbb{P}).$ This allows to conclude that $\varphi$ is a $L^2_{\mathbb{G}}$-strategy.
}
The corresponding cost process $C^{\varphi,A^{R,C}}$ of the strategy $\varphi=(\xi, \psi, \psi^f)$ is given for $t \in [0, \tau \wedge T]$ by
    \begin{align} 
    \mathcal{C}_t^{\varphi,A^{R,C}}&= - \int_{0}^t \frac{1}{B_u^f} dA_u^{R,C} + \widetilde{V}_t^p (\varphi,A^{R,C})  - \sum_{i=1}^d  \int_{0}^t  \xi_u^i \frac{B_u^i}{B_u^f} d \widetilde{S}_{u}^{\tau,i} \nonumber \\
    &= - \int_{0}^t \frac{1}{B_u^f} dA_u^{R,C} + h_0 + \sum_{i=1}^d  \int_{0}^t  {\xi}_u^i \frac{B_u^i}{B_u^f} d \widetilde{S}_{u}^{\tau,i} + {H}_t^{\mathcal{L}} + \int_0^t \frac{1}{B_u^f} dA_u^{R,C}  - \sum_{i=1}^d  \int_{0}^t  \xi_u^i \frac{B_u^i}{B_u^f} d \widetilde{S}_{u}^{\tau,i} \nonumber \\
     &=  h_0 +  {H}_t^{\mathcal{L}}, \nonumber 
\end{align}
from \eqref{eq:CostPocessPortfolioWithDefault}. 
Thus, by the definition of the process $H^{\mathcal{L}}$ in the $\mathbb{G}$-F\"ollmer-Schweizer decomposition the cost process $\mathcal{C}^{\varphi,A^{R,C}}$ belongs to $ \mathcal{M}_0^2(\mathbb{G}, \mathbb{P})$ and is strongly orthogonal to the martingale components of $X^{\tau}$. Hence, $\varphi$ is mean-self-financing. Moreover, by combining \eqref{eq:FoellmerSchweizerWithDefaultProof} and \eqref{eq:conditionBankAccountWithCollateralDefault} it follows
\begin{align*} 
    \widetilde{V}_{\tau \wedge T}^p(\varphi, A^{R,C})&=h_0 + \sum_{i=1}^d  \int_{0}^{T}  {\xi}_u^i \frac{B_u^i}{B_u^f} d \widetilde{S}_{u}^{\tau,i} + {H}_{\tau \wedge T}^{\mathcal{L}}  + \int_0^{\tau \wedge T}  \frac{1}{B_u^f} dA_u^{R,C}= \frac{C_{T}}{B_{T}^f} \textbf{1}_{\lbrace \tau >T \rbrace  } - \frac{R_{\tau}}{B_{\tau}^f} \textbf{1}_{\lbrace \tau \leq T \rbrace} \quad \mathbb{P}\text{-a.s.,}
\end{align*}
i.e.  $\varphi$ is $0$-achieving. 
\\
Conversely, let $\varphi$ be a locally risk-minimizing $L^2_{\mathbb{G}}$-strategy for the defaultable, collateralized contract $(A,R,C, \tau)$. As $\varphi$ is $0$-achieving it holds $\widetilde{V}_{\tau \wedge T}^p(\varphi, A^{R,C})=\frac{C_{T}}{B_{T}^f} \textbf{1}_{\lbrace \tau >T \rbrace  } - \frac{R_{\tau}}{B_{\tau}^f} \textbf{1}_{\lbrace \tau \leq T \rbrace}$. By using the definition of $\mathcal{C}^{\varphi,A^{R,C}}$ in \eqref{eq:CostPocessPortfolioWithDefault} we get
\begin{align*}
0&=\widetilde{V}_{\tau \wedge T}^p(\varphi,A^{R,C})-\frac{C_{T}}{B_{T}^f} \textbf{1}_{\lbrace \tau >T \rbrace  } + \frac{R_{\tau}}{B_{\tau}^f} \textbf{1}_{\lbrace \tau \leq T \rbrace} \\
&= \mathcal{C}_{\tau \wedge T}^{\varphi,A^{R,C}}+ \int_{0}^{\tau \wedge T} \frac{1}{B_u^f} dA_u^{R,C} +  \sum_{i=1}^d  \int_{0}^{ T}  \xi_u^i \frac{B_u^i}{B_u^f} d \widetilde{S}_{u}^{\tau,i} -\frac{C_{T}}{B_{T}^f} \textbf{1}_{\lbrace \tau >T \rbrace  } + \frac{R_{\tau}}{B_{\tau}^f} \textbf{1}_{\lbrace \tau \leq T \rbrace} \quad \mathbb{P}\text{-a.s,}
\end{align*}
or equivalently, 
\begin{align*}
- \mathcal{L}_{\tau \wedge T} = \mathcal{C}_0^{\varphi,A^{R,C}} + \left(\mathcal{C}_{\tau \wedge T}^{\varphi,A^{R,C}}-\mathcal{C}_0^{\varphi,A^{R,C}}\right)+ \sum_{i=1}^d  \int_{0}^{ T}  \xi_u^i \frac{B_u^i}{B_u^f} d \widetilde{S}_{u}^{\tau,i}  \quad \mathbb{P}\text{-a.s.}
\end{align*}
Thus choosing
$$
\xi^{\mathcal{L}}_t:= \xi_t, \quad h_0:=\mathcal{C}_0^{\varphi,A^{R,C}} \quad \text{and}\quad {H}^{\mathcal{L}}_{t}:=\mathcal{C}_{\tau \wedge t}^{\varphi,A^{R,C}}-\mathcal{C}_0^{\varphi,A^{R,C}}, \quad t \in [0,\tau \wedge T],
$$
provides an $\mathbb{G}$-F\"ollmer-Schweizer decomposition.
\end{proof}
We now prove an existence result for a (stopped) $\mathbb{G}$-Föllmer-Schweizer decomposition. For the reader's convenience we provide the definition of the so-called structure condition and the mean-variance tradeoff process in Appendix \ref{theorem:Schweizer}.
\begin{proposition} \label{prop:Existence}
Let $X^{\tau}=(X^{\tau}_t)_{t \in [0, \tau \wedge T]}$ and $\mathcal{L}=(\mathcal{L}_t)_{t \in [0,\tau \wedge T]}$ be as in \eqref{eq:SemimartingaleDecomposotionX} and \eqref{eq:PaymentStreamL}, respectively. 
Then, $X^{\tau}$ satisfies the structure condition and the associated mean-variance tradeoff process $K^{X^{\tau}}=(K^{X^{\tau}}_t)_{t \in [0, \tau \wedge T]}$ is uniformly bounded and continuous. Thus, there exists a unique $\mathbb{G}$-F\"ollmer-Schweizer decomposition for $\mathcal{L}_{\tau \wedge T}$ and a unique locally risk-minimizing $L_{\mathbb{G}}^2$-strategy $\varphi=(\xi, \psi, \psi^f)$ for the defaultable, collateralized contract $(A,R,C,\tau)$.
\end{proposition}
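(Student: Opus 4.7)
The plan is to verify directly from the semimartingale decomposition of $X^{\tau}$ that the structure condition holds with an explicitly computable density, to use the uniform bound \eqref{eq:BoundsExistenceSolutionDefaults} on the market prices of risk to control the mean-variance tradeoff, and then to invoke a standard existence result (as recalled in the Appendix) to obtain the F\"ollmer-Schweizer decomposition, from which the locally risk-minimizing strategy will follow via Proposition~\ref{prop:EquivalenceLRMFS}.

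More concretely, I would first read off from \eqref{eq:SemimartingaleDecomposotionX} that, for each $i=1,\ldots,d$, the canonical decomposition of $X^{\tau,i}$ in $\mathcal{S}^2_{\textnormal{d}}(\mathbb{G},\mathbb{P})$ has martingale part $M^{X^{\tau,i}}_t=\int_0^t \frac{S_u^{\tau,i}}{B_u^f}\sigma^i(u,S_u^{\tau})dW^{\tau,f,i}_u$ and finite-variation part $A^{X^{\tau,i}}_t=\int_0^t \frac{S_u^{\tau,i}}{B_u^f}(\mu^i(u,S_u^\tau)-r_u^i)du$. Since $W^{f,1},\ldots,W^{f,d}$ are components of a $d$-dimensional Brownian motion and $\sigma^i\neq 0$, the matrix-valued process $\langle M^{X^{\tau}}\rangle$ is diagonal with entries $\bigl(\frac{S_u^{\tau,i}}{B_u^f}\bigr)^2(\sigma^i(u,S_u^\tau))^2\,du$. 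Setting
\[
\alpha^i_u:=\frac{\mu^i(u,S_u^\tau)-r_u^i}{\frac{S_u^{\tau,i}}{B_u^f}(\sigma^i(u,S_u^\tau))^2}
\]
yields $dA^{X^{\tau,i}}_u=\alpha^i_u\,d\langle M^{X^{\tau,i}}\rangle_u$, which is precisely the structure condition for $X^\tau$.

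Next I would compute the mean-variance tradeoff process,
\[
K^{X^{\tau}}_t=\sum_{i=1}^d\int_0^{\tau\wedge t}(\alpha^i_u)^2\,d\langle M^{X^{\tau,i}}\rangle_u =\sum_{i=1}^d\int_0^{\tau\wedge t}\!\left(\frac{\mu^i(u,S_u^\tau)-r_u^i}{\sigma^i(u,S_u^\tau)}\right)^{\!2}\!du,
\]
and use the uniform bound \eqref{eq:BoundsExistenceSolutionDefaults} to obtain $K^{X^{\tau}}_t\le T\sum_{i=1}^d K_i^2$ $\mathbb{P}$-a.s., which gives both uniform boundedness and (being an absolutely continuous Lebesgue integral of a bounded integrand) pathwise continuity of $K^{X^{\tau}}$. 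Continuity of $X^\tau$ itself is clear, since each $S^i$, the repo accounts $B^i$ and the treasury account $B^f$ are continuous.

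Once SC is established with a continuous and uniformly bounded $K^{X^{\tau}}$, and noting that by assumption $\mathcal{L}_{\tau\wedge T}\in L^2(\mathcal{G}_{\tau\wedge T},\mathbb{P})$, the classical existence/uniqueness theorem for the F\"ollmer-Schweizer decomposition (recalled in Appendix~\ref{appendix}, e.g. Theorem~3.4 in the survey \cite{schweizer_survey}) applies verbatim in the enlarged filtration $\mathbb{G}$: the predictable representation property holds in $\mathbb{G}$ w.r.t.\ $(W^f,M^B,M^C)$ by Remark~\ref{remark:PropertiesEnlargedFiltration}, so a unique decomposition \eqref{eq:FoellmerSchweizerWithDefault} exists, with $h_0\in\mathbb{R}$, $\xi^{\mathcal{L}}\in\Theta_X^{\mathbb{G},\tau}$ and $H^{\mathcal{L}}\in\mathcal{M}_0^2(\mathbb{G},\mathbb{P})$ strongly orthogonal to the martingale part of $X^\tau$. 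Uniqueness of the locally risk-minimizing $L^2_{\mathbb{G}}$-strategy $\varphi=(\xi,\psi,\psi^f)$ for $(A,R,C,\tau)$ then follows immediately from Proposition~\ref{prop:EquivalenceLRMFS}, with $\xi,\psi,\psi^f$ given by \eqref{eq:SpecificationLRMS1}--\eqref{eq:SpecificationLRMS2}. The main technical subtlety, in my view, is not the computation of $K^{X^\tau}$ but the careful justification that the SC and the existence theorem, which are usually stated on $[0,T]$, transfer to the stopped interval $[0,\tau\wedge T]$ in the filtration $\mathbb{G}$; this is handled by Remark~\ref{remark:PropertiesEnlargedFiltration} together with the stopping identity \eqref{eq:IntegralStopping}.
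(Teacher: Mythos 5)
Your proof is correct and follows essentially the same route as the paper's: identify the canonical decomposition of $X^{\tau}$, exhibit the density $\alpha^i$ (the paper calls it $\lambda^i$) relating the finite-variation part to $\langle M^{X^{\tau,i}}\rangle$, bound the mean-variance tradeoff via \eqref{eq:BoundsExistenceSolutionDefaults}, and conclude by the appendix existence theorem together with Proposition~\ref{prop:EquivalenceLRMFS}. The only cosmetic differences are notational (your bound $T\sum_{i}K_i^2$ versus the paper's $d(\max_i K_i)^2 T$) and your added remarks on the predictable representation property and the stopping subtlety, which the paper addresses implicitly.
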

\begin{proof}
For $i=1,...,d$ let $M^{X^{\tau,i}}$ and $A^{X^{\tau,i}}$ be given as in \eqref{eq:SemimartingaleDecomposotionX}.
Note that $\left \langle M^{X^{\tau,i}}, M^{X^{\tau,j}} \right\rangle_t=0$ for $i \neq j$, $t \in [0, \tau \wedge T]$ and 
$$ 
d\left \langle M^{X^{\tau,i}}\right \rangle_t=\left(\frac{S_t^{\tau,i}}{B_t^f}\right)^2 \left({\sigma}^{i}(t,S_t^{\tau})\right)^2 dt \quad \text{for }t \in [0, \tau \wedge T].
$$
Defining the process $\lambda^i=(\lambda^i_u)_{u \in [0, \tau \wedge T]}$ by
\begin{align}
    \lambda_u^i={\frac{\mu^i(u,S_u^{\tau}) -r_u^i}{({\sigma}^{i}(u,S_u^{\tau}))^2\frac{S_u^{\tau,i}}{B_u^f}} }, \nonumber 
\end{align}
it follows for $t \in[0,\tau \wedge T]$ that
$$
A_t^{X^{\tau,i}}=\int_0^t  {\frac{\mu^i(u,S_u^{\tau}) -r_u^i}{({\sigma}^{i}(u,S_u^{\tau}))^2\frac{S_u^{\tau,i}}{B_u^f}} }
\left(\frac{S_u^{\tau,i}}{B_u^f}\right)^2 \left({\sigma}^{i}(u,S_u^{\tau})\right)^2  du= \int_0^t \lambda_u^i d \left \langle M^{X^{\tau,i}} \right \rangle_u.
$$
The mean-variance tradeoff $K^{X^{\tau}}$ is uniformly bounded in $t$ and $\omega$, as for $t \in[0,\tau \wedge T]$
\begin{align*}
K_t^{X^{\tau}}= \sum_{i=1}^d \int_0^{t} \left(\lambda_u^i\right)^2 \left(\frac{S_u^{\tau,i}}{B_u^f}\right)^2 \left({\sigma}^{i}(u,S_u^{\tau})\right)^2 du=\sum_{i=1}^d \int_0^t \left(\frac{\mu^i(u,S_u^{\tau}) -r_u^i}{{\sigma}^{i}(u,S_u^{\tau})}\right)^2 du  \leq d \left(\max_{i=1,...,d}K_i\right)^2T, 
\end{align*}
by using \eqref{eq:BoundsExistenceSolutionDefaults}. Thus, $X^{\tau}$ satisfies the structure condition, see Definition \ref{def:MeanVariance}.\\\
As $X^{\tau}$ is continuous, satisfies the structure condition and $K^{X^{\tau}}$ is uniformly bounded, there exists a unique $\mathbb{G}$-F\"ollmer-Schweizer decomposition of $\mathcal{L}_{\tau \wedge T}$ with respect to $X^{\tau}$ by Theorem \ref{theorem:Schweizer}. The existence and uniqueness of a locally risk-minimizing strategy follows immediately by Proposition \ref{prop:EquivalenceLRMFS}. 
\end{proof}
 We now briefly discuss the case if we allow the bank to invest in the risky assets and the respective repo accounts, based on the information from the financial market \emph{before} the first default, i.e. $\xi$ is $\mathbb{F}$-predictable and we allow to adjust with $\psi^f,$ which is $\mathbb{G}$-adapted. 

\begin{remark} \label{remark:FoellmerSchweizerDifferentFiltration}
The existence of an $\mathbb{G}$-F\"ollmer-Schweizer decomposition of $\mathcal{L}_{\tau \wedge T}$ implies the existence of an $\mathbb{F}$-F\"ollmer-Schweizer decomposition of $\mathcal{L}_{\tau \wedge T}$. 
Let $\xi^{\mathcal{L}}\in \Theta_{X}^{\mathbb{G},\tau}$ as in Definition \ref{def:FoellmerSchweizerWithDefault}. Then, there exists an $\mathbb{F}$-predictable process $\overline{\xi}^{\mathcal{L}}$ such that 
 \begin{align} \label{eq:PreDefault}
  \textbf{1}_{\lbrace t \leq \tau \rbrace} \overline{\xi}^{\mathcal{L}}_t=\textbf{1}_{\lbrace t \leq \tau \rbrace} {\xi}_t^{\mathcal{L}}
\end{align}
holds, see e.g. \cite{bielecki_jeanblanc_rutkowski}. Therefore, we can rewrite an $\mathbb{G}$-F\"ollmer-Schweizer decomposition of $\mathcal{L}_{\tau \wedge T}$ with respect to ${\xi}^{\mathcal{L}}$ as follows
\begin{align} 
        -\mathcal{L}_{\tau \wedge T} &= h_0 + \sum_{i=1}^d  \int_{0}^{T}  {\xi}_u^{\mathcal{L},i} \frac{B_u^i}{B_u^f} d \widetilde{S}_{u}^{\tau,i} + {H}_{\tau \wedge T}^{\mathcal{L}}, \quad \mathbb{P}\text{-a.s.,} \nonumber \\ 
        &= h_0 + \sum_{i=1}^d  \int_{0}^{T}  \textbf{1}_{\lbrace u \leq \tau \rbrace}{\xi}_u^{\mathcal{L},i} \frac{B_u^i}{B_u^f} d \widetilde{S}_{u}^{i} + {H}_{\tau \wedge T}^{\mathcal{L}}, \quad \mathbb{P}\text{-a.s.,} \label{eq:IntegralStopped1}\\
         &= h_0 + \sum_{i=1}^d  \int_{0}^{T}  \textbf{1}_{\lbrace u \leq \tau \rbrace}\overline{\xi}_u^{\mathcal{L},i} \frac{B_u^i}{B_u^f} d \widetilde{S}_{u}^{i} + {H}_{\tau \wedge T}^{\mathcal{L}}, \quad \mathbb{P}\text{-a.s.,} \label{eq:Predefault1} \\
         &= h_0 + \sum_{i=1}^d  \int_{0}^{T}  \overline{\xi}_u^{\mathcal{L},i} \frac{B_u^i}{B_u^f} d \widetilde{S}_{u}^{\tau,i} + {H}_{\tau \wedge T}^{\mathcal{L}}, \quad \mathbb{P}\text{-a.s.,} \label{eq:IntegralStopped2}
    \end{align}
    where we use in \eqref{eq:IntegralStopped1} and \eqref{eq:IntegralStopped2} the equality \eqref{eq:IntegralStopping} and \eqref{eq:Predefault1} follows by \eqref{eq:PreDefault}. As $\overline{\xi}^{\mathcal{L}}$ satisfies the integrability condition in \eqref{eq:integrabilityTau}, it follows that $\overline{\xi}^{\mathcal{L}} \in \Theta_{X}^{\mathbb{F},\tau}$, and the $\mathbb{F}$- and $\mathbb{G}$-F\"ollmer-Schweizer decompositions coincide. 
  On the other hand, if there exists an $\mathbb{F}$-F\"ollmer-Schweizer decomposition of $\mathcal{L}_{\tau \wedge T}$ with ${\xi}^{\mathcal{L}} \in \Theta_{X}^{\mathbb{F},\tau},$ then $\xi^{\mathcal{L}}$ is also an element in $\Theta_{X}^{\mathbb{G},\tau}$, and provides an $\mathbb{G}$-F\"ollmer-Schweizer decomposition of $\mathcal{L}_{\tau \wedge T}$.
  Thus, $\mathcal{L}_{\tau \wedge T}$ admits an $\mathbb{G}$-F\"ollmer-Schweizer decomposition if and only if it admits an $\mathbb{F}$-F\"ollmer-Schweizer decomposition, and in this case the decompositions coincide.\\
  By using similar arguments it also follows that the defaultable, collateralized contract $(A,R,C,\tau)$ admits a locally risk-minimizing $L^2_{\mathbb{F}}$-strategy $\varphi$ if and only if the random variable $\mathcal{L}_{\tau \wedge T}$ admits an $\mathbb{F}$-F\"ollmer-Schweizer decomposition as in Definition \ref{def:FoellmerSchweizerWithDefault}.
\end{remark}

\section{Characterization of the F\"ollmer Schweizer decomposition via a BSDE}
\allowdisplaybreaks
In this section we derive a BSDE whose unique solution provides components of the $\mathbb{G}$-F\"ollmer-Schweizer decomposition of $\mathcal{L}_{\tau \wedge T}$.
To do so, we introduce some notation in order to define the solution of a BSDE.
\begin{notation}  
\begin{enumerate}
 \item We denote by $\cH^{2,d}_{T}(\mathbb{F},\PP)$ the subspace of all $\RR^d$-valued, $\FF$-adapted processes $X$ such that  
	\begin{align} \nonumber 
		\left\| X \right\|_{\cH^{2,d}_{T}}:=\sqrt{\mathbb{E}_{\PP}\left[\int_0^T \left|X_t\right|^2 dt\right]}<\infty. 
	\end{align}
 We set $\cH^{2,d}(\mathbb{F},\PP) := \cH^{2,d}_{T}(\mathbb{F},\PP)$.
 \item The subspace of all $\RR^d$-valued, $\FF$-adapted processes $X=(X_t)_{t \in [0,T]}$ such that 
	\begin{align} \nonumber 
		\left\| X \right\|_{\mathbb{S}^{2,d}_{T}}:=\sqrt{\EE_{\PP}\left[\sup_{t \,\in\, [0,T]}  \left|X_t\right|^2 \right]} < \infty 
	\end{align}
	is denoted by $\mathbb{S}^{2,d}_{T}(\FF,\PP).$ We set $\mathbb{S}^{2}(\FF,\PP) := \mathbb{S}^{2,1}_{T}(\FF,\PP)$.
 \item We denote by $\cH^{2,2}_{\lambda}(\FF, \PP)$ the space of $\FF$-adapted processes $X=(X_t)_{t \in [0,T]}$ with values in $\mathbb{R}^2$ such that 
\begin{align} \nonumber
	\left\| X \right\|_{\cH^{2,2}_{\lambda}}:= \sqrt{\EE_{\PP}\left[\int_0^T\left|X^1_t\right|^2 \lambda^{B}_t+\left|X^2_t\right|^2 \lambda^{C}_t dt \right]} < \infty .
\end{align}
\end{enumerate}
We use similar notations with respect to the filtration $\mathbb{G}$ and the upper integral boundary $\tau \wedge T.$
\end{notation}
\begin{theorem} \label{theorem:LinkBSDE}
Let $X^{\tau}=(X^{\tau}_t)_{t \in [0, \tau \wedge T]}$ and $\mathcal{L}=(\mathcal{L}_t)_{t \in [0,\tau \wedge T]}$ be as in \eqref{eq:SemimartingaleDecomposotionX} and \eqref{eq:PaymentStreamL}, respectively.
Then the $\mathbb{G}$-F\"ollmer-Schweizer decomposition of $\mathcal{L}_{\tau \wedge T} $ 
\begin{align}
   - \mathcal{L}_{\tau \wedge T} 
   &= Y_0 + \sum_{i=1}^d \int_0^T \xi^{\mathcal{L},i}_u \frac{B_u^i}{B_u^f} d\widetilde{S}_u^{\tau,i} + H_{\tau \wedge T}^{\mathcal{L}}, \label{eq:FoellmerSchweizerBSDEtheorem}
\end{align}
is given by 
\begin{align} \label{eq:SepcificationFSXi}
\xi^{\mathcal{L}, i}_t:= \frac{Z_t^{i}B_t^f}{S_t^{i}\sigma^i(t, S_t)} \quad \text{and} \quad \mathcal{H}_t^{\mathcal{L}}:= \sum_{j \in \lbrace B,C  \rbrace}\int_0^{t} U_u^j dM_u^j, \quad t \in [0, \tau \wedge T],
\end{align}
where {$(Y,Z,U^C, U^B) \in \mathbb{S}^2(\mathbb{G}, \mathbb{P}) \times \mathcal{H}^{2,d}(\mathbb{G},\mathbb{P}) \times \mathcal{H}^{2,2}_{\lambda}(\mathbb{G},\mathbb{P})$} is the unique solution to the $\mathbb{G}$-BSDE
\begin{align} \label{eq:BSDEGTheorem}
    Y_{\tau \wedge t}=  - \mathcal{L}_{\tau \wedge T} {+} \int_{\tau \wedge t}^{\tau \wedge T}  \sum_{i=1}^d  f^i(u,Z_u^i) du  - \sum_{i=1}^d\int_{\tau \wedge t}^{\tau \wedge T} Z^{i}_u dW_u^{ f,i} - \sum_{j \in \lbrace B,C  \rbrace} \int_{\tau \wedge t}^{\tau \wedge T} U_u^jdM_u^j
\end{align}
with $f^i:(\Omega \times [0,T] \times \mathbb{R},\mathscr{P}\otimes \mathcal{B}(\mathbb{R})) \to (\mathbb{R},\mathcal{B}(\mathbb{R}))$ defined by
\begin{align} \label{eq:ChoiceOfFDefaultsTheorem}
f^i(\omega, u,z):= -  z_u\frac{\mu^i(u,S_u(\omega))-r_u^i}{\sigma^{i}(u,S_u(\omega))}.
\end{align}
Here, $\mathscr{P}$ denotes the predictable $\sigma$-algebra on $\Omega \times [0,T].$
\end{theorem}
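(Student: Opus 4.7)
The plan is to establish well-posedness of the $\mathbb{G}$-BSDE \eqref{eq:BSDEGTheorem} and then algebraically transform its defining equation at time zero into the $\mathbb{G}$-Föllmer--Schweizer decomposition of Definition \ref{def:FoellmerSchweizerWithDefault}. The identifications \eqref{eq:SepcificationFSXi} will drop out automatically from this transformation once the Brownian stochastic integral of $Z$ is absorbed into an integral with respect to $X^\tau$, leaving the purely discontinuous piece built from $U^B,U^C$ as the orthogonal martingale.

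First I would verify that \eqref{eq:BSDEGTheorem} admits a unique solution. The driver $f^i(\omega,u,z)=-z\,(\mu^i(u,S_u)-r^i_u)/\sigma^i(u,S_u)$ is linear in $z$, and by \eqref{eq:BoundsExistenceSolutionDefaults} the coefficient is uniformly bounded by $K_i$, so $f^i$ is Lipschitz in $z$. The terminal condition $-\mathcal{L}_{\tau\wedge T}$ belongs to $L^2(\mathcal{G}_{\tau\wedge T},\mathbb{P})$ by the standing assumption $\mathcal{L}\in L^2(\mathbb{G},\mathbb{P})$. Combining this with the predictable representation property of $(W^f,M^B,M^C)$ in $\mathbb{G}$ (Remark \ref{remark:PropertiesEnlargedFiltration}) and the boundedness of $\lambda^B,\lambda^C$, the classical existence and uniqueness theory for Lipschitz BSDEs driven simultaneously by a Brownian motion and pure-jump martingales yields a unique tuple $(Y,Z,U^C,U^B)\in \mathbb{S}^2(\mathbb{G},\mathbb{P})\times\mathcal{H}^{2,d}(\mathbb{G},\mathbb{P})\times\mathcal{H}^{2,2}_\lambda(\mathbb{G},\mathbb{P})$.

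Next I would evaluate \eqref{eq:BSDEGTheorem} at $t=0$ and rearrange to obtain
\begin{align*}
-\mathcal{L}_{\tau\wedge T} = Y_0 - \sum_{i=1}^d \int_0^{\tau\wedge T} f^i(u,Z_u^i)\,du + \sum_{i=1}^d \int_0^{\tau\wedge T} Z_u^i\,dW_u^{f,i} + \sum_{j\in\{B,C\}}\int_0^{\tau\wedge T} U_u^j\,dM_u^j.
\end{align*}
Setting $\xi^{\mathcal{L},i}_u := Z_u^i B_u^f/(S_u^i \sigma^i(u,S_u))$ and invoking the semimartingale decomposition \eqref{eq:SemimartingaleDecomposotionX} of $X^{\tau,i}$ together with the stopping identity for $\mathbb{G}$-predictable integrands recalled in \eqref{eq:IntegralStopping}, one checks on $[0,\tau\wedge T]$ the pointwise equality
\begin{align*}
\xi^{\mathcal{L},i}_u \frac{B_u^i}{B_u^f}\,d\widetilde{S}_u^{\tau,i} \;=\; \xi^{\mathcal{L},i}_u\,dX_u^{\tau,i} \;=\; -f^i(u,Z_u^i)\,du + Z_u^i\,dW_u^{\tau,f,i}.
\end{align*}
Summing over $i$ and integrating, the finite-variation and Brownian pieces of the BSDE collapse into $\sum_i \int_0^T \xi^{\mathcal{L},i}(B^i/B^f)\,d\widetilde{S}^{\tau,i}$, while $H^{\mathcal{L}}_{\tau\wedge T}:=\sum_j\int_0^{\tau\wedge T}U^j\,dM^j$ absorbs the jump contribution, producing exactly \eqref{eq:FoellmerSchweizerBSDEtheorem}.

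Finally I would verify the three defining properties of Definition \ref{def:FoellmerSchweizerWithDefault}: admissibility $\xi^{\mathcal{L}}\in\Theta_X^{\mathbb{G},\tau}$ follows from $Z\in\mathcal{H}^{2,d}(\mathbb{G},\mathbb{P})$ together with boundedness of $r^i,r^f$ (hence of $B^i/B^f$), reducing \eqref{eq:integrabilityTau} to an $L^2$-bound on $Z$ via the identification just established; $H^{\mathcal{L}}\in\mathcal{M}_0^2(\mathbb{G},\mathbb{P})$ because $U^B,U^C\in\mathcal{H}^{2,2}_\lambda(\mathbb{G},\mathbb{P})$ and $\tilde\lambda^j$ is bounded; and strong orthogonality $\langle M^{X^{\tau,i}},H^{\mathcal{L}}\rangle\equiv 0$ is immediate because $M^{X^{\tau,i}}$ is a continuous Brownian stochastic integral while $H^{\mathcal{L}}$ is purely discontinuous. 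The main obstacle I expect lies in Step 1: invoking the BSDE well-posedness theorem for a driver and terminal condition that are $\mathbb{G}$-adapted rather than $\mathbb{F}$-adapted, and confirming that the natural space for the jump components is the weighted $\mathcal{H}^{2,2}_\lambda$. The enlarged-filtration predictable representation of Remark \ref{remark:PropertiesEnlargedFiltration} is the key enabling tool, and once in place the remainder of the argument is essentially algebraic.
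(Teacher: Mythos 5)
Your proof is correct and follows essentially the same route as the paper: establish BSDE well-posedness, rearrange the BSDE at $t=0$ to read off $\xi^{\mathcal{L}}$ and $H^{\mathcal{L}}$, and check the three requirements of the $\mathbb{G}$-Föllmer--Schweizer decomposition (admissibility of $\xi^{\mathcal{L}}$, $H^{\mathcal{L}}\in\mathcal{M}_0^2(\mathbb{G},\mathbb{P})$, strong orthogonality to the continuous Brownian part of $X^\tau$). The one place where the paper is more explicit is the BSDE well-posedness step: rather than invoking a generic existence result for Lipschitz BSDEs with jumps, it proves a separate Proposition (via a reduction to an $\mathbb{F}$-BSDE whose driver encodes the default intensities and $\phi^B,\phi^C$, and a transfer theorem from Cr\'epey--Song) and then cites Proposition \ref{prop:Existence} for uniqueness of the F\"ollmer--Schweizer decomposition itself, a point you leave implicit but which is needed to justify ``the'' decomposition.
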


Before proving this theorem we state an auxiliary result which guarantees existence and uniqueness of a solution to the $\mathbb{G}$-BSDE in \eqref{eq:BSDEGTheorem} and we link this solution to the solution of an $\mathbb{F}$-FBSDE.
\begin{proposition} \label{prop:ExistencUniquenessBSDE}
    Let $X^{\tau}=(X^{\tau}_t)_{t \in [0, \tau \wedge T]}$ and $\mathcal{L}=(\mathcal{L}_t)_{t \in [0,\tau \wedge T]}$ be as in \eqref{eq:SemimartingaleDecomposotionX} and \eqref{eq:PaymentStreamL}, respectively. Then, the $\mathbb{G}$-BSDE in \eqref{eq:BSDEGTheorem} has a unique solution $(Y,Z,U^C, U^B) \in \mathbb{S}^2(\mathbb{G}, \mathbb{P}) \times \mathcal{H}^{2,d}(\mathbb{G},\mathbb{P}) \times \mathbb{H}^{2,2}_{\lambda}(\mathbb{G},\mathbb{P})$. Moreover, let $(\overline{Y},\overline{Z}) \in  \mathbb{S}^2(\mathbb{F}, \mathbb{P}) \times \mathcal{H}^{2,d}(\mathbb{F}, \mathbb{P})$ be the unique solution of the $\mathbb{F}$-BSDE
    \begin{align}
d\overline{Y}_t&=-\sum_{i=1}^d  \overline{f}^i(t, \overline{Y}_t,\overline{Z}_t^i)dt {+} \sum_{i=1}^d \overline{Z}_t^i d W_t^{f,i},  \label{eq:FBSDE1} \\
\overline{Y}_T&= \overline{\mathcal{L}}_T, \label{eq:FBSDE2}
\end{align}
with $\overline{f}^i:(\Omega \times [0,T] \times \mathbb{R} \times \mathbb{R}, \mathscr{P} \otimes \mathcal{B}(\mathbb{R}^2))  \to (\mathbb{R},\mathcal{B}(\mathbb{R}))$, $\phi^C=(\phi^C_t)_{t \in [0,T]} \in \mathbb{L}^2(\mathbb{F}, \mathbb{P}), \phi^B=(\phi^B_t)_{t \in [0,T]}\in \mathbb{L}^2(\mathbb{F}, \mathbb{P})$ and $\overline{\mathcal{L}}_T \in L^2(\mathcal{F}_T, \mathbb{P})$ defined by
\allowdisplaybreaks
\begin{align} 
 \overline{f}^i(\omega,u,y,z)&:= f^i(\omega, u, z) + (\phi^C_u-y)\lambda_u^C+ (\phi^B_u-y)\lambda_u^B, \nonumber  \\
\overline{\mathcal{L}}_T&:=  {-}\int_{0}^{ T} \frac{1}{B_u^f} \left(dA_u + dC_u + \bar{r}_u^c C_u du\right) + \frac{C_T}{B_T^f} , \nonumber \\
\phi_t^C&:={-}{\int_0^t \frac{1}{B_u^f}\left(dA_u +dC_u + \bar{r}_u^c C_u du   \right) + \frac{(R^C \mathcal{Y}^+-\mathcal{Y}^-)}{B_{t}^f},} \nonumber \\
\phi_t^B&:={-} { \int_0^t \frac{1}{B_u^f}\left(dA_u +dC_u + \bar{r}_u^c C_u du   \right) + \frac{(\mathcal{Y}^+-R^B \mathcal{Y}^-)}{B_{t}^f},}\nonumber 
\end{align}
and $f^i$ given in \eqref{eq:ChoiceOfFDefaultsTheorem}.
On the one hand, the solution $(Y,Z,U^C, U^B) \in \mathbb{S}^2(\mathbb{G}, \mathbb{P}) \times \mathcal{H}^{2,d}(\mathbb{G},\mathbb{P}) \times \mathcal{H}^{2,2}_{\lambda}(\mathbb{G},\mathbb{P})$ of the $\mathbb{G}$-BSDE \eqref{eq:BSDEGTheorem} can be represented as
\begin{align}
    Y_t&= \overline{Y}_t \textbf{1}_{\lbrace t <\tau \wedge T \rbrace} 
+ \left( \phi^C_{\tau^C} \textbf{1}_{\lbrace \tau^C < \tau^B \wedge T \rbrace} + \phi^B_{\tau^B} \textbf{1}_{\lbrace \tau^B < \tau^C \wedge T \rbrace} + \overline{\mathcal{L}}_T \textbf{1}_{\lbrace \tau >T \rbrace}\right) \textbf{1}_{\lbrace t= \tau \wedge T \rbrace}, \nonumber \\
Z_t^i&=\overline{Z}_t^i \textbf{1}_{\lbrace t <\tau \wedge T \rbrace}, \quad i=1,...,d,\label{eq:RepresentationBSDEs1} \\
U^j_t&= \left(\phi^j_t-\overline{Y}_t\right) \textbf{1}_{\lbrace t <\tau \wedge T \rbrace}, \quad j \in \lbrace B, C \rbrace.\nonumber 
\end{align}
On the other hand, the solution of the $\mathbb{F}$-BSDE \eqref{eq:FBSDE1}-\eqref{eq:FBSDE2} is given by
\begin{align}\label{eq:RepresentationBSDEs2}
    \overline{Y}_t&= Y_{t \wedge {\tau}-}\nonumber, \\
    \overline{Z}_t^i&=Z_t^i \textbf{1}_{ \lbrace t< \tau \wedge T\rbrace}, \quad i=1,...,d.
\end{align}
\end{proposition}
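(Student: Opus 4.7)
The plan is a two-stage reduction. First I treat the $\mathbb{F}$-BSDE \eqref{eq:FBSDE1}-\eqref{eq:FBSDE2} as a classical Lipschitz BSDE driven by $W^f$, obtaining $(\overline{Y},\overline{Z})\in\mathbb{S}^2(\mathbb{F},\mathbb{P})\times\mathcal{H}^{2,d}(\mathbb{F},\mathbb{P})$. Then I define the candidate $(Y,Z,U^B,U^C)$ via \eqref{eq:RepresentationBSDEs1}, verify that it satisfies the $\mathbb{G}$-BSDE \eqref{eq:BSDEGTheorem}, and close the argument with a standard a priori estimate for uniqueness in $\mathbb{G}$.

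For the first stage, I verify that $\overline{f}^i(\omega,u,y,z)$ is uniformly Lipschitz in $(y,z)$: the $z$-dependence comes through $f^i$ with slope $|(\mu^i-r^i)/\sigma^i|\le K_i$ by \eqref{eq:BoundsExistenceSolutionDefaults}, and the $y$-slope is bounded by $\|\lambda^B\|_\infty+\|\lambda^C\|_\infty<\infty$. The free terms $\phi^j\lambda^j$ are square-integrable by the boundedness of $\lambda^j$, the boundedness of the interest rates \eqref{eq:BoundsInterestRates}, and the Lipschitz property of the collateralization map $g$. Likewise $\overline{\mathcal{L}}_T\in L^2(\mathcal{F}_T,\mathbb{P})$ follows from $\mathcal{L}\in L^2(\mathbb{G},\mathbb{P})$ and the bounded discount factor $1/B^f$. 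Pardoux-Peng's theorem then yields existence and uniqueness of $(\overline{Y},\overline{Z})$.

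For the second stage, I define $(Y,Z,U^B,U^C)$ by \eqref{eq:RepresentationBSDEs1}; the required integrability in $\mathbb{S}^2(\mathbb{G},\mathbb{P})$, $\mathcal{H}^{2,d}(\mathbb{G},\mathbb{P})$ and $\mathcal{H}^{2,2}_\lambda(\mathbb{G},\mathbb{P})$ is inherited from $(\overline{Y},\overline{Z})$, the $L^2$-integrability of $\phi^B,\phi^C,\overline{\mathcal{L}}_T$, and the boundedness of $\lambda^j$. To check the dynamics I decompose $Y$ as a process which coincides with $\overline{Y}$ on $\lsto 0,\tau\rsto$ and acquires a single jump of size $\phi^{j_\ast}_\tau-\overline{Y}_{\tau-}$ at $\tau$, where $j_\ast$ denotes the first defaulter. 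Applying It\^o's formula and using the identity
\begin{align*}
(\phi^j_u-\overline{Y}_u)\lambda^j_u\,du \;=\; U^j_u\tilde{\lambda}^j_u\,du \;=\; U^j_u\,dH^j_u-U^j_u\,dM^j_u,
\end{align*}
the predictable contribution $(\phi^j-\overline{Y})\lambda^j du$ in the $\mathbb{F}$-generator $\overline{f}^i$ is converted into an $M^j$-stochastic integral plus jump increments $U^j dH^j$. The latter vanish on $\{u<\tau\}$ and, at $u=\tau$, reproduce exactly the jump $\Delta Y_\tau$ thanks to the left-continuity of $\phi^j$ at $\tau^j$ (which relies on $\Delta A_\tau=0$, cf.\ Remark \ref{remark:PropertiesEnlargedFiltration}, and $C_{\tau\wedge T}=C_{\tau-}$ on $\{\tau\le T\}$ from \eqref{eq:NoCollateralPosted}). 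Matching the remaining drift with $\sum_i f^i(u,Z^i_u)$ completes the verification that $(Y,Z,U^B,U^C)$ solves \eqref{eq:BSDEGTheorem}.

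Uniqueness in the $\mathbb{G}$-BSDE follows by the standard a priori estimate: for two solutions $(Y,Z,U^B,U^C)$ and $(Y',Z',(U^B)',(U^C)')$, apply It\^o's formula to $e^{\beta t}|Y_{t\wedge\tau}-Y'_{t\wedge\tau}|^2$, absorb the linear-in-difference terms using the Lipschitz constants (bounded via $K_i$ and $\|\lambda^j\|_\infty$), and choose $\beta$ large enough to recover a contractive estimate in the norm of $\mathbb{S}^2\times\mathcal{H}^{2,d}\times\mathcal{H}^{2,2}_\lambda$. The converse representation \eqref{eq:RepresentationBSDEs2} is then immediate from $Y_{t\wedge\tau-}=\overline{Y}_t$ on $\{t\le\tau\wedge T\}$ and the analogous identification for $Z$. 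The main obstacle is the careful bookkeeping at the default time: verifying that $\Delta Y_\tau=\sum_j U^j_\tau\Delta M^j_\tau$ on $\{\tau\le T\}$ and that $U^B,U^C$ live in the weighted space $\mathcal{H}^{2,2}_\lambda(\mathbb{G},\mathbb{P})$; everything else is a transcription of the classical filtration-enlargement reduction for $\mathbb{G}$-BSDEs with jumps.
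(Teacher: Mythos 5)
Your proposal is correct in substance but follows a genuinely different route from the paper's proof. Both begin identically: you (and the paper) first verify that the generator $\overline{f}^i$ is uniformly Lipschitz in $(y,z)$ (the $z$-slope controlled by the market-price-of-risk bound $K_i$, the $y$-slope by $\|\lambda^B\|_\infty+\|\lambda^C\|_\infty$) and that the data $\overline{\mathcal{L}}_T,\phi^B,\phi^C$ are square-integrable, and then invoke a classical Brownian Lipschitz-BSDE result for $(\overline{Y},\overline{Z})$ — you cite Pardoux--Peng, the paper cites Theorem 3.1 in \cite{delong}, but these are the same toolbox. Where you diverge is in the passage from $\mathbb{F}$ to $\mathbb{G}$. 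The paper does not re-derive the reduction by hand: after rewriting the terminal condition $Y_{\tau\wedge T}$ in the split form $\overline{\mathcal{L}}_T\textbf{1}_{\{\tau>T\}}-\phi^C_{\tau^C}\textbf{1}_{\{\tau^C<\tau^B\wedge T\}}+\phi^B_{\tau^B}\textbf{1}_{\{\tau^B<\tau^C\wedge T\}}$, it simply verifies the hypotheses (A), (J), (B) of Theorem 4.3 in \cite{crepey_song_2015} — the immersion property from Hypothesis \ref{hp:H}, and the pre-default representability of $\mathbb{G}$-predictable processes from \eqref{eq:PreDefault} — and lets that theorem deliver at once the existence, uniqueness and the explicit representations \eqref{eq:RepresentationBSDEs1}--\eqref{eq:RepresentationBSDEs2}. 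You instead reconstruct the reduction \emph{ab initio}: you write down the candidate $(Y,Z,U^B,U^C)$, use It\^o's formula and the identity $(\phi^j_u-\overline{Y}_u)\lambda^j_u\,du=U^j_u\tilde\lambda^j_u\,du=U^j_u\,dH^j_u-U^j_u\,dM^j_u$ to convert the killing term of $\overline{f}^i$ into the $M^j$-martingale integrals and the jump $\Delta Y_\tau$, and finish with the standard $e^{\beta t}$ a priori estimate for uniqueness. What you gain is self-containedness: a reader who does not want to unpack the conditions of the Cr\'epey--Song reduction theorem can follow your verification line by line, and the It\^o computation makes explicit why $U^j_t=(\phi^j_t-\overline{Y}_t)$ is the correct jump coefficient. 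What you pay for it is exactly the bookkeeping you flag at the end — the predictable-version issues around $\tau$, the left-continuity of $\phi^j$ at $\tau^j$ — which the cited theorem packages away. Two small remarks: the fact $\Delta A_\tau=0$ comes from Remark \ref{remark:ConditioningGF}~(ii), which invokes Lemma 2.1 of \cite{crepey2015b}, not from Remark \ref{remark:PropertiesEnlargedFiltration}; and in your uniqueness step the It\^o argument must also control the jump terms via the weighted $\mathcal{H}^{2,2}_\lambda$ norm (quadratic covariation of $M^B,M^C$ with intensity weight $\tilde\lambda^j$), which you allude to but should be made explicit when absorbing the linear terms.
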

\begin{proof}
First note that the $\mathbb{F}$-BSDE \eqref{eq:FBSDE1}-\eqref{eq:FBSDE2} has a unique solution $(\overline{Y},\overline{Z}) \in \mathbb{S}^2(\mathbb{F}, \mathbb{P}) \times \mathcal{H}^{2,d}(\mathbb{F}, \mathbb{P}).$ This follows directly by Theorem 3.1 in \cite{delong} as ${\overline{\mathcal{L}}_T \in L^2(\mathcal{F}_T, \mathbb{P})}$ and for $(\omega,t,y,z), (\omega,t,\bar{y},\bar{z}) \in \Omega \times [0,T] \times \mathbb{R}^2$  the driver satisfies 
\allowdisplaybreaks
\begin{align*}
   \left \vert \sum_{i=1}^d \overline{f}^i(\omega, t,y,z) - \sum_{i=1}^d \overline{f}^i(\omega, t,\bar{y}, \bar{z})  \right \vert^2 
   &= \left \vert\sum_{i=1}^d  \left(( f^i (t,z)-f^i(t,\bar{z}))+  \lambda_t^C(\bar{y}-y)+ \lambda_t^B(\bar{y}-y)\right) \right \vert^2   \nonumber \\
    &\leq 2d \sum_{i=1}^d \left( \left \vert f^i (t,z)-f^i(t,\bar{z}) \right \vert^2  + \left \vert  \lambda_t^C + \lambda_t^B \right \vert^2 \left \vert y- \bar{y}\right \vert^2 \right)  \nonumber \\
    &\leq 2 d  \sum_{i=1}^d \left(\max_{i=1,...,d}K_i^2\vert z- \bar{z} \vert^2 +  2\sup_{t \in [0,T] }\left \vert \lambda_t^C+ \lambda_t^B \right\vert^2 \left \vert y- \bar{y}\right \vert^2 \right),
\end{align*}
and 
\begin{align*}
    \mathbb{E}_{\mathbb{P}} \left[ \int_0^T \left \vert \sum_{i=1}^d \overline{f}^i(t,0,0)\right \vert^2 dt\right]&=\mathbb{E}_{\mathbb{P}} \left[ \int_0^T \left \vert \sum_{i=1}^d \left( f^i(t,0) + \phi_t^C \lambda_t^C + \phi_t^B \lambda_t^B\right) \right \vert^2 dt\right]  \nonumber \\
    & \leq d^2 \max_{j \in \lbrace B, C \rbrace} \sup_{t \in [0,T]} \left \vert \lambda_t^j \right \vert^2 \mathbb{E}_{\mathbb{P}} \left[ \int_0^T \left \vert \phi_t^C + \phi_t^B \right \vert^2 dt\right]\nonumber \\
    & < \infty, 
\end{align*}
where we use \eqref{eq:BoundsExistenceSolutionDefaults},  $(\mathcal{L}_{t})_{t \in [0,\tau \wedge T]} \in \mathbb{L}^2(\mathbb{G}, \mathbb{P})$ and that the intensity processes $\lambda^B, \lambda^C$ are bounded by assumption.\\
Next, we rewrite the $\mathbb{G}$-BSDE in \eqref{eq:BSDEGTheorem} as 
\begin{align}
dY_t&= - \sum_{i=1}^d  f^i(t,Z_t)dt {+} \sum_{i=1}^d Z_t^i d W_t^{f,i} {+} \sum_{j \in \lbrace B,C  \rbrace} U_t^jdM_t^j, \quad t \in [0, \tau \wedge T], \label{eq:TerminalConditionG_0} \\ 
Y_{\tau \wedge T}&= - \int_{0}^{\tau \wedge T} \frac{1}{B_u^f} dA_u^{R,C} +\frac{C_{T}}{B_{T}^f} \textbf{1}_{\lbrace \tau >T \rbrace  } - \frac{R_{\tau}}{B_{\tau}^f} \textbf{1}_{\lbrace \tau \leq T \rbrace}. \label{eq:TerminalConditionG}
\end{align}
To prove the equivalence of the $\mathbb{F}$-BSDE in \eqref{eq:FBSDE1}-\eqref{eq:FBSDE2} up to $\tau-$ to the $\mathbb{G}$-BSDE in \eqref{eq:TerminalConditionG_0}-\eqref{eq:TerminalConditionG}, we apply Theorem 4.3 in \cite{crepey_song_2015}. To do so, we briefly verify that the necessary conditions of this result are satisfied. By Hypothesis \ref{hp:H} the immersion property holds, which guarantees that Condition (A) in \cite{crepey_song_2015} holds. Condition (J) is always satisfied and condition (B) holds due to \eqref{eq:PreDefault}. Finally, by rewriting the terminal condition in \eqref{eq:TerminalConditionG} we have
\begin{align}
&- \int_{0}^{\tau \wedge T} \frac{1}{B_u^f} dA_u^{R,C} +\frac{C_{T}}{B_{T}^f} \textbf{1}_{\lbrace \tau >T \rbrace  } - \frac{R_{\tau}}{B_{\tau}^f} \textbf{1}_{\lbrace \tau \leq T \rbrace} \nonumber \\
&= - \int_{0}^{\tau \wedge T} \frac{1}{B_u^f} dA_u^{R,C} +\frac{C_{T}}{B_{T}^f} \textbf{1}_{\lbrace \tau >T \rbrace  } -  \textbf{1}_{\lbrace{\tau^C< \tau^B \wedge T \rbrace}}\frac{(R^C \mathcal{Y}^+-\mathcal{Y}^-)}{B_{\tau^C}^f}+\textbf{1}_{\lbrace{\tau^B< \tau^C \wedge T \rbrace }}\frac{(\mathcal{Y}^+-R^B \mathcal{Y}^-)}{B_{\tau^B}^f} \nonumber \\
&=\left( - \int_{0}^{\tau \wedge T} \frac{1}{B_u^f} dA_u^{R,C} + \frac{C_T}{B_T^f} \right)\textbf{1}_{\lbrace \tau >T \rbrace  }-\left(  \int_{0}^{\tau \wedge T} \frac{1}{B_u^f} dA_u^{R,C} +  \frac{(R^C \mathcal{Y}^+-\mathcal{Y}^-)}{B_{\tau^C}^f}\right)\textbf{1}_{\lbrace{\tau^C< \tau^B \wedge T \rbrace}} \nonumber \\
& \quad - \left(  \int_{0}^{\tau \wedge T} \frac{1}{B_u^f} dA_u^{R,C} + \frac{(\mathcal{Y}^+-R^B \mathcal{Y}^-)}{B_{\tau^B}^f} \right)\textbf{1}_{\lbrace{\tau^B< \tau^C \wedge T \rbrace }} \nonumber \\
&=\left( - \int_{0}^{ T} \frac{1}{B_u^f} dA_u^{C} + \frac{C_T}{B_T^f} \right)\textbf{1}_{\lbrace \tau >T \rbrace  }-\left(  \int_{0}^{\tau^C} \frac{1}{B_u^f} dA_u^{C} +  \frac{(R^C \mathcal{Y}^+-\mathcal{Y}^-)}{B_{\tau^C}^f}\right)\textbf{1}_{\lbrace{\tau^C< \tau^B \wedge T \rbrace}} \nonumber \\
& \quad - \left(  \int_{0}^{\tau^B} \frac{1}{B_u^f} dA_u^{C} + \frac{(\mathcal{Y}^+-R^B \mathcal{Y}^-)}{B_{\tau^B}^f} \right)\textbf{1}_{\lbrace{\tau^B< \tau^C \wedge T \rbrace }} \nonumber \\
&=\overline{\mathcal{L}}_T\textbf{1}_{\lbrace \tau >T \rbrace  }- \phi^C_{\tau^C}\textbf{1}_{\lbrace{\tau^C< \tau^B \wedge T \rbrace}} + \phi^B
_{\tau^B}\textbf{1}_{\lbrace{\tau^B< \tau^C \wedge T \rbrace }}. \nonumber 
\end{align}
Thus, the representations in \eqref{eq:RepresentationBSDEs1} and \eqref{eq:RepresentationBSDEs2} follow immediately.
\end{proof}
We now prove Theorem \ref{theorem:LinkBSDE}.
\begin{proof}
By Proposition \ref{prop:Existence} there exists a unique $\mathbb{G}$-F\"ollmer-Schweizer decomposition of $\mathcal{L}_{\tau \wedge T}$. 
We start by proving that the decomposition of $\mathcal{L}_{\tau \wedge T} $ in \eqref{eq:FoellmerSchweizerBSDEtheorem} and \eqref{eq:BSDEGTheorem} satisfies the definition of an $\mathbb{G}$-F\"ollmer-Schweizer decomposition. First, we verify that $\xi^{\mathcal{L}}$ defined in \eqref{eq:SepcificationFSXi} is in the space $\Theta^{\mathbb{G}, \tau}_X$. By the regularity of $Z$ as a control process, it follows that $\xi^{\mathcal{L}}$ is $\mathbb{G}$-predictable. 
Thus, it remains to verify that \eqref{eq:integrabilityTau} holds. For fixed $i=1,...,d$ it holds
\allowdisplaybreaks
\begin{align}
    &\mathbb{E}_{\mathbb{P}} \left[\sum_{i=1}^d \int_0^{\tau \wedge T} \left \vert \xi_u^{\mathcal{L},i}\right \vert^2 d \langle M^{X^{i}}\rangle_u +\left( \sum_{i=1}^d \int_0^{\tau \wedge T} \left \vert \xi_u^{\mathcal{L},i}  dA_{u}^{X^{i}}\right \vert \right)^2\right] \nonumber \\
    &=\mathbb{E}_{\mathbb{P}} \left[\sum_{i=1}^d \int_0^{\tau \wedge T} \left \vert \frac{Z_u^i B_u^f}{S_u^{i} {\sigma}^i(u,S_u)}\right \vert^2 \left \vert \frac{S_u^{i}}{B_u^f}\right\vert^2 \left\vert{\sigma}^{i}(u,S_u)\right\vert^2 du \right] \nonumber \\
    & \quad +\mathbb{E}_{\mathbb{P}}\left[\left(\sum_{i=1}^d  \int_0^{\tau \wedge T} \left \vert \frac{Z_u^i B_u^f}{S_u^{i} {\sigma}^i(u,S_u)}  \frac{S_u^{i}}{B_u^f}  \left(\mu^i(u,S_u) -r_u^i \right)\right \vert du\right)^2 \right] \nonumber \\
    &=\mathbb{E}_{\mathbb{P}} \left[ \sum_{i=1}^d\int_0^{\tau \wedge T} \left(Z_u^i\right)^2 du +\left(\sum_{i=1}^d \int_0^{\tau \wedge T} \left\vert \frac{\mu^i(u,S_u) -r_u^i}{\sigma^i(u,S_u)}Z_u^i \right \vert du\right)^2 \right] \nonumber \\ 
    & \leq \|Z \|^2_{\mathcal{H}^{2,d}} + d \max_{i=1,...,d} K_i^2 \; \mathbb{E}_{\mathbb{P}} \left[ \sum_{i=1}^d \int_0^{\tau \wedge T} \left\vert Z_u^i \right \vert^2 du \right]\nonumber\\
    & \leq \left(1+ d \max_{i=1,...,d} K_i^2\right)\|Z \|^2_{\mathcal{H}^{2,d}}, \nonumber 
\end{align}
where we use \eqref{eq:BoundsExistenceSolutionDefaults} and Jensen's inequality. 

Next, we show that $\mathcal{H}^{\mathcal{L}}$ is strongly orthogonal to the martingale part of $X^{\tau}$. Note that by the regularity of $U^C$ and $U^B$ the processes $(\int_0^t U_u^C dM_u^C)_{t \in [0,\tau \wedge T]}$ and $(\int_0^t U_u^B dM_u^B)_{t \in [0,\tau \wedge T]}$  are square-integrable $\mathbb{G}$-martingales and strongly orthogonal to each other, see e.g. similar arguments as in Section 5.2.2 in \cite{BieRut15}.
This implies that $({H}^{\mathcal{L}}_t)_{t \in [0,\tau \wedge T]}= ( \int_0^t U_u^C dM_u^C + \int_0^t U_u^B dM_u^B)_{t \in [0,\tau \wedge T]}$ is strongly orthogonal to the martingale component of $X^{\tau}$, as the martingale component of $X^{\tau}$ is driven by $W^{\tau,f}$ and $X^{\tau}$ is sufficiently integrable.
\allowdisplaybreaks
Thus, it only remains to verify the decomposition. By rearranging the $\mathbb{G}$-BSDE in \eqref{eq:BSDEGTheorem} we get
\begin{align} 
   -\mathcal{L}_{\tau \wedge T}&=Y_0- \int^{\tau \wedge T}_{0} \sum_{i=1}^d f^i(u,Z_u^i) du + \sum_{i=1}^d \int^{\tau \wedge T}_{0} Z_u^i dW_u^{f,i} + \sum_{j \in \lbrace B, C \rbrace}\int^{\tau \wedge T}_{0} U_u^j dM_u^j \nonumber \\
   &= Y_0 + \int^{\tau \wedge T}_{0} \sum_{i=1}^d   Z_u^i\frac{\mu^i(u,S_u)-r_u^i}{\sigma^{i}(u,S_u)} du + \sum_{i=1}^d \int^{\tau \wedge T}_{0} Z_u^i dW_u^{f,i} + \sum_{j \in \lbrace B, C \rbrace}\int^{\tau \wedge T}_{0} U_u^j dM_u^j \nonumber  \\
   & =Y_0 + \int^{\tau \wedge T}_{0} \sum_{i=1}^d   Z_u^i\frac{B_u^f}{\sigma^{i}(u,S_u)S_u^{i}} \frac{B_u^i}{B_u^f}\left[\left(\mu^i(u,S_u)-r_u^i\right) \frac{S_u^{i}}{B_u^i} du + \sigma^i(u,S_u) \frac{S_u^{i}}{ B_u^i}dW_u^{f,i}\right]du \nonumber \\
   & \quad + \sum_{j \in \lbrace B, C \rbrace}\int^{\tau \wedge T}_{0} U_u^j dM_u^j \nonumber \\
   &=Y_0 + \int^{\tau \wedge T}_{0} \sum_{i=1}^d   Z_u^i\frac{B_u^f}{\sigma^{i}(u,S_u)S_u^{i}} \frac{B_u^i}{B_u^f}d \widetilde{S}^{i}_u  + \sum_{j \in \lbrace B, C \rbrace}\int^{\tau \wedge T}_{0} U_u^j dM_u^j \label{eq:UseDynamicsDiscountedS} \\
   &= Y_0 + \sum_{i=1}^d \int_0^T \xi^{\mathcal{L},i}_u \frac{B_u^i}{B_u^f} d\widetilde{S}_u^{\tau,i} + H_{\tau \wedge T}^{\mathcal{L}} \label{eq:UseDynamicsDiscountedS_1}
\end{align}
where \eqref{eq:UseDynamicsDiscountedS} follows by \eqref{eq:SemimartingaleDecompositionS}
and $\xi^{\mathcal{L},i}$ and $\mathcal{H}^{\mathcal{L}}$ in \eqref{eq:UseDynamicsDiscountedS_1} are defined as in \eqref{eq:SepcificationFSXi}.
\end{proof}
\begin{lemma} \label{lemma:DescriptionStrategyBSDE}
Let $X^{\tau}=(X^{\tau}_t)_{t \in [0, \tau \wedge T]}$ and $\mathcal{L}=(\mathcal{L}_t)_{t \in [0,\tau \wedge T]}$ be as in \eqref{eq:SemimartingaleDecomposotionX} and \eqref{eq:PaymentStreamL}, respectively.
Then the locally risk-minimizing $L_{\mathbb{G}}^2$-strategy $\varphi=(\xi,\psi,\psi^f)$ for the defaultable, collateralized contract $(A,R,C,\tau)$  is given by
\begin{align*}
\xi_t^i&:=\xi_t^{\mathcal{L},i}= \frac{Z_t^{i}B_t^f}{S_t^{i}\sigma^i(t, S_t)}, \quad \psi_t^i = - \frac{Z_t^i B_t^f}{\sigma^i(t,S_t) B_t^i}, \quad i=1,...,d, \; t \in [0, \tau \wedge T], \\
\psi_t^f& = \widetilde{V}_t^p(\varphi, A^{R,C})=Y_t + \int_0^t \frac{1}{B_u^f}dA_u^{R,C}, \quad  t \in [0, \tau \wedge T],
\end{align*}
where {$(Y,Z,U^C, U^B) \in \mathbb{S}^2(\mathbb{G}, \mathbb{P}) \times \mathcal{H}^{2,d}(\mathbb{G},\mathbb{P}) \times \mathcal{H}^{2,2}_{\lambda}(\mathbb{G},\mathbb{P})$} is the unique solution to the $\mathbb{G}$-BSDE in \eqref{eq:BSDEGTheorem}.
\end{lemma}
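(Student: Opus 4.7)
The plan is to chain together the two structural results already established: Proposition~\ref{prop:EquivalenceLRMFS}, which converts the locally risk-minimizing strategy into a reading of the $\mathbb{G}$-F\"ollmer-Schweizer decomposition of $\mathcal{L}_{\tau\wedge T}$, and Theorem~\ref{theorem:LinkBSDE}, which identifies the integrand $\xi^{\mathcal{L}}$ and the orthogonal martingale $H^{\mathcal{L}}$ of that decomposition in terms of the BSDE solution $(Y,Z,U^B,U^C)$. Proposition~\ref{prop:Existence} guarantees that the locally risk-minimizing $L^2_{\mathbb{G}}$-strategy exists and is unique, so it suffices to exhibit one strategy whose components match the claimed formulas.

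First, I invoke Theorem~\ref{theorem:LinkBSDE}: it states $\xi^{\mathcal{L},i}_t = Z^i_t B^f_t /(S^i_t \sigma^i(t,S_t))$ and $H^{\mathcal{L}}_t = \sum_{j\in\{B,C\}} \int_0^t U^j_u \, dM^j_u$. Proposition~\ref{prop:EquivalenceLRMFS} then immediately yields $\xi_t = \xi^{\mathcal{L}}_t$ and, via the repo-trading constraint \eqref{eq:repoConstraint}, $\psi^i_t = -\xi^i_t S^i_t/B^i_t = -Z^i_t B^f_t/(\sigma^i(t,S_t) B^i_t)$, which is the first formula in the lemma.

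The main step is the identification of $\psi^f$. Proposition~\ref{prop:EquivalenceLRMFS} gives
\begin{equation*}
\psi^f_t = h_0 + \sum_{i=1}^d \int_0^t \xi^{\mathcal{L},i}_u \frac{B^i_u}{B^f_u} \, d\widetilde{S}^{\tau,i}_u + H^{\mathcal{L}}_t + \int_0^t \frac{1}{B^f_u}\, dA^{R,C}_u,
\end{equation*}
so I need to show that the first three terms on the right-hand side equal $Y_t$. Since $h_0 = Y_0$ follows from the identification of the FS decomposition with the BSDE solution at $t=0$ (cf.\ the last display in the proof of Theorem~\ref{theorem:LinkBSDE}), it remains to rewrite, for every $t\in[0,\tau\wedge T]$, the increment $Y_t - Y_0$ using the differential form of the BSDE~\eqref{eq:BSDEGTheorem},
\begin{equation*}
Y_t - Y_0 = -\int_0^t \sum_{i=1}^d f^i(u,Z^i_u)\, du + \sum_{i=1}^d \int_0^t Z^i_u \, dW^{f,i}_u + \sum_{j\in\{B,C\}} \int_0^t U^j_u\, dM^j_u.
\end{equation*}
The key algebraic identity is exactly the one already carried out inside the proof of Theorem~\ref{theorem:LinkBSDE} (between \eqref{eq:UseDynamicsDiscountedS} and \eqref{eq:UseDynamicsDiscountedS_1}): with $\xi^{\mathcal{L},i}_u = Z^i_u B^f_u /(S^i_u \sigma^i(u,S_u))$ and the choice of $f^i$ in \eqref{eq:ChoiceOfFDefaultsTheorem}, one obtains
\begin{equation*}
-\sum_{i=1}^d f^i(u,Z^i_u)\, du + \sum_{i=1}^d Z^i_u \, dW^{f,i}_u = \sum_{i=1}^d \xi^{\mathcal{L},i}_u \frac{B^i_u}{B^f_u}\, d\widetilde{S}^{\tau,i}_u
\end{equation*}
on $[0,\tau\wedge T]$. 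Combined with the definition of $H^{\mathcal{L}}$, this gives $Y_t - Y_0 = \sum_i \int_0^t \xi^{\mathcal{L},i}_u B^i_u/B^f_u \, d\widetilde{S}^{\tau,i}_u + H^{\mathcal{L}}_t$, and hence $\psi^f_t = Y_t + \int_0^t (1/B^f_u)\, dA^{R,C}_u$.

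Finally, the identification $\psi^f_t = \widetilde{V}^p_t(\varphi,A^{R,C})$ is immediate from \eqref{eq:ValuePortfolioNoDefault} and \eqref{eq:DiscountedValueProcess}, since the repo-trading constraint reduces the portfolio value to $\psi^f_t B^f_t$. The only step requiring any care is the passage from the BSDE written on the stopped time grid $\tau \wedge t$ to the running-time version used above; this is harmless because $Y$, $Z$, and $U^j$ vanish or are constant after $\tau$ by the representation \eqref{eq:RepresentationBSDEs1}, so both formulations agree on $[0,\tau\wedge T]$. This is the only mildly technical point; everything else is bookkeeping combining the previously established results.
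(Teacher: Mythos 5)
Your proposal is correct and follows exactly the same route as the paper's own proof: you chain Proposition~\ref{prop:EquivalenceLRMFS} (to read off $\xi,\psi,\psi^f$ from the Föllmer-Schweizer decomposition) with Theorem~\ref{theorem:LinkBSDE} (to identify $\xi^{\mathcal{L}}$ and $H^{\mathcal{L}}$ via the BSDE solution), and then use the BSDE dynamics together with the algebraic identity from the proof of Theorem~\ref{theorem:LinkBSDE} to rewrite $\psi^f_t$ as $Y_t + \int_0^t (B^f_u)^{-1}\,dA^{R,C}_u$. Your closing remark about the equivalence of the stopped-time and running-time formulations of the BSDE on $[0,\tau\wedge T]$ is a valid, if implicit, piece of bookkeeping that the paper also relies on.
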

\begin{proof}
 By Theorem \ref{theorem:LinkBSDE} $\mathcal{L}_{\tau \wedge T}$ admits a unique $\mathbb{G}$-F\"ollmer-Schweizer decomposition which is specified by \eqref{eq:FoellmerSchweizerBSDEtheorem}-\eqref{eq:ChoiceOfFDefaultsTheorem}. Furthermore, by Proposition \ref{prop:EquivalenceLRMFS} the associated locally risk-minimizing $L_{\mathbb{G}}^2$-strategy $\varphi=(\xi,\psi,\psi^f)$ is given by \eqref{eq:SpecificationLRMS1}-\eqref{eq:SpecificationLRMS2}. Thus, by \eqref{eq:ValuePortfolioNoDefault} it holds for $t \in [0, \tau \wedge T] $
  \begin{align}
        \psi^f_t&= Y_0 + \sum_{i=1}^d  \int_{0}^{t}  {\xi}_u^{\mathcal{L},i} \frac{B_u^i}{B_u^f} d \widetilde{S}_{u}^{\tau,i} + {H}_{t}^{\mathcal{L}} + \int_0^{t} \frac{1}{B_u^f} dA_u^{R,C} \nonumber \\
        &=Y_0  + \sum_{i=1}^d  \int_{0}^{t} \frac{Z_u^{i}B_u^f}{S_u^{i}\sigma^i(u, S_u)} \frac{B_u^i}{B_u^f} d \widetilde{S}_{u}^{\tau,i} + \sum_{j \in \lbrace{B,C \rbrace}} \int_0^t U_u^j dM_u^{j} + \int_0^{t} \frac{1}{B_u^f} dA_u^{R,C} \nonumber \\
        &= Y_0 -\int_0^t \sum_{i=1}^d f^i(u,Z_u^i)du + \sum_{i=1}^d \int_0^t Z_u^i dW_u^{\tau,f,i}
        +\sum_{j \in \lbrace{B,C \rbrace}}\int_0^t  U_u^j dM_u^{j} + \int_0^{t} \frac{1}{B_u^f} dA_u^{R,C}, \label{eq:RepresentationPsiBSDE}  
    \end{align}
    by using for the last equality the same arguments as in \eqref{eq:UseDynamicsDiscountedS}. For $t \in [0,\tau \wedge T]$ we then derive by \eqref{eq:BSDEGTheorem} that 
    $$
    \psi_t^f=Y_t + \int_0^t \frac{1}{B_u^f}dA_u^{R,C}.
    $$
\end{proof}

\subsection{Value adjustments} \label{sec:ValueAdjustments}
In this section we derive a decomposition of the price in terms of value adjustments.

\begin{assumption} \label{assump:cleanValue}
We assume that the clean value process $\mathcal{V} \in \mathbb{S}^2(\mathbb{F}, \mathbb{P})$, i.e. 
\begin{align} \nonumber
\mathbb{E}_{\mathbb{P}}\left[ \sup_{t \in [0,T]} \vert \mathcal{V}_t \vert^2\right]< \infty.
\end{align} 
\end{assumption}
\begin{remark}
\begin{enumerate}
\item Note that Assumption \ref{assump:cleanValue} is a quite weak condition, as it can be shown that the clean value process $\mathcal{V}$ is a solution of a $\mathbb{F}$-BSDE and thus $\mathcal{V} \in \mathbb{S}^2(\mathbb{F}, \mathbb{P})$, see e.g. Theorem 3.5 in \cite{Biagini_Gnoatto_Oliva_2021}. 
\item Assumption \ref{assump:cleanValue} implies that 
\begin{align} \label{eq:IntegrabilityCollateral}
    \mathbb{E}_{\mathbb{P}}\left[ \sup_{t \in [0,T]} \vert C_t \vert^2 \right]< \infty.
\end{align}
This follows as 
\begin{align}
    \mathbb{E}_{\mathbb{P}}\left[ \sup_{t \in [0,T]} \vert C_t \vert^2 \right]&=\mathbb{E}_{\mathbb{P}}\left[ \sup_{t \in [0,T]} \vert g(\mathcal{V}_t) \vert^2 \right] \leq \mathbb{E}_{\mathbb{P}}\left[ \sup_{t \in [0,T]} \Big \vert K_g  \vert \mathcal{V}_t \vert + \vert f(0) \vert \Big \vert^2 \right] \nonumber \\
    &\leq 2 (K_g)^2 \mathbb{E}_{\mathbb{P}}\left[ \sup_{t \in [0,T]}  \vert \mathcal{V}_t \vert^2 \right] + 2  \vert f(0) \vert^2 < \infty, \nonumber 
\end{align}
where $K_g$ denotes the Lipschitz constant of $g$.   
\end{enumerate}
\end{remark}
Moreover, we introduce the processes $\widehat{V}(\varphi,A^{R,C})=(\widehat{V}_t(\varphi,A^{R,C}))_{t \in [0, \tau \wedge T]}$ and \linebreak $\widehat{V}^p(\varphi,A^{R,C})=(\widehat{V}_t^p(\varphi,A^{R,C}))_{t \in [0, \tau \wedge T]}$ by 
\begin{align} \nonumber 
    \widehat{V}_t(\varphi,A^{R,C}):= \frac{{V}_t(\varphi,A^{R,C})}{B_t^r} \quad \text{and} \quad \widehat{V}_t^p(\varphi,A^{R,C}):= \frac{{V}^p_t(\varphi,A^{R,C})}{B_t^r},
\end{align}
respectively.
\begin{definition} \label{def:ValuationAdjustments}
We define the following valuation adjustment on $\lbrace{ t< \tau \rbrace}$
\begin{align}
    \textnormal{CVA}_t&:= {\mathbb{E}_{{\mathbb{Q}}}\left[ \textbf{1}_{\lbrace \tau \leq T \rbrace} \textbf{1}_{\lbrace{\tau^C< \tau^B\rbrace}}\frac{B_t^r}{B_{\tau}^r}(1-R^C) (Q_{\tau}-C_{\tau-})^+\Big \vert \mathcal{G}_t\right] } \nonumber \\
    &= {\frac{B_t^r}{ \mathscr{Z}_t}\mathbb{E}_{\mathbb{P}}\left[ \textbf{1}_{\lbrace \tau \leq T \rbrace} \textbf{1}_{\lbrace{\tau^C< \tau^B\rbrace}}\frac{\mathscr{Z}_{\tau}}{B_{\tau}^r}(1-R^C) (Q_{\tau}-C_{\tau-})^+\Big \vert \mathcal{G}_t\right] },  \nonumber \\
    \textnormal{DVA}_t&:=\mathbb{E}_{{\mathbb{Q}}}\left[ \textbf{1}_{\lbrace \tau \leq T \rbrace}\textbf{1}_{\lbrace{\tau^B< \tau^C\rbrace}}\frac{B_t^r}{B_{\tau}^r} (1-R^C)(Q_{\tau}-C_{\tau-})^-\Big \vert \mathcal{G}_t\right]\nonumber \\
    &=\frac{B_t^r}{ \mathscr{Z}_t}\mathbb{E}_{\mathbb{P}}\left[ \textbf{1}_{\lbrace \tau \leq T \rbrace}\textbf{1}_{\lbrace{\tau^B< \tau^C\rbrace}}\frac{\mathscr{Z}_{\tau}}{B_{\tau}^r} (1-R^C)(Q_{\tau}-C_{\tau-})^-\Big \vert \mathcal{G}_t\right], \nonumber \\
    \textnormal{ColVA}_t&:=\mathbb{E}_{{\mathbb{Q}}}\left[ \int_{t}^{\tau \wedge T} C_u\frac{B_t^r}{B_u^r}\left(\bar{r}_u^c-r_u^f \right) du\Big \vert \mathcal{G}_t\right]=\frac{B_t^r}{ \mathscr{Z}_t}\mathbb{E}_{\mathbb{P}}\left[ \int_{t}^{\tau \wedge T} \mathscr{Z}_u\frac{C_u}{B_u^r}\left(\bar{r}_u^c-r_u^f \right) du\Big \vert \mathcal{G}_t\right], \nonumber \\
    \textnormal{FVA}_t&:=\mathbb{E}_{{\mathbb{Q}}}\left[ \int_{t}^{\tau \wedge T} V_u(\varphi, A^{R,C})\frac{B_t^r}{B_u^r}\left({r}_u^f-r_u^r \right) du\Big \vert \mathcal{G}_t\right]=\frac{B_t^r}{ \mathscr{Z}_t}\mathbb{E}_{\mathbb{P}}\left[ \int_{t}^{\tau \wedge T} \mathscr{Z}_u\frac{V_u(\varphi, A^{R,C})}{B_u^r}\left({r}_u^f-r_u^r \right) du\Big \vert \mathcal{G}_t\right]. \nonumber 
\end{align}
\end{definition}

\begin{theorem} \label{theorem:RepresentatinXVA}
Let $X^{\tau}=(X^{\tau}_t)_{t \in [0, \tau \wedge T]}$ and $\mathcal{L}=(\mathcal{L}_t)_{t \in [0,\tau \wedge T]}$ be as in \eqref{eq:SemimartingaleDecomposotionX} and \eqref{eq:PaymentStreamL}, respectively, and Assumption \ref{assump:cleanValue} hold. Furthermore, we assume that 
{\begin{align}\label{eq:AdditionalAssumption1}
    \mathbb{E}_{\mathbb{P}}\left[\int_{0}^{\tau \wedge T} \left \vert \int_0^u \frac{1}{B_s^f} d A^{R,C}_s \right\vert^2 du\right] < \infty.
\end{align}}Denote by $\varphi=(\xi,\psi,\psi^f)$ the locally risk-minimizing $L_{\mathbb{G}}^2$-strategy for the defaultable, collateralized contract $(A,R,C,\tau)$. Then it holds on $\lbrace{t < \tau \rbrace}$
\begin{align}
    V_t(\varphi, A^{R,C})= \mathcal{V}_t+ \textnormal{CVA}_t-\textnormal{DVA}_t+ \textnormal{ColVA}_t-\textnormal{FVA}_t. 
\end{align}
\end{theorem}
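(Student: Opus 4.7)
The plan is to translate the BSDE representation from Lemma~\ref{lemma:DescriptionStrategyBSDE} into a $\mathbb{Q}$-conditional expectation, then manipulate the resulting cash flows and change the discount factor in order to extract the four valuation adjustments. From Lemma~\ref{lemma:DescriptionStrategyBSDE}, on $\{t<\tau\}$ one has $V_t(\varphi,A^{R,C}) = B^f_t Y_t + \int_0^t (B^f_t/B^f_u)\,dA^{R,C}_u - C_t$. Since the driver $f^i(u,z)=-z(\mu^i-r^i)/\sigma^i$ of the BSDE~\eqref{eq:BSDEGTheorem} is exactly the Girsanov kernel defining $\mathscr{Z}$, and since $M^B,M^C$ remain $(\mathbb{G},\mathbb{Q})$-martingales (the density $\mathscr{Z}$ is driven only by $W^f$), the process $-Y$ is a square-integrable $(\mathbb{G},\mathbb{Q})$-martingale on $[0,\tau\wedge T]$, so that $Y_t = \mathbb{E}_{\mathbb{Q}}[-\mathcal{L}_{\tau\wedge T}\mid\mathcal{G}_t]$.

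Substituting the explicit form of $-\mathcal{L}_{\tau\wedge T}$, cancelling the $\int_0^t$-terms, and performing integration by parts on $\int_t^{\tau\wedge T}(1/B^f_u)\,dC_u$ collapses the collateral boundary into $(C_{\tau-}/B^f_\tau)\mathbf{1}_{\{\tau\le T\}}$ plus a $-r^f_u C_u/B^f_u$ running cost. The latter merges with the $\bar r^c_u C_u/B^f_u$ coming from $dA^{R,C}$ to produce a $(\bar r^c_u - r^f_u)C_u$-integrand, and the former combines with $R_\tau$ into the close-out $\vartheta_\tau(R,C)=R_\tau+C_{\tau-}$. Using the decomposition \eqref{eq:PayoffRewriten} to split $\vartheta_\tau(R,C)$ into $Q_\tau$ plus the two sign-specific loss terms yields the intermediate identity
\[
V_t = \mathbb{E}_{\mathbb{Q}}\!\left[-\!\int_t^{\tau\wedge T}\!\tfrac{B^f_t}{B^f_u}\,dA_u + \!\int_t^{\tau\wedge T}\!\tfrac{B^f_t}{B^f_u}(\bar r^c_u-r^f_u)C_u\,du - \tfrac{B^f_t}{B^f_\tau}\vartheta_\tau(R,C)\,\mathbf{1}_{\{\tau\le T\}}\,\Big|\,\mathcal{G}_t\right].
\]

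To match Definition~\ref{def:ValuationAdjustments}, whose adjustments are $B^r$-discounted, I apply the elementary identity $B^f_t/B^f_u = B^r_t/B^r_u - \int_t^u (B^r_t B^f_s)/(B^r_s B^f_u)(r^f_s-r^r_s)\,ds$ (verified by an ODE in $u$, both sides vanishing at $u=t$) to every $B^f$-discount factor appearing above. Fubini--Tonelli---justified by \eqref{eq:BoundsInterestRates}, \eqref{eq:IntegrabilityCollateral} and \eqref{eq:AdditionalAssumption1}---rearranges the resulting double integrals into a single running-cost term whose integrand is, by the same intermediate identity applied at time $u$, exactly $V_u(\varphi,A^{R,C})(r^f_u-r^r_u)$; this is precisely $\textnormal{FVA}_t$. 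The remaining $B^r$-discounted $dA$-integral combines with $\mathbb{E}_{\mathbb{Q}}[(B^r_t/B^r_\tau)Q_\tau\mathbf{1}_{\{\tau\le T\}}\mid\mathcal{G}_t]$, which equals $\mathbb{E}_{\mathbb{Q}}[\int_{(\tau,T]}(B^r_t/B^r_u)dA_u\,\mathbf{1}_{\{\tau\le T\}}\mid\mathcal{G}_t]$ by $\Delta A_\tau=0$ and Hypothesis~\ref{hp:H}, to produce $\mathcal{V}_t$; the $(\bar r^c-r^f)C$-term yields $\textnormal{ColVA}_t$, and the two sign-specific loss terms match $\textnormal{CVA}_t$ and $-\textnormal{DVA}_t$.

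The main obstacle is the self-referential character of the discount-factor change: the Fubini swap produces a running-cost integrand that has to be recognized as $V_u$ itself in order to match the FVA definition. This is made rigorous by writing the intermediate identity displayed above at the generic time $u\in[t,\tau\wedge T]$ and substituting it, term by term, into the integral produced by the swap; the integrability assumption \eqref{eq:AdditionalAssumption1} together with the uniform rate bound \eqref{eq:BoundsInterestRates} supply all the estimates needed to justify the interchanges of integration order.
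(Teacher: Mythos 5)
Your proposal is correct in strategy and reaches the right identity, but it takes a recognizably different computational route than the paper. The paper works entirely under $\mathbb{P}$ and passes immediately to the $B^r$-discounted and $\mathscr{Z}$-weighted portfolio value: it applies It\^o's formula to $\widehat{V}^p_t \mathscr{Z}_t$, and the FVA integrand $(r^f_u - r_u)V^p_u/B^r_u$ drops out \emph{for free} from the algebraic split $r^f = (r^f - r) + r$ inside \eqref{eq:DynamicsVp} --- there is no Fubini swap and no discount-factor change at all. You instead stay with the BSDE's natural $B^f$-discounting, obtain $Y_t = \mathbb{E}_\mathbb{Q}[-\mathcal{L}_{\tau\wedge T}\mid\mathcal{G}_t]$ (correct: the driver $f^i$ is exactly the Girsanov kernel, and $M^B,M^C$ keep their compensators under $\mathbb{Q}$ because $\mathscr{Z}$ is $W^f$-driven), and then convert to $B^r$-discounting a posteriori via the ODE identity $B^f_t/B^f_u = B^r_t/B^r_u - \int_t^u (B^r_t B^f_s)/(B^r_s B^f_u)(r^f_s-r^r_s)\,ds$, Fubini, and the tower property. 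What each buys: your version makes the ``FVA as integrated funding spread applied to the running exposure'' interpretation visible at a glance and keeps the measure change conceptually separate from the discounting; the paper's version avoids the self-referential recognition step and gets all the martingale-vanishing book-keeping done in one pass.

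Two places in your sketch deserve tightening. First, you assert that $Y$ is a genuine (not just local) $(\mathbb{G},\mathbb{Q})$-martingale on $[0,\tau\wedge T]$; this needs the $L^2$ properties of $(Z,U)$ together with $\mathscr{Z}\in L^2$, and it is precisely the content of the paper's estimates for $\mathscr{M}^{1,i},\mathscr{M}^{2},\mathscr{M}^{3,i}$ --- you need the same estimates (equivalently, that the $\mathbb{Q}$-stochastic integrals are class (DL)). Second, the Fubini--tower step that recasts the inner integral as $V_s$ is exactly where \eqref{eq:AdditionalAssumption1}, \eqref{eq:IntegrabilityCollateral}, \eqref{eq:BoundsInterestRates} and Assumption~\ref{assump:cleanValue} must be invoked to justify both the interchange of order of integration and the conditional Fubini; you cite them but do not show which estimate covers which piece. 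Once those gaps are filled in, the IBP on $\int \tfrac{1}{B^f_u}\,d\widetilde C_u$, the merger of $-r^f C$ with $\bar r^c C$, the use of \eqref{eq:NoCollateralPosted} to collapse $C_{\tau\wedge T}$, and the split of $\vartheta_\tau(R,C)$ via \eqref{eq:PayoffRewriten} all check out and reproduce the paper's identification of CVA, DVA, ColVA and the clean value $\mathcal{V}_t$.
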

\begin{proof}
    By \eqref{eq:RepresentationPsiBSDE} we have on $ \lbrace{ t < \tau \rbrace}$
    \begin{align}
        \widetilde{V}_t^p(\varphi, A^{R,C})
        &=Y_0 + \sum_{i=1}^d  \int_{0}^t Z_u^{i} \frac{\mu^i (u,S_u)-r_u^i}{\sigma^i(u, S_u)}du +\sum_{i=1}^d   \int_{0}^t Z_u^i dW_u^{f,i} + {H}_{t }^{\mathcal{L}} + \int_0^t \frac{1}{B_u^f} dA_u^{R,C}. \nonumber
    \end{align}
Moreover, by It\^{o}'s formula it holds 
\begin{align} \label{eq:DynamicsVp}
    d{V}_t^p(\varphi, A^{R,C})&= 
     B_t^f d\widetilde{V}_t^p(\varphi, A^{R,C}) + r_t^f {V}_t^p(\varphi, A^{R,C})dt
    \nonumber \\
    &=B_t^f \left( \sum_{i=1}^d Z_u^i\frac{\mu^i(t,S_t)-r_t^i}{\sigma^{i}(t,S_t)} dt +  \sum_{i=1}^d    Z_t^i dW_t^{f,i}+ dH_t^{\mathcal{L}}+ \frac{1}{B_t^f}dA_t^{R,C}\right) \nonumber \\
    & \quad + \left( r_t^f-r_t \right){V}_t^p(\varphi, A^{R,C}) dt + r_t {V}_t^p(\varphi, A^{R,C})dt,
\end{align}
and thus if follows
\begin{align}
d \widehat{V}_t^p(\varphi, A^{R,C})&=\frac{1}{B_t^r}d {V}_t^p(\varphi, A^{R,C})-\frac{1}{B_t^r} r_t {V}_t^p(\varphi, A^{R,C})dt \nonumber \\
&=\frac{B_t^f}{B_t^r}\sum_{i=1}^d Z_u^i \frac{\mu^i(t,S_t)-r_t^i}{\sigma^{i}(t,S_t)} dt + \frac{B_t^f}{B_t^r}\sum_{i=1}^d    Z_t^i dW_t^{ f,i}+ \frac{B_t^f}{B_t^r}dH_t^{\mathcal{L}}+ \frac{1}{B_t^r}dA_t^{R,C} \nonumber \\
& \quad +\left( r_t^f-r_t \right)\frac{{V}_t^p(\varphi, A^{R,C})}{B_t^r} dt. \nonumber
\end{align}
Again by It\^{o}'s formula we get
    \begin{align}
        &d\left(\widehat{V}_t^p(\varphi, A^{R,C}) \mathscr{Z}_t \right)\nonumber \\
        &=  \mathscr{Z}_t\left(\frac{B_t^f}{B_t^r}\sum_{i=1}^d Z_u^i\frac{\mu^i(t,S_t)-r_t^i}{\sigma^{i}(t,S_t)} dt +  \frac{B_t^f}{B_t^r}\sum_{i=1}^d    Z_t^i dW_t^{ f,i}+\frac{B_t^f}{B_t^r} dH_t^{\mathcal{L}}+ \frac{1}{B_t^r}dA_t^{R,C}+\left( r_t^f-r_t \right)\frac{{V}_t^p(\varphi, A^{R,C})}{B_t^r} dt\right) \nonumber \\
        & \quad + \widehat{V}_t^p(\varphi, A^{R,C})\mathscr{Z}_t\left( \sum_{i=1}^d -\frac{\mu^i(t,S_t)-r_t^i}{\sigma^{i}(t,S_t)} dW_t^{f,i}\right)
       {+ d\left [ \widehat{V}^p_{\cdot}(\varphi, A^{R,C}), \mathscr{Z}_{\cdot} \right ]_t},\label{eq:QuadraticVariationProduct} 
    \end{align}
    where $\mathscr{Z}$ is defined in \eqref{eq:StochasticExponential}.
    {On $\lbrace t < \tau \rbrace$ we have
    \allowdisplaybreaks
    \begin{align}
         &d\left [ \widehat{V}^p(\varphi, A^{R,C}), \mathscr{Z} \right ]_t \nonumber \\
         &= d \left \langle h_0 + \sum_{i=1}^d  \int_{0}^{\cdot} \frac{B_u^f}{B_u^r}Z_u^{i} \frac{\mu^i (u,S_u)-r_u^i}{\sigma^i(u, S_u)}du +\sum_{i=1}^d   \int_{0}^{\cdot}\frac{B_u^f}{B_u^r}  Z_u^i dW_u^{f,i} + \int_0^{\cdot} \frac{B_u^f}{B_u^r} d{H}_{u}^{\mathcal{L}} + \int_0^{\cdot} \frac{1}{B_u^r} dA_u^{R,C}, \mathscr{Z}_{\cdot} \right \rangle_t \nonumber \\
         & \quad + d  \left \langle \int_0^{\cdot} \left( r_u^f-r_u \right) \widehat{V}_u^p(\varphi, A^{R,C}) du, \mathscr{Z}_{\cdot} \right \rangle_t \nonumber \\
         &=  - \mathscr{Z}_t  \frac{B_t^f}{B_t^r}\sum_{i=1}^d Z_t^i\frac{\mu^i(t,S_t)-r_t^i}{\sigma^{i}(t,S_t)} dt + d \left \langle  \int_0^{\cdot} \frac{1}{B_u^r} dA_u^{R,C}, \mathscr{Z}_{\cdot} \right \rangle_t, \nonumber \\
         &=  - \mathscr{Z}_t \frac{B_t^f}{B_t^r} \sum_{i=1}^d Z_t^i\frac{\mu^i(t,S_t)-r_t^i}{\sigma^{i}(t,S_t)} dt + d \left \langle  \int_0^{\cdot} \frac{1}{B_u^r} dA_u + \int_0^{\cdot} \frac{1}{B_u^r} dC_u -\int_0^{\cdot} \frac{1}{B_u^r}\bar{r}_u^c C_u du   , \mathscr{Z}_{\cdot} \right \rangle_t \nonumber\\
          &=  - \mathscr{Z}_t \frac{B_t^f}{B_t^r}\sum_{i=1}^d Z_t^i\frac{\mu^i(t,S_t)-r_t^i}{\sigma^{i}(t,S_t)} dt + \frac{1}{B_t^r} d \langle C, \mathscr{Z}\rangle_t \label{eq:Ito1.0},
    \end{align}
  by using in \eqref{eq:Ito1.0} that $C_t (B_t^r)^{-1}+ \int_0^t (B_u^r)^{-1} C_u r_u du = \int_0^t (B_u^r)^{-1} dC_u$.} By combining \eqref{eq:QuadraticVariationProduct} and \eqref{eq:Ito1.0} and using that $\varphi$ is $0$-achieving, i.e. $\widehat{V}_{\tau \wedge T}^p(\varphi, A^{R,C})= \frac{C_T}{B_T^r} \textbf{1}_{\lbrace \tau >T \rbrace} - \frac{R_{\tau}}{B_{\tau}^r}\textbf{1}_{\lbrace \tau \leq T \rbrace}$, it follows
    \begin{align}
        & \left(\frac{C_T}{B_T^r} \textbf{1}_{\lbrace \tau >T \rbrace}- \frac{R_{\tau}}{B_{\tau}^r}\textbf{1}_{\lbrace \tau \leq T \rbrace}\right) \mathscr{Z}_{\tau \wedge T}- \widehat{V}_t^p(\varphi, A^{R,C})\mathscr{Z}_t \nonumber \\
        &=\int_t^{\tau \wedge T} \mathscr{Z}_u\left(\frac{B_u^f}{B_u^r} \sum_{i=1}^d    Z_u^i dW_u^{f,i}+ \frac{B_u^f}{B_u^r}dH_u^{\mathcal{L}}+ \frac{1}{B_u^r}dA_u^{R,C}+ \left( r_u^f-r_u\right)\frac{{V}_u^p(\varphi, A^{R,C})}{B_u^r}du\right)  \nonumber \\
        & \quad +  \int_t^{\tau \wedge T} \widehat{V}_u^p(\varphi, A^{R,C})\mathscr{Z}_u\left( \sum_{i=1}^d -\frac{\mu^i(u,S_u)-r_u^i}{\sigma^{i}(u,S_u)} dW_u^{f,i}\right) + \int_t^{\tau \wedge T} \frac{1}{B_u^r}d \langle C, \mathscr{Z}\rangle_u. \nonumber 
    \end{align}
    By taking the conditional expectation under $\mathbb{P}$ w.r.t. $\mathcal{G}_t$ for $t < \tau$ we have
    \begin{align}
   \widehat{V}_t^p(\varphi, A^{R,C})&= \frac{1}{\mathscr{Z}_t} \mathbb{E}_{\mathbb{P}}\left[ \left(\frac{C_T}{B_T^r} \textbf{1}_{\lbrace \tau >T \rbrace}- \frac{R_{\tau}}{B_{\tau}^r}\textbf{1}_{\lbrace \tau \leq T \rbrace}\right) \mathscr{Z}_{\tau \wedge T} \Big \vert \mathcal{G}_t\right] \nonumber \\
   & \quad - \frac{1}{\mathscr{Z}_t} {\mathbb{E}_{\mathbb{P}}\left[  \int_t^{\tau \wedge T} \mathscr{Z}_u\left( \frac{B_u^f}{B_u^r}\sum_{i=1}^d    Z_u^i dW_u^{f,i}\right)\Big \vert \mathcal{G}_t\right]} - \frac{1}{\mathscr{Z}_t} {\mathbb{E}_{\mathbb{P}}\left[  \int_t^{\tau \wedge T} \frac{B_u^f}{B_u^r}\mathscr{Z}_udH_u^{\mathcal{L}} \Big \vert \mathcal{G}_t\right]} \nonumber \\
   & \quad -\frac{1}{\mathscr{Z}_t} {\mathbb{E}_{\mathbb{P}}\left[ \int_t^{\tau \wedge T}  \mathscr{Z}_u \left( \frac{1}{B_u^r} dA_u^{R,C}+ \left( r_u^f-r_u\right)\frac{{V}_u^p(\varphi, A^{R,C})}{B_u^r}du \right)\Big \vert \mathcal{G}_t\right]} \nonumber \\
    & \quad -\frac{1}{\mathscr{Z}_t} {\mathbb{E}_{\mathbb{P}}\left[ \int_t^{\tau \wedge T} \widehat{V}_u^p(\varphi, A^{R,C})\mathscr{Z}_u\left( \sum_{i=1}^d -\frac{\mu^i(u,S_u)-r_u^i}{\sigma^{i}(u,S_u)} dW_u^{f,i} \right)\Big \vert \mathcal{G}_t\right]} \nonumber \\
    & \quad -  \frac{1}{\mathscr{Z}_t} {\mathbb{E}_{\mathbb{P}}\left[ \int_t^{\tau \wedge T} \frac{1}{B_u^r} d \langle C, \mathscr{Z}\rangle_u\Big \vert \mathcal{G}_t\right].} \label{eq:ConditionalExpectations}
    \end{align}
  We define the following processes by
  \begin{align}
    \mathscr{M}^{1,i}_t&:= \int_0^{t}  \frac{B_u^f}{B_u^r}\mathscr{Z}_u  Z_u^i dW_u^{f,i}, \quad \quad \mathscr{M}^{2}_t:=\int_0^{t}  \frac{B_u^f}{B_u^r} \mathscr{Z}_u d\mathcal{H}_u^{\mathcal{L}},\quad i=1,...,d, \, t \in [0, T], \nonumber  \\
     \mathscr{M}^{3,i}_t&:=\int_0^{t} -\widehat{V}_u^p(\varphi, A^{R,C})\mathscr{Z}_u\frac{\mu^i(u,S_u)-r_u^i}{\sigma^{i}(u,S_u)} dW_u^{\tau,f,i} , \quad i=1,...,d, \, t \in [0, T], \nonumber
  \end{align}
    and prove that they are martingales.
 To do so, we apply Theorem I.51 in \cite{protter}, i.e. we show that $\mathscr{M}$ for $\mathscr{M}\in \lbrace \mathscr{M}^{1,i},\mathscr{M}^{2}, \mathscr{M}^{3,i}\rbrace$ with $i=1,...,d$ is a local $(\mathbb{G}, \mathbb{P})$-martingale which satisfies
  \begin{align*}
    \mathbb{E}_{\mathbb{P}}\left[ \sup_{t \in [0, T]} \left \vert \mathscr{M}_t \right\vert\right] <\infty.
  \end{align*} 
  For $i=1,...,d$ we get for $\mathscr{M}^{1,i}$
  \begin{align*}
    \int_{0}^T \left \vert  \frac{B_u^f}{B_u^r} \mathscr{Z}_u  Z_u^i \right \vert^2 du \leq T\sup_{t \in[0,T]} \left \vert Z_t^i \right \vert^2 \sup_{t\in[0,T]} \left \vert \mathscr{Z}_t \right \vert^2 \sup_{t \in [0,T]} \left \vert  \frac{B_t^f}{B_t^r}\right \vert^2  < \infty \quad \mathbb{P}\text{-a.s.}, 
  \end{align*}
  as $Z \in \mathcal{H}^{2,d}(\mathbb{G}, \mathbb{P})$, by \eqref{eq:BoundsInterestRates} and by Doob's maximum inequality it follows from \eqref{eq:SecondMoment}
\begin{align}\label{eq:Doob}
\mathbb{E}_{\mathbb{P}}\left[ \sup_{t \in[0,T]}\vert \mathscr{Z}_t \vert^2\right] \leq \left(\frac{2}{2-1}\right)^2 \mathbb{E}_{\mathbb{P}}\left[ \vert \mathscr{Z}_T \vert^2\right]< \infty.
\end{align} 
Thus, $\mathscr{M}^{1,i}$ is a local $(\mathbb{G}, \mathbb{P})$-martingale.
  Furthermore, it holds
  \begin{align}
    \mathbb{E}_{\mathbb{P}}\left[ \sup_{t \in [0,T]} \left\vert \int_0^{t}  \frac{B_u^f}{B_u^r}\mathscr{Z}_u  Z_u^i dW_u^{f,i}\right \vert \right] &\leq K \mathbb{E}_{\mathbb{P}}\left[  \left \vert \int_0^{T}\left \vert  \frac{B_u^f}{B_u^r}\mathscr{Z}_u  Z_u^i\right\vert^2 du\right\vert^{\frac{1}{2}} \right] \label{eq:Burkholder-Davis-Gundy} \\
    & \leq K T \mathbb{E}_{\mathbb{P}}\left[  \sup_{t \in [0,T]} \left \vert Z_t^i \right\vert  \sup_{t \in [0,T]} \left \vert \mathscr{Z}_t \right\vert \sup_{t \in [0,T]} \left \vert  \frac{B_t^f}{B_t^r}\right \vert  \right] \nonumber \\
     & \leq K C(K_r) T \mathbb{E}_{\mathbb{P}}\left[  \sup_{t \in [0,T]} \left \vert Z_t^i \right \vert^2 \right]^{\frac{1}{2}} \mathbb{E}_{\mathbb{P}}\left[  \sup_{t \in [0,T]} \left \vert \mathscr{Z}_t \right \vert^2 \right]^{\frac{1}{2}}  \label{eq:Cauchy}  \\
     &< \infty,\label{eq:SecondMomentBound}
  \end{align}
  where we use in \eqref{eq:Burkholder-Davis-Gundy} the Burkholder-Davis-Gundy inequality with a suitable constant $K$. We get \eqref{eq:Cauchy} by the Cauchy-Schwarz inequality. Moreover, \eqref{eq:SecondMomentBound} follows by \eqref{eq:Doob} and $Z \in \mathcal{H}^{2,d}(\mathbb{G}, \mathbb{P})$ .
  Note that 
    \begin{align*}
        \mathscr{M}^{2}_t&=\int_0^{t} \frac{B_u^f}{B_u^r}\mathscr{Z}_u d\mathcal{H}_u^{\mathcal{L}}= \sum_{j \in \lbrace B,C \rbrace} \int_0^t  \frac{B_u^f}{B_u^r}\mathscr{Z}_u U_u^j dM^j_u,
  \end{align*}
  and $[ M^j ]_u= H_u^j=\textbf{1}_{\lbrace \tau^j \leq u\rbrace}$. Thus, we get for $j \in \lbrace{ B, C \rbrace}$
    \begin{align*}
        \int_0^T  \left \vert  \frac{B_u^f}{B_u^r} \mathscr{Z}_u U_u^j\right \vert^2 d [ M^j ]_u&= \int_0^T  \left \vert  \frac{B_u^f}{B_u^r} \mathscr{Z}_u U_u^j\right \vert^2 d\textbf{1}_{\lbrace \tau^j \leq u\rbrace} \nonumber \\
        &= C(K_r)\left(\mathscr{Z}_{\tau^j} U_{\tau^j}^j\right)^2  \textbf{1}_{\lbrace \tau^j \leq T\rbrace} < \infty \quad \mathbb{P}\text{-a.s.}
    \end{align*}
    and 
    \begin{align}
        \mathbb{E}_{\mathbb{P}} \left[\sup_{t \in [0,T]} \left \vert  \int_0^t \frac{B_u^f}{B_u^r}\mathscr{Z}_u U_u^j dM^j_u \right \vert \right] &\leq K \mathbb{E}_{\mathbb{P}} \left[\left \vert  \int_0^T \left( \frac{B_u^f}{B_u^r}\mathscr{Z}_u U_u^j \right)^2 d[ M^j]_u \right \vert^{\frac{1}{2}} \right] \nonumber \\
        &= K \mathbb{E}_{\mathbb{P}} \left[ \frac{B_{\tau^j}^f}{B_{\tau^j}^r}\mathscr{Z}_{\tau^j} U_{\tau^j}^j \textbf{1}_{\lbrace \tau^j \leq T\rbrace} \right] \nonumber \\
        & \leq KC(K_r) \mathbb{E}_{\mathbb{P}} \left[ \mathscr{Z}_{\tau^j}^2  \textbf{1}_{\lbrace \tau^j \leq T\rbrace} \right]^{\frac{1}{2}}\mathbb{E}_{\mathbb{P}} \left[ (U_{\tau^j}^j)^2  \textbf{1}_{\lbrace \tau^j \leq T\rbrace} \right]^{\frac{1}{2}}< \infty, \nonumber
    \end{align}
    by \eqref{eq:Doob} and $U \in \mathcal{H}_{\lambda}^{2,2}(\mathbb{G},\mathbb{P}).$

 It remains to prove that $\mathscr{M}^{3,i}$ is a $(\mathbb{G}, \mathbb{P})$-martingale. For $i=1,...,d$ we have
  \begin{align}
  &\int_0^{\tau \wedge T}\left \vert \widehat{V}_u^p(\varphi, A^{R,C})\mathscr{Z}_u\frac{\mu^i(u,S_u)-r_u^i}{\sigma^{i}(u,S_u)}\right \vert ^2 du \nonumber \\
  &\leq  K_i^2 \sup_{t \in [0,\tau \wedge T]} \vert \mathscr{Z}_t \vert^2    \int_0^{\tau \wedge T}\left \vert \widehat{V}_u^p(\varphi, A^{R,C})\right\vert^2 du \nonumber \\
  &\leq K_i^2 \sup_{t \in [0,T]} \vert \mathscr{Z}_t \vert^2  \left(  2\int_0^{\tau \wedge T}\left \vert \frac{\psi_u^f B^f_u}{B_u^r} \right\vert^2 du + 2\int_0^{\tau \wedge T}\left \vert \frac{C_u}{B_u^r}  \right\vert^2 du \right)\label{eq:DefinitionValueProcess1} \\
   &\leq K_i^2 \sup_{t \in [0,T]} \vert \mathscr{Z}_t \vert^2  \left(  2\int_0^{\tau \wedge T}\left \vert \frac{ B^f_u}{B_u^r} \right \vert^2 \left \vert Y_u + \int_{0}^u \frac{1}{B_s^f} d A^{R,C} _s\right\vert^2 du + 2 T \sup_{t \in[0,T]} \left \vert C_t \right \vert^2 \sup_{t \in[0,T]} \left \vert \frac{1}{B_t^r} \right \vert^2   \right)\label{eq:RepresentationBSDE}  \\
    &\leq K_i^2 C(K_r) \sup_{t \in [0,T]}  \vert \mathscr{Z}_t \vert^2  \left(\sup_{t \in [0,T]}  \left(  4 T \sup_{t \in  [0,\tau \wedge T]}\left \vert Y_t \right \vert^2   + 4 \int_{0}^{\tau \wedge T} \left \vert \int_0^u \frac{1}{B_s^f} d A^{R,C}_s \right\vert^2 du\right) + 2 T \sup_{t \in[0,T]} \left \vert C_t \right \vert^2   \right)\label{eq:BoundedTreasuryAccountUse} \\
    & < \infty, \label{eq:MartingaleEstimates}
  \end{align}
  where we use in \eqref{eq:DefinitionValueProcess1} that 
  \begin{align}
    \widehat{V}_u^p(\varphi, A^{R,C})= \frac{ {V}_u^p(\varphi, A^{R,C})}{B_u^r} + \frac{C_u}{B_u^r}=\frac{\psi_u^f B_u^f}{B_u^r}+ \frac{C_u}{B_u^r}, \nonumber
  \end{align}
   and \eqref{eq:RepresentationBSDE} follows by Lemma \ref{lemma:DescriptionStrategyBSDE}. Finally, we conclude that \eqref{eq:MartingaleEstimates} holds by \eqref{eq:IntegrabilityCollateral}, \eqref{eq:Doob}, \eqref{eq:AdditionalAssumption1} and $Y \in \mathbb{S}^2(\mathbb{G}, \mathbb{P})$.
  With similar arguments we get for $i=1,...,d$
  \begin{align}
   & \mathbb{E}_{\mathbb{P}}\left[ \sup_{t \in [0, T]}  \left \vert \mathscr{M}_t^{3,i}  \right \vert \right] \nonumber \\
   &= \mathbb{E}_{\mathbb{P}}\left[ \sup_{t \in [0, T]} \left \vert \int_0^{t} \widehat{V}_u^p(\varphi, A^{R,C})\mathscr{Z}_u\frac{\mu^i(u,S_u)-r_u^i}{\sigma^{i}(u,S_u)} dW_u^{f,i}  \right \vert \right] \nonumber \\
   & \leq K K_i^2\mathbb{E}_{\mathbb{P}}\left[  \left \vert \int_0^{{\tau \wedge T}} \left \vert \widehat{V}_u^p(\varphi, A^{R,C})\mathscr{Z}_u \right \vert^2 d u \right \vert^{\frac{1}{2}} \right] \nonumber  \\
    & \leq K K_i^2\mathbb{E}_{\mathbb{P}}\left[ \sup_{t \in [0,T]} \vert \mathscr{Z}_t \vert  \left \vert \int_0^{{\tau \wedge T}} \left \vert \widehat{V}_u^p(\varphi, A^{R,C}) \right \vert^2 d u \right \vert^{\frac{1}{2}} \right]\nonumber \\
   & \leq K K_i^2\mathbb{E}_{\mathbb{P}}\left[ \sup_{t \in [0,T]} \vert \mathscr{Z}_t \vert^2 \right]  ^{\frac{1}{2}} \mathbb{E}_{\mathbb{P}}\left[  \int_0^{{\tau \wedge T}} \left \vert \widehat{V}_u^p(\varphi, A^{R,C}) \right \vert^2 d u  \right]^{\frac{1}{2}} \nonumber \\
   & \leq 4 K K_i^2 T C(K_r) \mathbb{E}_{\mathbb{P}}\left[ \sup_{t \in [0,T]} \vert \mathscr{Z}_t \vert^2 \right]^{\frac{1}{2}} \mathbb{E}_{\mathbb{P}}\left[ \sup_{t \in  [0,{\tau \wedge T}]}\left \vert Y_t  \right \vert^2 + \int_{0}^{{\tau \wedge T}} \left \vert \int_0^u \frac{1}{B_s^f} d A^{R,C}_s \right\vert^2 du +  \sup_{t \in[0,T]} \left \vert C_t \right \vert^2  \right]^{\frac{1}{2}}\nonumber \\
    & \leq 4 K K_i^2 T C(K_r) \mathbb{E}_{\mathbb{P}}\left[ \sup_{t \in [0,T]} \vert \mathscr{Z}_t \vert^2 \right]^{\frac{1}{2}} \Bigg (\mathbb{E}_{\mathbb{P}}\left[   \sup_{t \in  [0,{\tau \wedge T}]}\left \vert Y_t \right \vert^2 \right] + \mathbb{E}_{\mathbb{P}}\left[\int_{0}^{{\tau \wedge T}} \left \vert \int_0^u \frac{1}{B_s^f} d A^{R,C}_s \right\vert^2 du \right]\nonumber \\
    & \quad +\mathbb{E}_{\mathbb{P}}\left[ \sup_{t \in[0,T]} \left \vert C_t \right \vert^2  \right] \Bigg)^{\frac{1}{2}}\nonumber \\
    & \quad < \infty. \nonumber 
  \end{align}
  Thus, \eqref{eq:ConditionalExpectations} simplifies to 
   \begin{align}
  \widehat{V}_t^p(\varphi, A^{R,C})&= \widehat{V}_t^p(\varphi, {A}^{C})\nonumber \\
  &= \frac{1}{\mathscr{Z}_t} \mathbb{E}_{\mathbb{P}}\left[ \left(\frac{C_T}{B_T^r} \textbf{1}_{\lbrace \tau >T \rbrace}- \frac{R_{\tau}}{B_{\tau}^r}\textbf{1}_{\lbrace \tau \leq T \rbrace}\right) \mathscr{Z}_{\tau \wedge T} \Big \vert \mathcal{G}_t\right] -  \frac{1}{\mathscr{Z}_t} {\mathbb{E}_{\mathbb{P}}\left[ \int_t^{\tau \wedge T} \frac{1}{B_u^r} d \langle C, \mathscr{Z}\rangle_u\Big \vert \mathcal{G}_t\right]}\nonumber \\
  & \quad -\frac{1}{\mathscr{Z}_t} {\mathbb{E}_{\mathbb{P}}\left[ \int_t^{\tau \wedge T} \mathscr{Z}_u \left( \frac{1}{B_u^r} dA_u^{C}+ \left( r_u^f-r_u\right)\frac{{V}_u^p(\varphi, A^{C})}{B_u^r}du \right)\Big \vert \mathcal{G}_t\right]}, \label{eq:SplittingA}
   \end{align}
  where we use that on $\lbrace{t < \tau \rbrace}$ it holds $ \widehat{V}_t^p(\varphi, {A}^{C}) =\widehat{V}_t^p(\varphi, A^{R,C})$ as well as $A_t^{R,C}=A_t^C$.\\

  Next, we rewrite the term $\int_{t}^{\tau \wedge T} \mathscr{Z}_u\frac{1}{B_u^r}dA_u^{C}$. To simplify the notation we set 
  \begin{align*} 
\widetilde{A}_t:=  \textbf{1}_{\lbrace t < \tau  \rbrace} A_t +  \textbf{1}_{\lbrace t \geq \tau  \rbrace} A_{\tau-}, \quad \widetilde{C}_t:=\textbf{1}_{\lbrace t < \tau  \rbrace} C_t + \textbf{1}_{\lbrace t \geq \tau  \rbrace}C_{\tau-}, \quad t \in [0, \tau \wedge T],
    \end{align*}
    and thus by \eqref{eq:CashflowDefaultCollateralized} and \eqref{eq:CashflowDefaultCollateralized1} we have 
    \begin{align}\label{eq:CashflowAdjusted}
        {A}_t^C:=\widetilde{A}_t+\widetilde{C}_t-\int_0^t \bar{r}_u^c C_u du.
    \end{align} It holds
    \begin{align}
    \int_{t}^{\tau \wedge T} \frac{\mathscr{Z}_u}{B_u^r}d \widetilde{C}_u&=\frac{\mathscr{Z}_{\tau \wedge T}C_{\tau \wedge T}}{B_{\tau \wedge T}^r}-\frac{\mathscr{Z}_tC_t}{B_t^r}-\int_t^{\tau \wedge T} \mathscr{Z}_uC_u d(B_u^r)^{-1}- \int_t^{\tau \wedge T} \frac{C_u}{B_u^r} d\mathscr{Z}_u- \int_t^{\tau \wedge T} \frac{1}{B_u^r} d \langle C, \mathscr{Z} \rangle_u
    \label{eq:Ito1} \\
    &=\frac{\mathscr{Z}_{\tau \wedge T}C_{\tau \wedge T}}{B_{\tau \wedge T}^r}-\frac{\mathscr{Z}_tC_t}{B_t^r}+ \int_t^{\tau \wedge T} \frac{ \mathscr{Z}_u C_u r_u}{B_u^r} du- \int_t^{\tau \wedge T} \frac{C_u}{B_u^r} d\mathscr{Z}_u - \int_t^{\tau \wedge T} \frac{1}{B_u^r} d \langle C, \mathscr{Z} \rangle_u \label{eq:Ito2}, 
    \end{align} 
where we use in \eqref{eq:Ito1} that by It\^{o}'s formula
\begin{align*}
 d\frac{\mathscr{Z}_t C_t}{B_t^r}= \mathscr{Z}_t C_t d(B_t^r)^{-1}+ (B_t^r)^{-1} d(\mathscr{Z}_tC_t) = \mathscr{Z}_t C_t d(B_t^r)^{-1}+ (B_t^r)^{-1} (\mathscr{Z}_tdC_t+ C_t d\mathscr{Z}_t + {d \langle C,\mathscr{Z}  \rangle_t}). 
\end{align*}
Combining \eqref{eq:SplittingA}, \eqref{eq:CashflowAdjusted} and \eqref{eq:Ito2} yields
\begin{align}
 \widehat{V}_t^p(\varphi, {A}^{R,C})&= \frac{1}{\mathscr{Z}_t} \mathbb{E}_{\mathbb{P}}\left[ \left(\frac{C_T}{B_T^r} \textbf{1}_{\lbrace \tau >T \rbrace}- \frac{R_{\tau}}{B_{\tau}^r}\textbf{1}_{\lbrace \tau \leq T \rbrace}\right) \mathscr{Z}_{\tau \wedge T} \Big \vert \mathcal{G}_t\right] -  \frac{1}{\mathscr{Z}_t} {\mathbb{E}_{\mathbb{P}}\left[ \int_t^{\tau \wedge T} \frac{1}{B_u^r} d \langle C, \mathscr{Z}\rangle_u\Big \vert \mathcal{G}_t\right]}\nonumber \\
  & \quad -\frac{1}{\mathscr{Z}_t} {\mathbb{E}_{\mathbb{P}}\left[ \int_t^{\tau \wedge T} \mathscr{Z}_u \left( \frac{1}{B_u^r} d(\widetilde{A}_u+\widetilde{C}_u- \bar{r}_u^c C_u du)+ \left( r_u^f-r_u\right)\frac{{V}_u^p(\varphi, A^{C})}{B_u^r}du \right)\Big \vert \mathcal{G}_t\right]}, \nonumber \\
 &=\frac{1}{\mathscr{Z}_t} \mathbb{E}_{\mathbb{P}}\left[ \left(\frac{C_T}{B_T^r} \textbf{1}_{\lbrace \tau >T \rbrace}- \frac{R_{\tau}}{B_{\tau}^r}\textbf{1}_{\lbrace \tau \leq T \rbrace}\right) \mathscr{Z}_{\tau \wedge T} - \int_{t}^{\tau \wedge T} \frac{\mathscr{Z}_u}{B_u^r}d\widetilde{A}_u\Big \vert \mathcal{G}_t\right] \nonumber \\
 & \quad -\frac{1}{\mathscr{Z}_t} \mathbb{E}_{\mathbb{P}}\left[ \frac{\mathscr{Z}_{\tau \wedge T}C_{\tau \wedge T}}{B_{\tau \wedge T}^r}-\frac{\mathscr{Z}_tC_t}{B_t^r}+ \int_t^{\tau \wedge T} \frac{ \mathscr{Z}_u C_u r_u}{B_u^r} du- \int_t^{\tau \wedge T} \frac{C_u}{B_u^r} d\mathscr{Z}_u\Big \vert \mathcal{G}_t\right] \nonumber \\
 &\quad  + \frac{1}{\mathscr{Z}_t} \mathbb{E}_{\mathbb{P}}\left[  \int_t^{\tau \wedge T} \frac{ \mathscr{Z}_u \bar{r}_u^c}{B_u^r} du- \int_t^{\tau \wedge T} \mathscr{Z}_u\left( r_u^f-r_u\right)\frac{{V}_u^p(\varphi, A^{C})}{B_u^r}du\Big \vert \mathcal{G}_t\right].\label{eq:IntermediateStepProoof}
\end{align}
By Definition \ref{def:ValueProcessWealth}, \eqref{eq:NoCollateralPosted} and \eqref{eq:IntermediateStepProoof} it follows
\begin{align}
 \widehat{V}_t(\varphi, {A}^{R,C})& 
 = \widehat{V}_t^p(\varphi, A^{R,C}) -\frac{C_t}{B_t^r}\nonumber \\
 &=\frac{1}{ \mathscr{Z}_t}\mathbb{E}_{\mathbb{P}}\left[ -\mathscr{Z}_{\tau \wedge T} \frac{R_{\tau}}{B_{\tau}^r}\textbf{1}_{\lbrace \tau \leq T \rbrace}  - \int_{t}^{\tau \wedge T} \frac{\mathscr{Z}_u}{B_u^r}d\widetilde{A}_u\Big \vert \mathcal{G}_t\right] \nonumber \\
 & \quad - \frac{1}{ \mathscr{Z}_t}\mathbb{E}_{\mathbb{P}}\left[ \mathscr{Z}_{\tau \wedge T} \frac{C_{\tau-}}{B_{\tau}^f} \textbf{1}_{\lbrace  \tau \leq T \rbrace}+ \int_t^{\tau \wedge T} \frac{ \mathscr{Z}_u C_u r_u}{B_u} du- \int_t^{\tau \wedge T} \frac{C_u}{B_u} d\mathscr{Z}_u\Big \vert \mathcal{G}_t\right] \nonumber \\
  & \quad + \frac{1}{\mathscr{Z}_t} \mathbb{E}_{\mathbb{P}}\left[  \int_t^{\tau \wedge T} \frac{ \mathscr{Z}_u \bar{r}_u^c}{B_u^r} du- \int_t^{\tau \wedge T} \mathscr{Z}_u\left( r_u^f-r_u\right)\frac{{V}_u^p(\varphi, A^{C})}{B_u^r}du\Big \vert \mathcal{G}_t\right].  \label{eq:IntermediateStepProof2}
\end{align}
Furthermore, it holds 
\begin{align*}
\mathbb{E}_{\mathbb{P}}\left[  \int_t^{\tau \wedge T} \frac{C_u}{B_u^r} d\mathscr{Z}_u\Big \vert \mathcal{G}_t\right]=\mathbb{E}_{\mathbb{P}}\left[  \int_t^{\tau \wedge T} \frac{C_u}{B_u^r} \mathscr{Z}_u \left( \sum_{i=1}^d -\frac{\mu^i(u,S_u)-r_u^i}{\sigma^{i}(u,S_u)} d W_u^{f,i}\right)  \Big \vert \mathcal{G}_t\right],
\end{align*}
and we now show that 
$$
\mathscr{M}_t^{4,i}:= - \int_0^t \frac{C_u}{B_u^r} \mathscr{Z}_u \frac{\mu^i(u,S_u)-r_u^i}{\sigma^{i}(u,S_u)} dW_u^{\tau,f,i}, \quad i=1...,d, \, t \in [0,T]
$$
is a $(\mathbb{G},\mathbb{P})$-martingale. In particular, we have
\begin{align*}
    \int_0^{\tau \wedge T}  \left \vert \frac{C_u}{B_u^r} \mathscr{Z}_u \frac{\mu^i(u,S_u)-r_u^i}{\sigma^{i}(u,S_u)}\right \vert^2 du &\leq K_i^2 \int_0^T \left \vert \frac{C_u}{B_u^r}  \mathscr{Z}_u \right \vert^2 du \\
    &\leq K_i^2 T \sup_{t \in [0,T]} \vert C_t \vert^2  \sup_{t \in [0,T]} \left \vert  \mathscr{Z}_t \right \vert^2 \sup_{t \in [0,T]} \left \vert  \frac{1}{B_t^r} \right \vert^2 < \infty \quad \mathbb{P} \text{-a.s.}
\end{align*}
and
\begin{align}
\mathbb{E}_{\mathbb{P}}\left[ \sup_{t \in [0,T]} \left \vert \mathscr{M}_t^{4,i} \right \vert \right]&=\mathbb{E}_{\mathbb{P}}\left[ \sup_{t \in [0,T]} \left \vert \int_0^t \frac{C_u}{B_u^r} \mathscr{Z}_u \frac{\mu^i(u,S_u)-r_u^i}{\sigma^{i}(u,S_u)} dW_u^{\tau,f,i} \right \vert \right] \nonumber \\
&\leq K K_i^2 \mathbb{E}_{\mathbb{P}}\left[ \left \vert \int_0^{\tau \wedge T} \frac{1}{\vert B_u^r\vert^2} \vert C_u\vert^2 \vert \mathscr{Z}_u \vert^2  du\right \vert^{\frac{1}{2}} \right] \nonumber \\
&\leq K  K_i^2 T  C(K_r) \mathbb{E}_{\mathbb{P}}\left[ \sup_{t \in [0,T]}\vert C_t \vert   \sup_{t \in [0,T]} \vert \mathscr{Z}_t \vert   \right] \nonumber \\
&\leq K K_i^2  C(K_r) \mathbb{E}_{\mathbb{P}}\left[ \sup_{t \in [0,T]}\vert C_t \vert^2  \right]^{\frac{1}{2}}  \mathbb{E}_{\mathbb{P}} \left[  \sup_{t \in [0,T]} \vert \mathscr{Z}_t \vert^2 \right]^{\frac{1}{2}} \nonumber \\
&< \infty \nonumber 
\end{align}
by using \eqref{eq:IntegrabilityCollateral} and \eqref{eq:Doob} and similar arguments as before. This implies that $\mathbb{E}_{\mathbb{P}}\left[  \int_t^{\tau \wedge T} \frac{C_u}{B_u^r} d\mathscr{Z}_u\Big \vert \mathcal{G}_t\right]=0.$
Thus, by \eqref{eq:IntermediateStepProof2} we get
\begin{align}
\widehat{V}_t(\varphi, {A}^{R,C}) &=\frac{1}{ \mathscr{Z}_t}\mathbb{E}_{\mathbb{P}}\left[ - \frac{\mathscr{Z}_{\tau }}{B_{\tau}^r}( C_{\tau-}+R_{\tau})\textbf{1}_{\lbrace \tau \leq T \rbrace}  - \int_{t}^{\tau \wedge T}\frac{ \mathscr{Z}_u}{B_u^r}d\widetilde{A}_u + \int_{t}^{\tau \wedge T} \mathscr{Z}_u\frac{C_u}{B_u^r}\left(\bar{r}_u^c-{r}_u \right) du\Big \vert \mathcal{G}_t\right] \nonumber \\ 
&\quad -\frac{1}{\mathscr{Z}_t}\mathbb{E}_{\mathbb{P}}\left[\int_t^{\tau \wedge T} \mathscr{Z}_u\left( r_u^f-r_u\right)\frac{{V}_u^p(\varphi, A^{C})}{B_u^r}du\Big \vert \mathcal{G}_t\right] \nonumber \\
 &=\frac{1}{ \mathscr{Z}_t}\mathbb{E}_{\mathbb{P}}\left[ - \frac{\mathscr{Z}_{\tau }}{B_{\tau}^r}( C_{\tau-}+R_{\tau})\textbf{1}_{\lbrace \tau \leq T \rbrace}  - \int_{t}^{\tau \wedge T}\frac{ \mathscr{Z}_u}{B_u^r}d\widetilde{A}_u + \int_{t}^{\tau \wedge T} \mathscr{Z}_u\frac{C_u}{B_u^r}\left(\bar{r}_u^c-{r}_u \right) du\Big \vert \mathcal{G}_t\right] \nonumber \\ 
&\quad -\frac{1}{ \mathscr{Z}_t}\mathbb{E}_{\mathbb{P}}\left[\int_t^{\tau \wedge T} \mathscr{Z}_u\left( r_u^f-r_u\right)\frac{{V}_u(\varphi, A^{C}) + C_u}{B_u^r}du\Big \vert \mathcal{G}_t\right] \nonumber \\
&=\frac{1}{ \mathscr{Z}_t}\mathbb{E}_{\mathbb{P}}\left[ - \frac{\mathscr{Z}_{\tau }}{B_{\tau}^r}( C_{\tau-}+R_{\tau})\textbf{1}_{\lbrace \tau \leq T \rbrace}  - \int_{t}^{\tau \wedge T}\frac{ \mathscr{Z}_u}{B_u^r}d\widetilde{A}_u + \int_{t}^{\tau \wedge T} \mathscr{Z}_u\frac{C_u}{B_u^r}\left(\bar{r}_u^c-{r}_u^f \right) du\Big \vert \mathcal{G}_t\right] \nonumber \\ 
&\quad -\frac{1}{ \mathscr{Z}_t}\mathbb{E}_{\mathbb{P}}\left[\int_t^{\tau \wedge T} \mathscr{Z}_u\left( r_u^f-r_u\right)\frac{{V}_u(\varphi, A^{C})}{B_u^r}du\Big \vert \mathcal{G}_t\right], \label{eq:RepresentationXVA1}
\end{align}
where we use that ${V}_u^p(\varphi, A^{C}) ={V}_u(\varphi, A^{C}) + C_u$.
Next, we study the term $-\frac{1}{ \mathscr{Z}_t}\mathbb{E}_{\mathbb{P}}\left[ \int_{t}^{\tau \wedge T} \frac{ \mathscr{Z}_u}{B_u^r}d\widetilde{A}_u\Big \vert \mathcal{G}_t\right]$. Note that on $\lbrace t < \tau  \rbrace$ it holds
\begin{align*}
 \int_{]t,\tau \wedge T]}  \frac{\mathscr{Z}_u}{B_u^r}d\widetilde{A}_u=\int_{]t, T]}  \frac{\mathscr{Z}_u}{B_u^r}d{A}_u - \textbf{1}_{\lbrace \tau \leq T \rbrace} \int_{[\tau \wedge T,T]}  \frac{\mathscr{Z}_u}{B_u^r}d{A}_u
\end{align*}
and thus
\begin{align}
\frac{B_t^r}{ \mathscr{Z}_t}\mathbb{E}_{\mathbb{P}}\left[-\int_{t}^{\tau \wedge T} \frac{\mathscr{Z}_u}{B_u^r}d\widetilde{A}_u\Big \vert \mathcal{G}_t\right]& = \frac{B_t^r}{ \mathscr{Z}_t} \mathbb{E}_{\mathbb{P}}\left[-\int_{]t, T]} \frac{ \mathscr{Z}_u}{B_u^r}d{A}_u + \textbf{1}_{\lbrace \tau \leq T \rbrace} \int_{[\tau \wedge T,T]}  \mathscr{Z}_u\frac{1}{B_u^r}d{A}_u  \Big \vert \mathcal{G}_t\right] \nonumber \\
&=\mathcal{V}_t + \frac{B_t^r}{ \mathscr{Z}_t}\mathbb{E}_{\mathbb{P}}\left[ \mathbb{E}_{\mathbb{P}} \left[\textbf{1}_{\lbrace \tau \leq T \rbrace}\frac{\mathscr{Z}_{\tau}}{\mathscr{Z}_{\tau}}\frac{{B}_{\tau}^r}{{B}_{\tau}^r} \int_{[\tau \wedge T,T]}  \frac{ \mathscr{Z}_u}{B_u^r}d{A}_u \Big \vert \mathcal{G}_{\tau}\right]\Big \vert \mathcal{G}_t\right] \nonumber \\
&=\mathcal{V}_t + {\frac{B_t^r}{ \mathscr{Z}_t}\mathbb{E}_{\mathbb{P}}\left[ \textbf{1}_{\lbrace \tau \leq T \rbrace}\frac{\mathscr{Z}_{\tau}}{B_{\tau}^r}Q_{\tau} \Big \vert \mathcal{G}_t\right] } \label{eq:DefCleanValue}
\end{align}
where we use Remark \ref{remark:ConditioningGF} with $\mathcal{V}_t$ and $Q_{\tau}$ given in Definition \ref{def:CleanValue}. 
 Combining \eqref{eq:RepresentationXVA1} and \eqref{eq:DefCleanValue} yields
\begin{align}
{V}_t(\varphi, {A}^{R,C}) &= \widehat{V}_t(\varphi, {A}^{R,C}) B_t^r\nonumber \\
&=\mathcal{V}_t+ {\frac{B_t^r}{ \mathscr{Z}_t}\mathbb{E}_{\mathbb{P}}\left[ \textbf{1}_{\lbrace \tau \leq T \rbrace} \textbf{1}_{\lbrace{\tau^C< \tau^B\rbrace}}\frac{\mathscr{Z}_{\tau}}{B_{\tau}^r}(1-R^C) (Q_{\tau}-C_{\tau-})^+\Big \vert \mathcal{G}_t\right] }  \nonumber \\
& \quad - \frac{B_t^r}{ \mathscr{Z}_t}\mathbb{E}_{\mathbb{P}}\left[ \textbf{1}_{\lbrace \tau \leq T \rbrace}\textbf{1}_{\lbrace{\tau^B< \tau^C\rbrace}}\frac{\mathscr{Z}_{\tau}}{B_{\tau}^r} (1-R^B)(Q_{\tau}-C_{\tau-})^-\Big \vert \mathcal{G}_t\right] \nonumber \\
& \quad + \frac{B_t^r}{ \mathscr{Z}_t}\mathbb{E}_{\mathbb{P}}\left[ \int_{t}^{\tau \wedge T} \mathscr{Z}_u\frac{C_u}{B_u^r}\left(\bar{r}_u^c-r_u^f \right) du\Big \vert \mathcal{G}_t\right] \nonumber \\
&\quad - \frac{B_t^r}{ \mathscr{Z}_t}\mathbb{E}_{\mathbb{P}}\left[ \int_{t}^{\tau \wedge T} \mathscr{Z}_u\frac{V_u(\varphi,A^{R,C})}{B_u^r}\left({r}_u^f-r_u \right) du\Big \vert \mathcal{G}_t\right]\nonumber \\
&= \mathcal{V}_t+ \textnormal{CVA}_t-\textnormal{DVA}_t+\textnormal{ColVA}_t -\textnormal{FVA}_t, \nonumber 
\end{align}
and thus
\begin{align}
{V}_t(\varphi, {A}^{R,C})&= \mathcal{V}_t+ \textnormal{CVA}_t-\textnormal{DVA}_t+\textnormal{ColVA}_t \nonumber \\
&  \quad - \frac{B_t^r}{ \mathscr{Z}_t}\mathbb{E}_{\mathbb{P}}\left[ \int_{t}^{\tau \wedge T} \mathscr{Z}_u\frac{\mathcal{V}_u+ \textnormal{CVA}_u-\textnormal{DVA}_u+\textnormal{ColVA}_u-\textnormal{FVA}_u}{B_u^r}\left({r}_u^f-r_u \right) du\Big \vert \mathcal{G}_t\right].  \nonumber 
\end{align}
\end{proof}

The result above provides the decomposition of the price in terms of valuation adjustments. To this extent we prove a consolidated practice in banks' risk management in a mathematical rigorous way also for the case of incomplete markets.

\section{Two-Step valuation of xVAs}
In this section we show that a slight modification of the assumptions underlying our approach based on local risk-minimization provides a foundation to a version of the two-step valuation approach proposed in \cite{pelsser_stadje_2014}, \cite{delong_dhaene_barigou_2019} and \cite{barigou_linders_yang_2022}. This allows us to obtain a formula for CVA (and similarly for other valuation adjustments) which is particularly convenient in applications.

We consider $W^f=(W_t^f)_{t \in [0,T]}$ and $W^\perp=(W_t^\perp)_{t \in [0,T]}$ to be $d$- resp. $d^\prime$-dimensional Brownian motions on $(\Omega, \mathcal{G}, \mathbb{P})$ and $\mathbb{F}^{W^f}=(\mathcal{F}^{W^f}_t)_{t \in [0,T]}$ and $\mathbb{F}^{W^\perp}=(\mathcal{F}^{W^\perp}_t)_{t \in [0,T]}$ their associated natural filtrations. We set $\mathbb{F}:=\mathbb{F}^{W^f}\otimes\mathbb{F}^{W^\perp}$, that is $\mathcal{F}_t:=\mathcal{F}^{W^f}_t\otimes\mathcal{F}^{W^\perp}_t$ for every $t\in[0,T]$. Let the stopping times $\tau^B, \tau^C, \tau$ and the associated filtrations be as introduced in Section  \ref{Sect:preliminaries}. On $(\Omega, \mathcal{G}, \mathbb{P})$ we consider the filtration $\mathbb{G}=(\mathcal{G}_t)_{t\in[0,T]}$ defined by
\begin{align*}
\mathcal{G}_t &:= \mathcal{F}_t \vee \mathcal{H}^B_t \vee \mathcal{H}^C_t\\
&=\mathcal{F}^{W^f}_t\otimes\mathcal{F}^{W^\perp}_t \vee \mathcal{H}^B_t \vee \mathcal{H}^C_t, \quad t \in [0,T],
\end{align*}
and set $\mathcal{G}:=\mathcal{G}_T.$ All filtrations are required to satisfy the usual hypotheses of completeness and right-continuity.

In addition to the assumptions in Section \ref{Sect:preliminaries}, we assume that all coefficients $\mu^i,\sigma^i$, repo rates $r^i$ for $i=1,...,d,$ collateral and funding rates are $\mathbb{F}^{W^f}$-adapted, so that completeness is preserved for the clean market. With respect to the previous sections, we assume that the close-out valuation process $Q$ is $\mathbb{F}^{W^f}$-adapted by setting
\begin{align*}
     Q_t:=  \mathbb{E}_{\mathbb{Q}}\left[\int_{[t,T]}  \frac{B_t^r}{B_u^r}\ d A_u \Big \vert \mathcal{F}^{W^f}_t \right]=\mathbb{E}_{\mathbb{P}}\left[\int_{[t,T]}  {\mathscr{Z}}_u \frac{B_t^r}{B_u^r}\frac{1}{ {\mathscr{Z}}_t} d A_u \Big \vert \mathcal{F}^{W^f}_t \right] \quad t \in [0,T].
\end{align*}
Similarly, we also assume that the collateral process $C$ is $\mathbb{F}^{W^f}$-adapted.

We allow the default intensities to be $\mathbb{F}$-adapted, which introduces an additional dependence on the Brownian motion $W^\perp$. Recalling our previous assumption that intensity rates are bounded, a concrete example that satisfies all the assumptions of the present setting is given by the Jacobi diffusion
\begin{align*}
    d\lambda^j_t=\kappa^j(\theta^j-\lambda^j_t)dt+\sqrt{(\lambda_t-\lambda^j_{max})(\lambda_t-\lambda^j_{min})}\left(\rho^jdW^{f,1}_t+\sqrt{1-(\rho^{j})^2}dW^{\perp,j}\right), \quad t \in [0,T],
\end{align*}
where $\lambda^j_{max},\lambda^j_{min}$ are the upper and lower bound for the intensity respectively, $\kappa^j,\theta^j\in \mathbb{R}_+$, $\rho^j\in(-1,1)$  for $j\in\{B,C\}$. In this setting, let us consider the CVA which can be rewritten as
\begin{align*}
    \textnormal{CVA}_t&=\frac{B_t^r}{ \mathscr{Z}_t}\mathbb{E}_{\mathbb{P}}\left[ \textbf{1}_{\lbrace \tau \leq T \rbrace} \textbf{1}_{\lbrace{\tau^C< \tau^B\rbrace}}\frac{\mathscr{Z}_{\tau}}{B_{\tau}^r}(1-R^C) (Q_{\tau}-C_{\tau-})^+\Big \vert \mathcal{G}_t\right]\\
    &=\textbf{1}_{\lbrace \tau > t \rbrace}\frac{B_t^r}{ \mathscr{Z}_t}\mathbb{E}_{\mathbb{P}}\left[  (1-R^C)\int_t^T\frac{\mathscr{Z}_{u}}{B_{u}^r}e^{-\int_t^u\lambda^C_s+\lambda^B_s ds}\lambda^C_u (Q_{u}-C_{u})^+du\Big \vert \mathcal{F}_t\right], \quad t \in [0,T],
\end{align*}
by using Proposition 5.1.1 in \cite{BieRut2004}. Next, following \cite{pelsser_stadje_2014}, we condition on the full information on the market risk factors up to maturity and consider the two-step valuation
\begin{align*}
    \textnormal{CVA}_t&=\textbf{1}_{\lbrace \tau > t \rbrace}\frac{B_t^r}{ \mathscr{Z}_t}\mathbb{E}_{\mathbb{P}}\left[  (1-R^C)\int_t^T\frac{\mathscr{Z}_{u}}{B_{u}^r}\mathbb{E}_{\mathbb{P}}\left[e^{-\int_t^u\lambda^C_s+\lambda^B_s ds}\lambda^C_u \Big \vert \mathcal{F}^{W^f}_T\otimes\mathcal{F}^{W^\perp}_t\right](Q_{u}-C_{u})^+du\Big \vert \mathcal{F}_t\right],
\end{align*}
where we apply Fubini's theorem. We introduce the notation
\begin{align*}
\Lambda(W^f_u):=\mathbb{E}_{\mathbb{P}}\left[e^{-\int_t^u\lambda^C_s+\lambda^B_s ds}\lambda^C_u \Big \vert \mathcal{F}^{W^f}_T\otimes\mathcal{F}^{W^\perp}_t\right],  \quad u \in [t,T],
\end{align*}
to represent this term that can be computed based on historical information regarding the defaults and then switch to the risk-neutral measure to finally obtain
\begin{align*}
\textnormal{CVA}_t=  \textbf{1}_{\lbrace \tau > t \rbrace}B_t^r\mathbb{E}_{\mathbb{Q}}\left[  (1-R^C)\int_t^T\frac{1}{B_{u}^r}\Lambda(W^f_u)(Q_{u}-C_{u})^+du\Big \vert \mathcal{F}_t\right].  
\end{align*}

This allows the bank to calculate in a first step $\Lambda(W^f_\cdot)$ based on historical information from the internal rating system, as in the outlined setting the bank cannot fall back on the dynamics of the intensity rates under $\mathbb{Q}$. In a second step, the usual approximation of the CVA can be performed either via American Monte Carlo or more advanced deep learning methods. In particular, the local risk-minimization approach of the present paper provides a sound mathematical foundation to this appealing computational strategy. Similar calculations hold also for the remaining valuation adjustments.


\section{Capital Value Adjustment (KVA)}

In recent years, there has been a debate regarding the need to take into account the funding cost generated by capital requirements on derivative transactions. According to Basel III a bank that is holding a derivative portfolio must set aside capital that can be determined according to different approaches: this could involve a standard model defined by the regulator or an internal model. A first approach that includes the cost of capital in the valuation of derivatives can be found in \cite{gk2014}. Later studies include, among many \cite{Kjaer2017} and \cite{Kjaer2018}, who consider the balance sheet model of \cite{ads2019} or \cite{Castagna2019}. There are other approaches in the literature that see capital requirements as a consequence of market incompleteness, see \cite{Brigo2017}, \cite{alcacre2017}, \cite{Arnsdorf2020}, \cite{css2020} and \cite{crep2022}. 
The capital is provided by the shareholders, who require in return a remuneration for the regulatory capital that they are providing, which is typically termed \textit{hurdle rate}. In line with other value adjustments, the discussion has been about the possibility to charge such costs to the client by means of a suitable value adjustment. In the following, we provide a tentative definition of a capital valuation adjustment that arises from the approach presented in this work. 
We first consider the variation of the cost process of the local risk-minimizing strategy given by
\begin{align*}
    \Delta  \mathcal{C}_t^{\varphi,A^{R,C}}:= \mathcal{C}_t^{\varphi,A^{R,C}}- \mathcal{C}_0^{\varphi,A^{R,C}}.
\end{align*}
This process measures the variation of the cash capital that needs to be put aside or withdrawn. We assume that the capital requirement at a certain point in time $t$ is given by the expected shortfall with confidence level $\alpha$ given by
\begin{align*}
    \mathbb{ES}^\mathbb{P}_\alpha\left[\left.\mathcal{C}_{t+\Delta t}^{\varphi,A^{R,C}}-\mathcal{C}_t^{\varphi,A^{R,C}}\right|\mathcal{G}_t\right]
\end{align*}
 with $\Delta t$ a fixed time horizon, e.g. 1 year. When the shareholders provide such capital, they are remunerated by means of a hurdle rate $h$. When applied to the whole lifetime of the transaction, this gives rise to the following definition

\begin{align*}
    \textnormal{KVA}_t:=B^r_t\mathbb{E}_{\mathbb{Q}}\left[\left.\int_t^{\tau\wedge T}\frac{1}{B^r_u}h_u\mathbb{ES}^\mathbb{P}_\alpha\left[\left.\mathcal{C}_{u+\Delta t}^{\varphi,A^{R,C}}-\mathcal{C}_u^{\varphi,A^{R,C}}\right|\mathcal{G}_u\right]du\right|\mathcal{G}_t\right].
\end{align*}
With this definition the KVA is then charged as a mark-up to the customer on top of the price represented by ${V}_t(\varphi, {A}^{R,C})$. In case the market is complete, then we clearly have that the cost process is a.s. constant, $\mathcal{C}_t^{\varphi,A^{R,C}}-\mathcal{C}_0^{\varphi,A^{R,C}}\equiv 0$ $d\mathbb{P}\otimes dt$-a.s., so that $\textnormal{KVA}_t\equiv 0$ and the client is charged the price ${V}_t(\varphi, {A}^{R,C})$. In summary, the bank charges the amount $V^{full}_t(\varphi, {A}^{R,C})$ given by
\begin{align*}
    V^{full}_t(\varphi, {A}^{R,C}):=V_t(\varphi, {A}^{R,C})+\textnormal{KVA}_t.
\end{align*}
This market practice in the banking sector essentially mimics the traditional insurance valuation, where pricing is performed under the historical measure with the addition of a risk-charge, whose role is taken by the KVA. Clearly, our valuation is performed under a risk neutral setting since we assumed that market risk factors can be hedged. As already claimed in \cite{crep2022} the idea of adding a KVA mimics the valuation approach of Solvency II where the cost of capital is added to the best estimate of the liability.

\newpage
\appendix
\section{Föllmer-Schweizer decomposition} \label{appendix}

We briefly recall the definition of the structure condition for a process $Y$ on a filtered probability space $(\Omega, \mathcal{G}_T,\mathbb{G},\mathbb{P})$, see \cite{schweizer_local_risk-minimization}.
\begin{definition} \label{def:MeanVariance}
Let $Y=(Y_t)_{t \in [0,T]}$ be an $\mathbb{R}^d$-valued continuous $(\mathbb{G}, \mathbb{P})$-semimartingale with decomposition $Y=Y_0+A^Y+M^Y.$ We say that $Y$ satisfies the \emph{structure condition (SC)} if
\begin{enumerate}
    \item $Y \in \mathcal{S}^2_{\textnormal{loc,d}}(\mathbb{G},\mathbb{P}),$ 
    \item $A^Y$ is absolutely continuous w.r.t. to the bracket process $\langle M^Y \rangle=(\langle M^{Y^i}, M^{Y^j} \rangle)_{i,j=1,...,d}$, i.e. there exists an $\mathbb{G}$-predictable process $\lambda=(\lambda_t)_{ t \in[0, T]}$ such that  $A_t^{Y^i}=\sum_{j=0}^d \int_0^t {\lambda}_u^j d\left\langle M^{Y^i},M^{Y^j}\right\rangle_u$, for $i= 1, \ldots, d $ and $t \in[0, T],$
    \item the \emph{mean-variance tradeoff} process $K^Y=(K_t^Y)_{t\in [0,T]}$ 
    \begin{align} \label{eq:MeanVarianceTradoff}
        {K}_t^Y:=\int_0^t {\lambda}_u^{\top} {d}\left\langle M^Y\right\rangle_u {\lambda}_u= \sum_{i,j=1}^d \int_0^t \lambda_u^i \lambda_u^j d \langle M^{Y^i}, M^{Y^i}\rangle_u
    \end{align}
    satisfies $K_T^Y<\infty$ $\mathbb{P}$-a.s.
\end{enumerate}
\end{definition}
It is well-known that there exists a link between no arbitrage conditions and the structure condition for continuous semimartingales. Denote by $\mathcal{P}^{\textnormal{loc}}_{\mathbb{G}}(Y)$ the set of all equivalent local $\mathbb{G}$-martingale measures $\mathbb{Q}$ of $Y$ w.r.t. $\mathbb{P}$.
Then the following result holds, see Theorem 3.5 in \cite{schweizer_survey} and Theorem 3.4 in \cite{Monat_Stricker}.
\begin{theorem} \label{theorem:Schweizer}
    Let $Y$ be a continuous process in $ \mathcal{S}^2_{\textnormal{loc,d}}(\mathbb{G},\mathbb{P})$ and $N$ a random variable in $L^2(\mathcal{G}_T,\mathbb{P})$.
    \begin{enumerate}
     \item If $\mathcal{P}^{\textnormal{loc}}_{\mathbb{G}}(Y) \neq \emptyset,$ then $Y$ satisfies (SC).
     \item If $Y$ satisfies (SC) and the mean-variance tradeoff process $K^Y$ is uniformly bounded, then $N$ admits a unique $\mathbb{G}$-F\"ollmer-Schweizer decomposition.
    \end{enumerate}
\end{theorem}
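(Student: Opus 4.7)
The plan is to establish the two parts of the theorem separately using the standard machinery underlying the Galtchouk-Kunita-Watanabe decomposition and the minimal martingale measure.

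For part (i), I start with an arbitrary $\mathbb{Q} \in \mathcal{P}^{\textnormal{loc}}_{\mathbb{G}}(Y)$ and denote by $Z^{\mathbb{Q}}=(Z^{\mathbb{Q}}_t)_{t\in[0,T]}$ its density process with respect to $\mathbb{P}$. Since $M^Y$ is a continuous $\mathbb{R}^d$-valued $(\mathbb{G},\mathbb{P})$-local martingale, the Galtchouk-Kunita-Watanabe projection of $Z^{\mathbb{Q}}$ onto the stable subspace generated by $M^Y$ produces
\[
Z^{\mathbb{Q}}_t = Z^{\mathbb{Q}}_0 + \sum_{i=1}^d \int_0^t \beta^i_u \, dM^{Y^i}_u + L_t,
\]
where $L$ is a $(\mathbb{G},\mathbb{P})$-local martingale strongly $\mathbb{P}$-orthogonal to $M^Y$. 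By Girsanov's theorem, requiring $Y$ to be a $(\mathbb{G},\mathbb{Q})$-local martingale forces the absorption of the drift, yielding
\[
A^{Y^i}_t = -\int_0^t \frac{1}{Z^{\mathbb{Q}}_{u-}} \, d\langle M^{Y^i}, Z^{\mathbb{Q}}\rangle_u = -\sum_{j=1}^d \int_0^t \frac{\beta^j_u}{Z^{\mathbb{Q}}_{u-}} \, d\langle M^{Y^i}, M^{Y^j}\rangle_u.
\]
Setting $\lambda^j_u := -\beta^j_u / Z^{\mathbb{Q}}_{u-}$ furnishes the absolute continuity of $A^Y$ with respect to $\langle M^Y\rangle$. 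The $\mathbb{P}$-a.s. finiteness of $K_T^Y$ then follows from strict positivity of $Z^{\mathbb{Q}}$ together with the standard local-integrability of $\int \lambda^{\top} d\langle M^Y\rangle \lambda$ inherited from the semimartingale nature of $\log Z^{\mathbb{Q}}$.

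For part (ii), the natural candidate that delivers existence of the decomposition is the \emph{minimal martingale measure} $\hat{\mathbb{P}}$ defined via the Dol\'eans-Dade exponential
\[
\frac{d\hat{\mathbb{P}}}{d\mathbb{P}}\Big|_{\mathcal{G}_T} := \mathcal{E}\Bigl(-\sum_{i=1}^d \int_0^\cdot \lambda^i_u \, dM^{Y^i}_u\Bigr)_T.
\]
Uniform boundedness of $K^Y$ provides a Novikov-type bound that makes this exponential a true $(\mathbb{G},\mathbb{P})$-martingale lying in $L^2(\mathbb{P})$, and the change-of-variable formula shows that $Y$ is a $(\mathbb{G},\hat{\mathbb{P}})$-local martingale, so $\hat{\mathbb{P}} \in \mathcal{P}^{\textnormal{loc}}_{\mathbb{G}}(Y)$. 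A reverse H\"older inequality, again implied by the boundedness of $K^Y$, transfers the square-integrability of $N$ from $\mathbb{P}$ to $\hat{\mathbb{P}}$. Applying the Galtchouk-Kunita-Watanabe decomposition under $\hat{\mathbb{P}}$ now yields
\[
N = \hat{N}_0 + \sum_{i=1}^d \int_0^T \xi^i_u \, dY^i_u + L^N_T,
\]
with $L^N$ a square-integrable $(\mathbb{G},\hat{\mathbb{P}})$-martingale strongly $\hat{\mathbb{P}}$-orthogonal to $Y$. Uniqueness is immediate from the uniqueness of the Galtchouk-Kunita-Watanabe projection onto the stable subspace generated by $Y$ under $\hat{\mathbb{P}}$.

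The main obstacle, and the substantive content of the theorem, is transferring the decomposition from $\hat{\mathbb{P}}$ back to $\mathbb{P}$: one must show that $L^N$ is simultaneously a $(\mathbb{G},\mathbb{P})$-square-integrable martingale and strongly $\mathbb{P}$-orthogonal to $M^Y$, and that $\xi$ satisfies the $\mathbb{P}$-integrability condition demanded by the definition of an $\mathbb{G}$-F\"ollmer-Schweizer decomposition. Both properties hinge on the defining property of the minimal martingale measure, namely that its density with respect to $\mathbb{P}$ lies in the stable subspace generated by $M^Y$: this guarantees that $\hat{\mathbb{P}}$-martingales strongly $\hat{\mathbb{P}}$-orthogonal to $M^Y$ remain $\mathbb{P}$-martingales strongly $\mathbb{P}$-orthogonal to $M^Y$, and the uniform bound on $K^Y$ is exactly what is needed to propagate $L^2$-integrability under the measure change in both directions. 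Without this uniform bound the candidate $\hat{\mathbb{P}}$ may fail to be a probability measure, and the whole construction breaks down; this is why the hypothesis $K^Y \in L^\infty$ cannot be dispensed with in this formulation.
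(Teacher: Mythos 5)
The paper does not actually prove this theorem: it is recalled in the appendix from the literature, with part (i) attributed to Theorem 3.5 of Schweizer's survey and part (ii) to Theorem 3.4 of Monat and Stricker. Your sketch of part (i) is the standard argument (Girsanov applied to the Galtchouk--Kunita--Watanabe decomposition of the density process of an arbitrary equivalent local martingale measure) and matches the cited source. For part (ii), however, you take a genuinely different route from the reference the paper cites: Monat and Stricker obtain existence and uniqueness by a Banach fixed-point (contraction) argument on a finite partition of $[0,T]$, iterated backwards in time, and never invoke the minimal martingale measure; your route goes through the GKW decomposition under $\hat{\mathbb{P}}$ and then transfers back to $\mathbb{P}$. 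Both routes are known to work under bounded mean-variance tradeoff, but the step you yourself flag as ``the main obstacle'' --- that the GKW integrand and orthogonal term computed under $\hat{\mathbb{P}}$ retain the required $\mathbb{P}$-square-integrability --- is exactly where the real work lies, and it does not follow from the minimal-martingale-measure property alone: one needs the reverse H\"older inequality $R_2(\mathbb{P})$ for the density process (which bounded $K^Y$ does supply, since $\hat{Z}_T^2 \le e^{K_T^Y}\,\mathcal{E}\bigl(-2\textstyle\int\lambda\,dM^Y\bigr)_T$, and an analogous bound holds for $\hat{Z}^{-1}$) together with the weighted-norm-inequality machinery of Delbaen, Monat, Schachermayer, Schweizer and Stricker. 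As a sketch your argument is sound and correctly identifies where the hypothesis $K^Y \in L^\infty$ enters; a self-contained proof would either have to carry out that transfer in detail or switch to the Monat--Stricker contraction argument, which avoids the measure change entirely. What your approach buys in exchange for this extra work is an explicit identification of the FS decomposition as a GKW decomposition under $\hat{\mathbb{P}}$, which is more informative than the purely abstract fixed-point existence proof.
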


\bibliographystyle{agsm}
\bibliography{biblio.bib}

\end{document}